\colorlet{wheel}{Purple!50}
\colorlet{shuriken}{ForestGreen!50}
\colorlet{staple}{orange!50}
\colorlet{interior}{red!20}
\colorlet{exterior}{blue!20}
\colorlet{clean}{RedViolet!80}
\def\CLEAN#1{\textcolor{clean}{#1}}
\title{Tiling with Three Polygons is Undecidable
  }
\author{%
  Erik D. Demaine%
    \thanks{Computer Science and Artificial Intelligence Laboratory, Massachusetts Institute of Technology, USA.
      \protect\url{edemaine@mit.edu}}
\and
  Stefan Langerman%
    \thanks{Computer Science Department, Universit\'e libre de Bruxelles, Belgium.
      \protect\url{stefan.langerman@ulb.ac.be}}
}
\date{}
\newif\ifabstract
\newif\iffull
\makeatletter \hypersetup{pdftitle={\@title}}}
 \gdef\xxxmark{%
   \expandafter\ifx\csname @mpargs\endcsname\relax 
     \expandafter\ifx\csname @captype\endcsname\relax 
       \marginpar{xxx}
     \else
       xxx 
     \fi
   \else
     xxx 
   \fi}
 \gdef\xxx{\@ifnextchar[\xxx@lab\xxx@nolab}
 \long\gdef\xxx@lab[#1]#2{\textbf{[\xxxmark #2 ---{\sc #1}]}}
 \long\gdef\xxx@nolab#1{\textbf{[\xxxmark #1]}}
\gdef\fps@figure{!htbp}}
\let\realbfseries=\bfseries
\def\bfseries{\realbfseries\boldmath}
\newtheorem{theorem}{Theorem}[section]
\newtheorem{lemma}[theorem]{Lemma}
\newtheorem{corollary}[theorem]{Corollary}
\theoremstyle{definition}
\newtheorem{definition}{Definition}
\newtheorem{property}{Property}
\newtheorem{problem}{Problem}
\let\epsilon=\varepsilon
\def\defn#1{\textbf{\textit{\boldmath #1}}}
\begin{document}
\maketitle

\begin{abstract}
  We prove that the following problem is co-RE-complete and thus undecidable:
  given three simple polygons,
  is there a tiling of the plane where every tile is
  an isometry of one of the three polygons
  (either allowing or forbidding reflections)?
  This result improves on the best previous construction
  which requires five polygons.
\end{abstract}

\begin{quote}\raggedleft
  ``Three Rings for the Elven-kings under the sky, \dots''

  --- J. R. R. Tolkien, \emph{The Lord of the Rings}, epigraph
\end{quote}

\section{Introduction}


A \defn{tiling} of the plane \cite{Gruenbaum-Shephard}
is a covering of the plane by nonoverlapping polygons called \defn{tiles}, 
isometric copies of one or more geometric shapes called \defn{prototiles},
without gaps or overlaps. 
In this paper,
we study the most fundamental computational problem about tilings: 
\begin{problem}[Tiling]
  Given one or more prototiles, can they tile the plane?
\end{problem}


The tiling problem is \defn{undecidable} --- solved by no finite algorithm.
Golomb \cite{Golomb1970} was first to prove this result, by
building $n$ polyominoes that simulate $n$ \defn{Wang tiles}
\cite{Wang1961} --- unit squares with edge colors that must match ---
by adding color-specific bumps and dents to each edge.
Four years earlier, Berger \cite{Berger1966} proved that
tiling with Wang tiles is undecidable
(disproving Wang's original conjecture \cite{Wang1961})
by showing how they can simulate a Turing machine.
Robinson \cite{Robinson1971} later simplified Berger's proof.
The worst-case number $n$ of tiles (Wang or polyomino)
is $\Theta(|Q| \cdot |\Sigma|)$, where $|Q|$ and $|\Sigma|$ are the
number of states and symbols in the simulated Turing machine, respectively.

\paragraph{Constant Number of Prototiles.}
The first constant and previously best upper bound on the number of prototiles
required to make the tiling problem undecidable is $5$,
as proved by Ollinger fifteen years ago \cite{Ollinger09}.
Our main result, proved in Section~\ref{sec:three},
is an improvement of this upper bound to~$3$:

\begin{theorem} \label{thm:intro:main}
  Given three simple-polygon prototiles,
  determining whether they tile the plane is undecidable.
\end{theorem}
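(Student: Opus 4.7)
The plan is to reduce from the undecidability of tiling with Wang tiles, established by Berger and simplified by Robinson. Given an arbitrary finite Wang tile set $T$ over a color set $C$, I would construct three simple polygons --- call them the wheel, the shuriken, and the staple, matching the color palette declared in the preamble --- that jointly tile the plane if and only if $T$ does. The overall strategy mirrors Ollinger's five-polygon construction but consolidates the five shapes into three by overloading each shape with multiple geometric roles.

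First, I would design a rigid \emph{skeleton} tiling using only the wheel and the staple. In this skeleton, copies of the wheel sit at the vertices of a regular grid, and copies of the staple fill the long connecting strips; together they tile most of the plane and leave a periodic array of congruent sockets, one per Wang-tile position. The shuriken is then used exclusively to fill these sockets. Although the shuriken is a single fixed shape, the geometry of each socket allows it to be inserted in several rotationally (and/or translationally) inequivalent ways, each corresponding to a choice of Wang tile from $T$. Edge colors are encoded in binary by bumps and dents carved into the short segments shared between the shuriken and its neighboring wheels, arranged so that two adjacent sockets admit compatible shuriken placements exactly when the corresponding Wang tiles have matching colors. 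Only the dimensions and bump patterns depend on $T$; the combinatorial layout of the skeleton is uniform.

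The main obstacle is proving \emph{rigidity}: every tiling of the plane by the three polygons must be, up to global isometry, a realization of the skeleton layout. Without this, spurious ``cheating'' tilings could exist that have no interpretation as a Wang tiling, breaking the reduction. I would attack rigidity in two stages. The local stage uses angle sums at each vertex together with the carefully chosen concavities and convexities of the three shapes to enumerate the few admissible corner configurations. The global stage bootstraps from the local analysis: each admissible corner forces its neighbors, and propagation across the plane leaves the skeleton as the unique possibility. Once rigidity is in hand, the forward direction of the reduction is immediate (insert the shuriken dictated by each Wang tile into each socket), and the backward direction follows from rigidity combined with the color-matching geometry of the bumps and dents. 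To cover the reflection-forbidden variant with the same construction, I would choose each of the three polygons to be sufficiently asymmetric that any mirrored placement contradicts the local vertex analysis, so a single reduction handles both statements of the theorem.
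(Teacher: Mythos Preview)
Your proposal has the right high-level shape (reduce from Wang tiling, encode all Wang tiles on a single prototile via rotation, use the other two as fillers, prove rigidity by vertex-angle analysis), but the role you assign to each of the three pieces is essentially inverted relative to the paper, and that inversion introduces a real gap in the color-matching mechanism.

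In the paper, the \emph{wheel} is the variable tile: it is a regular $4n$-gon carrying all $4n$ glues of the $n$ Wang tiles on its edges, so a rotation of the wheel selects a Wang tile, and adjacent wheels share full edges directly, with tweedle bumps enforcing glue matching on that shared edge. The shuriken is a wildcard filler that hides the $n-1$ unused edges on each quarter of a wheel (its notches accept any tweedle once a staple is dropped in), and the staple is a tiny piece closing the residual gap between a tweedle and a notch. Rigidity is proved with a clean/unclean angle dichotomy: the $4n$-gon corner angle $\pi(1-\tfrac{1}{2n})$ is ``clean'' while the tweedle/notch/staple angles $\alpha,\beta$ are not, and this parity forces every wheel corner to meet a shuriken anticorner or tip.

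In your design, by contrast, the wheel and staple are declared rigid skeleton pieces and the shuriken is the variable tile sitting in a socket. But then adjacent shurikens are separated by a staple (the ``long connecting strip''), so they do not share an edge. You place the color bumps on the short shuriken--wheel segments at the socket's corners, yet Wang matching is an \emph{edge} constraint between two cells that are adjacent across a grid edge, not a corner. A fixed wheel at a grid vertex sees four shurikens but cannot transmit a variable color between any two of them, and a fixed staple along a grid edge cannot either. As described, there is no path by which the east glue of one shuriken constrains the west glue of its neighbor. To make your layout work you would need either the skeleton pieces themselves to carry rotational freedom (contradicting ``rigid''), or the shurikens to touch each other directly across each grid edge --- which is exactly the paper's arrangement with the roles of wheel and shuriken swapped, and which then requires the $4n$-gon idea you do not mention. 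The rigidity sketch (``enumerate admissible corners, then propagate'') is also too thin: the paper needs the specific arithmetic that $\alpha,\beta,\alpha+\beta,2\alpha,2\beta$ are all non-multiples of $\pi/(2n)$ while the $4n$-gon corners are, and without a concrete angle scheme there is no reason your local analysis would terminate in a unique skeleton.
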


It remains open whether tiling with one or two given prototiles is decidable.
\defn{Periodic} tilings (tilings with two translational symmetries)
can be found algorithmically by enumerating fundamental domains,
as we show in Section~\ref{sec:periodic}.
(Surprisingly, this intuitive fact does not seem to have been
explicitly proved before, except in special settings like Wang tiles
\cite{Wang1961}.)
Thus a necessary condition for undecidability is the existence of prototile(s)
with only aperiodic tilings.
Recently, Smith, Myers, Kaplan, and Goodman-Strauss \cite{monotile}
found a single prototile with this property, so
there are no obvious obstacles to undecidability.

\paragraph{Tiling by Translation.}
Our construction relies on rotation of the prototiles
(but works independent of whether we allow reflections).
If we restrict to tiling by translation only,
then Ollinger's construction can be modified to use $11$ prototiles,
by adding some rotations of the five polyominoes \cite{Ollinger09}.
This upper bound was improved to $10$ by Yang \cite{Yang23}
and to $8$ by Yang and Zhang \cite{Yang-Zhang-2024}.
All of these constructions use polyominoes.
In higher dimensions, Yang and Zhang
\cite{yang2024undecidabilitytranslationaltiling4dimensional}
improved the upper bound to five polycube prototiles in 3D,
and four polyhypercube prototiles in 4D.

The tiling-by-translation problem also has
a lower bound of $2$ for undecidability:
any single polygon that tiles the plane by translation
can do so by periodic (even isohedral) tiling \cite{G-BN1991}.
This result also holds for disconnected polyominoes \cite{B20-R2-periodicity}.
If we generalize to tiling a specified periodic subset of $d$-dimensional space,
where $d$ is part of the input, then
Greenfeld and Tao \cite{greenfeld2024undecidabilitytranslationalmonotilings}
recently proved tiling to be undecidable
with a single disconnected polyhypercube.

\paragraph{Periodic Target.}
We show that Greenfeld and Tao's generalization to tiling a specified periodic
subset \cite{greenfeld2024undecidabilitytranslationalmonotilings}
changes the best known results also for undecidability of tiling the plane.
Our $3$-polygon construction and
Ollinger's $5$-polyomino construction \cite{Ollinger09},
and Yang and Zhang's $8$-polyomino translation-only polyomino construction
\cite{Yang-Zhang-2024}
all have one prototile (our shurikens, and their jaws)
that appear periodically in any tiling of the plane.
Thus, if we remove that pattern from the target, we obtain a periodic subset of
the plane which can be tiled using a reduced number of prototiles of
$2$, $4$, and $7$, respectively.
In particular, we prove

\begin{corollary} \label{cor:intro:two}
  Given two simple-polygon prototiles, and
  given a periodic subset of the plane,
  determining whether the two prototiles tile the periodic subset
  is undecidable.
\end{corollary}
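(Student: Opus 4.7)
The plan is a direct reduction from Theorem~\ref{thm:intro:main}. Let $P_1,P_2,P_3$ be the three prototiles produced by the construction of Section~\ref{sec:three}, where $P_3$ denotes the shuriken, and rely on the structural property asserted in the paragraph preceding the corollary: in every tiling of the plane by $\{P_1,P_2,P_3\}$, the placements of $P_3$ form a single fixed, periodic set $\Pi$ of isometries of the plane, invariant under some lattice $\Lambda$ of translations. Concretely, I would pin this down as $\Pi = \{\lambda \circ f : \lambda \in \Lambda,\ f \in F\}$ for a finite set $F$ of isometries, so that $\Pi$ admits a finite description.

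Given $\Pi$, define the target $T \subseteq \mathbb{R}^2$ by deleting all shuriken copies, $T := \mathbb{R}^2 \setminus \bigcup_{g \in \Pi} \mathrm{int}(g(P_3))$. Since $\Pi$ is $\Lambda$-invariant, $T$ is a periodic subset of the plane with period lattice $\Lambda$, specified by a fundamental domain plus $\Lambda$, and the map $(P_1,P_2,P_3) \mapsto (P_1,P_2,T)$ is computable. I would then verify the two directions of equivalence. If $\{P_1,P_2,P_3\}$ tiles the plane, then by the forced-placement property the shuriken copies sit at exactly the positions $\Pi$; deleting them leaves a tiling of $T$ by $\{P_1,P_2\}$. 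Conversely, any tiling of $T$ by $\{P_1,P_2\}$ extends to a tiling of $\mathbb{R}^2$ by $\{P_1,P_2,P_3\}$ by inserting one shuriken at each $g \in \Pi$. Undecidability of the right-hand problem then follows from Theorem~\ref{thm:intro:main}, giving Corollary~\ref{cor:intro:two}.

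The main obstacle is the forced-placement property itself: it is not enough that there \emph{exist} tilings in which shurikens occupy $\Pi$; the construction in Section~\ref{sec:three} must ensure that \emph{every} valid tiling places the shurikens at precisely $\Pi$ (up to a global rigid motion that can be absorbed into the choice of coordinates for $T$). This rigidity is special to the 3-polygon construction and presumably reflects the role of the shuriken as a skeletal piece whose geometry (angles, edge lengths, protrusions) admits no alternative placement consistent with $P_1$ and $P_2$. Once this is extracted as an explicit lemma about the construction of Section~\ref{sec:three}, the corollary is immediate from the reduction above.
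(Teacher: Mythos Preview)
Your proposal is correct and matches the paper's approach exactly: remove the forced periodic shuriken pattern from the plane to obtain the periodic target $T$, and observe that tiling $T$ by wheel and staple is equivalent to the three-tile problem (hence to Wang tiling). The ``forced-placement property'' you flag as the main obstacle is precisely Lemma~\ref{lem:staple+shuriken+wheel}, which shows every tiling by the three prototiles is an infinite square grid of wheels with shurikens (and staples) in uniquely determined positions up to a global isometry; so the lemma you say needs to be extracted is already in the paper.
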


\paragraph{Logical Undecidability.}
Algorithmic undecidability implies \defn{logical undecidability}
(as explained in \cite{Greenfeld-Tao-2023} in the context of tilings).
In particular, our result implies that there are three polygon prototiles
that cannot be proved or disproved to tile the plane,
for any fixed set of axioms (e.g., ZFC).
Otherwise, we would obtain a finite algorithm to decide tileability,
by enumerating all proofs.

\begin{corollary}
  For any fixed set of axioms,
  there are three fixed simple-polygon prototiles such that
  both ``these prototiles tile the plane'' and
  ``these prototiles do not tile the plane''
  have no proof.
\end{corollary}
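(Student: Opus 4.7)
The plan is to derive this corollary directly from Theorem~\ref{thm:intro:main} by the standard argument converting algorithmic undecidability into logical independence, essentially unpacking the hint already given in the excerpt. I would begin by recalling what is assumed of a ``fixed set of axioms'' such as ZFC: proofs are finite syntactic objects over a finite alphabet, the predicate ``$\pi$ is a valid proof of $\varphi$'' is decidable, and the system is sound (anything provable is true). These are the minimal hypotheses needed for the argument.

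Next I would argue by contraposition. Suppose, for contradiction, that no such three prototiles exist; equivalently, for every triple $(P_1,P_2,P_3)$ of simple-polygon prototiles, at least one of the two statements
\[
T(P_1,P_2,P_3) := \text{``}(P_1,P_2,P_3)\text{ tile the plane''}\qquad\text{or}\qquad \neg T(P_1,P_2,P_3)
\]
admits a proof in the fixed axiom system. Under this assumption I would describe the following decision procedure: on input $(P_1,P_2,P_3)$, form the two sentences $T(P_1,P_2,P_3)$ and $\neg T(P_1,P_2,P_3)$ (which is routine, since the combinatorial notion of a tiling is expressible in the axiom system's language), then enumerate all finite proof candidates in order of length, checking each for validity as a proof of one sentence or the other. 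By hypothesis, the enumeration halts; by soundness, the halting proof identifies the correct answer. This gives a finite algorithm deciding the three-prototile tiling problem, contradicting Theorem~\ref{thm:intro:main}.

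Hence the assumption fails, and there must exist three simple-polygon prototiles for which neither $T$ nor $\neg T$ is provable, completing the proof. The ``main obstacle'' here is really only a matter of careful bookkeeping: making explicit that the axiom system is assumed sound and that proof-checking is mechanical, and verifying that the tiling statement can be written as a first-order sentence in the language of the axioms (trivial for ZFC, which subsumes all the needed geometry and combinatorics). No nontrivial geometric input is required beyond Theorem~\ref{thm:intro:main} itself.
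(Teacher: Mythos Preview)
Your proposal is correct and follows exactly the approach the paper indicates: the paper does not give a formal proof of this corollary but rather the one-sentence sketch ``Otherwise, we would obtain a finite algorithm to decide tileability, by enumerating all proofs,'' which is precisely the argument you have unpacked. Your version is more careful in making explicit the required hypotheses (recursive proof-checking, soundness, expressibility of the tiling predicate), all of which are indeed needed and only implicit in the paper's sketch.
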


\paragraph{Tiling Completion.}
Undecidability of tiling requires the set of prototiles to depend on the
Turing machine simulation.
To obtain undecidability with a fixed set of prototiles,
we can generalize the tiling problem as follows \cite{Robinson1971}:

\begin{problem}[Tiling Completion]
  Given one or more prototiles, and given some already placed tiles,
  can this placement be extended to a tiling of the plane?
\end{problem}

Robinson \cite{Robinson1971} gave the first result on this problem:
a set of 36 prototiles (Wang tiles or polygons)
for which tiling completion is undecidable.
This result applies the general Turing machine simulation to
Minsky's 4-symbol 7-state universal Turing machine,
so only a finite number of tiles need to be preplaced
to represent the Turing machine to simulate.
Likely this result could be improved using newer smaller
universal Turing machines \cite{Woods-Neary-2009}.
If we allow for (countably) infinitely many tiles to be preplaced,
we can use semi-universal Turing machines and simulate Rule 110,
enabling undecidability with just six supertiles (Wang tile or polygons)
\cite{Yang-2013-thesis}.

Our main result reduces this upper bound to $3$,
in the stronger model of finitely many preplaced tiles:

\begin{corollary} \label{cor:intro:completion}
  There are three fixed simple-polygon prototiles such that,
  given a finite set of already placed tiles,
  determining whether this placement can be extended to a tiling of the plane
  is undecidable.
\end{corollary}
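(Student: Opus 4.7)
The plan is to reduce from the halting problem for a fixed universal Turing machine $U$. The construction behind Theorem~\ref{thm:intro:main} produces three prototiles whose geometric features (bumps, dents, and rotational anchors) encode the transition table of the simulated Turing machine, and therefore depend on that machine. The first step is to fix $U$ once and for all to be a small universal Turing machine (for example, one of the small universal machines surveyed in \cite{Woods-Neary-2009}). Applying the construction of Theorem~\ref{thm:intro:main} to this particular $U$ yields a \emph{single} triple of simple-polygon prototiles which no longer depends on any input.

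Next, I would encode an arbitrary input $(M,w)$ to $U$ as a finite set of preplaced tiles. In the TM-to-tiling reduction underlying Theorem~\ref{thm:intro:main}, one spatial direction plays the role of time and the tape at each time step corresponds to a perpendicular row of tiles, with transitions enforced by the shape of the tiles joining consecutive rows. The nonblank portion of $U$'s initial tape is finite, so it can be represented by a finite set of tiles laid out along a single ``time $0$'' row, together with a constant number of additional tiles marking $U$'s initial head position and state. All blank cells of the initial tape, and all tiles representing later time steps, are left to the solver. Varying the preplacement lets us simulate $U$ on any $(M,w)$; a valid completion exists if and only if $U$ never halts on $(M,w)$, which is undecidable.

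The main obstacle will be ensuring that the finite preplacement really pins down the intended initial configuration and does not admit spurious completions, for instance by a shifted or rotated simulation that ignores the preplaced tiles. Because one of the three prototiles (the shuriken) appears periodically in every tiling, as noted in the discussion preceding Corollary~\ref{cor:intro:two}, we can use a small amount of additional preplaced ``scaffolding'' anchored to this periodic sublattice to fix the location, orientation, and starting row of the simulation. Since the input $(M,w)$ and the scaffolding are both finite, the total preplacement is finite, and the resulting tiling-completion instance is solvable exactly when $U$ fails to halt on $(M,w)$, establishing Corollary~\ref{cor:intro:completion}.
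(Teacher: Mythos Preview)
Your approach is essentially the paper's: fix a universal Turing machine, build the three prototiles once for the corresponding Wang tile set, and encode the machine's input as a finite preplacement. The paper's one-line proof simply invokes Robinson's result that tiling completion is already undecidable for a fixed set of $36$ Wang tiles simulating a universal machine, then transports that preplacement through the Wang-to-three-polygons correspondence; you instead re-describe the input-encoding step yourself, which is fine but less economical.

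One remark: your third paragraph's worry about spurious shifted or rotated completions, and the proposed shuriken ``scaffolding'' to prevent them, is unnecessary. Lemma~\ref{lem:staple+shuriken+wheel} already shows that \emph{every} tiling by the three prototiles is a square grid of wheels realizing a Wang tiling; thus a single preplaced wheel pins down the entire grid and orientation, and the remaining preplaced wheels simply fix the Wang-tile choices at their positions. No extra anchoring is needed.
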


\paragraph{Co-RE-completeness.}
While past results on tiling and tiling completion have focused on
undecidability, all such proofs actually show \defn{co-RE-hardness}:
the simulated Turing machine halts if and only if the prototiles fail to tile.
Co-RE-hardness is a more precise statement than undecidability,
so we use that phrasing here.
But it also raises the question: are tiling and tiling completion in co-RE?
Surprisingly, this question does not seem to have been solved (or even asked)
in the literature before.
In Section~\ref{sec:co-RE}, we prove that the answer is ``yes'':

\begin{theorem} \label{thm:intro:coRE}
  Given a finite set of polygon prototiles,
  and given a (possibly empty) connected set of already placed tiles,
  determining whether this placement can be extended to a tiling the plane
  is in co-RE.
\end{theorem}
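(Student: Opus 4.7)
The plan is to design a semi-decision procedure that halts exactly when the given preplaced tiles admit no extension to a tiling of the plane. The outer loop enumerates radii $R = 1, 2, 3, \ldots$ and, for each $R$, attempts to decide whether there exists a finite placement of prototile isometries that extends the preplaced tiles and completely covers the closed disk of radius $R$ around the origin, with no overlaps. If some $R$ admits no such covering, the procedure halts and reports that no extension exists; otherwise it continues forever.

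Correctness reduces to two claims. First, a standard compactness argument shows that if every disk of radius $R$ admits a valid partial covering extending the preplaced tiles, then the plane itself admits such an extension. Concretely, take a sequence of partial coverings for $R\to\infty$, observe that tile vertices in any valid partial tiling form a uniformly discrete set (the minimum pairwise separation is bounded below in terms of the smallest prototile edge length and the maximum number of tiles meeting at a vertex), and extract a subsequence that converges on every compact set; the limit is a valid tiling of the plane extending the preplaced tiles. Second, for each fixed $R$ the existence of a valid partial covering must be decidable.

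The second claim is the main obstacle and the substance of the proof: the continuous space of tile placements (translations, rotations, and one bit of reflection) must be replaced by an effectively enumerable finite set of candidates. I would proceed by backtracking search that repeatedly selects a canonical uncovered point $v$ on the boundary of the currently covered region (e.g., the lexicographically smallest such point inside $D_R$) and branches over all ``snug'' placements of a next tile whose closed region contains $v$. The key lemma is that the set of such placements is finite and computable: for each prototile, each of its vertices, edge interiors, or face interiors that may be incident to $v$, and each of the finitely many orientations, the further requirement that the new tile not overlap already-placed tiles and not leave a gap at $v$ pins the placement down to a finite, effectively enumerable set. Because every placed tile has area at least some $a_{\min}>0$ depending only on the prototiles, the search tree has depth at most $\pi R^2/a_{\min}$ plus the number of preplaced tiles meeting $D_R$; together with finite branching, this makes the tree finite and exhaustively searchable.

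The delicate step in the finite-branching lemma will be handling non-edge-to-edge configurations, where the new tile's edges lie flush with portions of the edges of several existing tiles rather than meeting them at common vertices. In such configurations the placement could a priori admit a continuous ``slide'' degree of freedom along a shared edge; I expect this to be resolved by showing that the slide is always discretized by the next forced contact with an existing vertex or edge endpoint, or else that the entire continuous family can be represented by a single symbolic case that is propagated through the search. Once this case analysis is carried out, combining finite-disk decidability with the compactness argument yields the co-RE semi-decision procedure and hence Theorem~\ref{thm:intro:coRE}.
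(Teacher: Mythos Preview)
Your high-level outline---enumerate radii, invoke a compactness/Extension Theorem to pass to the limit---matches the paper's.  But the proposal has two genuine gaps that the paper spends real work to close.

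\paragraph{The seam problem is the heart of the matter, and neither of your fixes works.}
A brick tiling already shows that the continuous slide is \emph{not} discretized by ``the next forced contact'': the offset along a seam line can be any real number, never coinciding with an existing vertex, so your backtracking tree is genuinely infinite-branching at that step.  ``Propagating a symbolic case through the search'' is not a proof; you would have to show that every downstream geometric predicate remains effectively testable with a free real parameter in play, and that the parameter can be consistently instantiated at the end.  The paper sidesteps this with a different device.  It enumerates only \emph{seamless} carpets---tile complexes whose incidence graph (tiles joined when they share a vertex) is connected, so the combinatorial gluing alone fixes every tile's position---and it relaxes the target from ``covers $D_R$'' to ``every vertex within radius $r$ is \emph{neat}'' (interior with total angle $2\pi$, or boundary with total angle~$\pi$).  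A would-be seam thus becomes a straight boundary of the carpet rather than a continuous internal degree of freedom.  The nontrivial payoff is Lemmas~\ref{carpettrim}--\ref{lem:neat2patch}: a carpet neat within radius $r+2\rho$ can be trimmed to a non-self-overlapping patch, and if that patch fails to cover $D_{r/2}$ because of one or two straight chords, rotated copies can be \emph{stitched} together along those chords to produce a patch that does cover~$D_{r/2}$.  This stitching construction is the idea your proposal lacks.

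\paragraph{The per-radius subproblem is not decidable in the paper's model.}
The prototiles have computable-real edge lengths and angles, so even testing equality of two lengths, or whether two segments intersect, is only co-RE.  Your procedure selects ``the lexicographically smallest uncovered point'' and checks non-overlap at each branch, both of which presuppose decidable comparisons you do not have.  The paper deliberately avoids any global non-overlap test: it works with possibly self-overlapping carpets and checks only local angle sums and edge-length equalities along shared boundaries (Lemma~\ref{lem:valid}), each a co-RE test, and then composes these via finite disjunctions and recursively enumerable conjunctions (Lemma~\ref{lem:co-RE}).  Consequently the per-radius check is shown to be in co-RE (Lemma~\ref{lem:large-neat}), not decidable, and that is all that is needed for Theorem~\ref{thm:tiling-co-RE}.
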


This result holds in a very general model for polygons:
the angles and edge lengths can be represented as \defn{computable} numbers,
(meaning that a Turing machine can output the first $n$ bits, given~$n$).
Our three-polygon construction uses a more restricted model,
where the angles are rational multiples of $\pi$
and the edge lengths are constant-size radical expressions, showing the problem to be co-RE-complete for every model in between.

\begin{corollary}[Stronger form of Theorem~\ref{thm:intro:main}] \label{cor:intro:complete}
  Given three simple-polygon prototiles, where the angles and edge lengths are
  specified by computable numbers or by constant-size radical expressions,
  determining whether they tile the plane is co-RE-complete.
\end{corollary}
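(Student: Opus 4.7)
The plan is to combine the co-RE upper bound of Theorem~\ref{thm:intro:coRE} with a model check on the reduction behind Theorem~\ref{thm:intro:main}.

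Membership in co-RE is immediate. Theorem~\ref{thm:intro:coRE}, applied with an empty preplaced configuration, already yields a co-RE procedure for tileability whenever the polygons are specified by computable numbers, which is strictly more permissive than constant-size radical expressions. Hence both of the models named in the corollary inherit the upper bound for free.

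For co-RE-hardness I would reduce from the complement of the halting problem. Given a Turing machine $M$, first apply the standard Berger/Robinson Wang-tile simulation to obtain a finite set of Wang tiles that tile the plane if and only if $M$ does not halt on blank input. Next, feed these Wang tiles into the construction of Theorem~\ref{thm:intro:main} to produce three simple polygons (the wheel, shuriken, and staple in the paper's terminology) whose joint tileability is equivalent to that of the Wang tiles. Composing these steps gives a many-one reduction from the complement of halting, which is $\Pi^0_1$-complete, to three-polygon tileability; this establishes co-RE-hardness, provided the three polygons can themselves be computed in the required restricted format.

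The technical heart of the corollary is therefore auditing the construction in Section~\ref{sec:three} to confirm that it stays within the restricted model. Concretely, I would verify two invariants of the output polygons: (i) every vertex angle is a rational multiple of $\pi$ with denominator bounded by a constant depending only on the rotational symmetry of the construction's skeleton, and not on $M$; and (ii) every edge length is a constant-size radical expression, i.e., it lies in a fixed finitely generated algebraic extension of the rationals determined once and for all by the construction. Enlarging the Wang alphabet increases the number of bumps and dents along each polygon, but reuses the same finite palette of angles and lengths per bump, so the per-polygon description grows with $M$ only in combinatorial complexity, not in the arithmetic complexity of any individual coordinate.

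The main obstacle I expect is invariant (ii): distinct Wang-tile colors must be encoded so that no unintended isometry identifies two different colors, yet the encoding must not introduce a fresh irrational length for each color. I would address this by placing each color's bump at a combinatorially distinguishable position along the edge using integer multiples of a fixed base unit, so that color distinctions are carried by \emph{position} rather than by new algebraic numbers; the only radicals that appear are the constantly many needed for the basic interlocking geometry. Once (i) and (ii) are confirmed, the reduction is Turing-computable and outputs polygons in the restricted model, yielding co-RE-hardness, and hence co-RE-completeness, in both models stated in the corollary.
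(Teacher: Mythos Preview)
Your proposal is correct and follows the same two-part strategy as the paper: co-RE membership from Theorem~\ref{thm:intro:coRE} (with empty preplacement), and co-RE-hardness by auditing the Section~\ref{sec:three} reduction to confirm its output lies in the restricted model.

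One correction to invariant~(i): the denominators of the angles \emph{do} depend on~$M$. In the paper's construction the wheel is a regular $4n$-gon where $n$ is the number of Wang tiles, so the wheel corner angle $\pi(1-\tfrac{1}{2n})$, the shuriken anticorner defect, and the shuriken tip angle $\tfrac{\pi}{n}$ all have denominators growing with~$n$ and hence with~$M$. What the paper actually establishes (and what suffices) is the weaker claim that every angle is a rational multiple of~$\pi$, with the fixed choice $\epsilon=\tfrac{\pi}{16}$ making $\alpha=\tfrac{3\pi}{8}$ and $\beta=\tfrac{7\pi}{16}$ rational multiples of~$\pi$ as well. This is still a constant-size expression in the intended sense (one rational times~$\pi$), just not with bounded denominator.

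Your invariant~(ii) and its resolution match the paper exactly: the edge lengths all lie in the fixed extension generated by $\cos\tfrac{\pi}{16}$ and $\sin\tfrac{\pi}{16}$ (Section~\ref{sec:edge lengths}), and color information is carried combinatorially by the binary tweedledee/tweedledum pattern along each edge rather than by introducing new irrational lengths per color---precisely the mechanism you anticipate in your final paragraph.
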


\section{Wang Tiling: Signed and Unsigned}

We reduce from Wang tiling, which is known to be undecidable.
A \defn{Wang tile} is a square with a \defn{glue} on each edge.
Classically, Wang tiles are \defn{unsigned},
meaning that glues match if they are equal, and
\defn{translation-only}, meaning they have a specified orientation
of which edge is north, east, south, and west.
In 1966, Berger proved unsigned Wang tiling undecidable:

\begin{theorem}[{\cite{Berger1966}}]
  Given a set of Wang tiles, it is undecidable to determine
  whether they tile the plane by translation only,
  matching glues of equal value.
\end{theorem}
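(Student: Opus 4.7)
The plan is to reduce from the complement of the halting problem: given a Turing machine $M$, construct a set of Wang tiles that admits a valid tiling of the plane if and only if $M$ does not halt on the blank input.

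First I would design a \emph{computation layer} of Wang tiles: for each state-symbol transition of $M$, introduce a tile whose horizontal glues encode the current tape symbol and whose vertical glues transmit (or fail to transmit) the head and state between consecutive rows. A horizontal strip of such tiles is then forced to represent one full configuration of $M$, and successive rows represent successive time steps. A dedicated ``halt'' transition is left without any compatible upward neighbor, so once $M$ reaches the halt state the tiling cannot continue. If we had a fixed initial row of blank-tape tiles, this layer alone would do the job: the strip extends forever upward iff $M$ runs forever.

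The main obstacle, and the heart of Berger's argument, is that a tiling of the plane has no distinguished starting row. Worse, periodic tilings of computation tiles can cheat by repeating harmless configurations in both directions and never actually simulating $M$. To fix this I would overlay an \emph{aperiodic substrate}, namely Berger's (or Robinson's simplified) aperiodic Wang tile set, which forces every tiling of the plane to contain a hierarchy of nested squares of side $2^k$ for every~$k$. I would then take the Cartesian product of the two tile sets (pairs of colors on each edge, constrained to match componentwise), and use the border of each level-$k$ square as a seed that forces a fresh computation of $M$ starting from the blank tape inside that square, with time horizon $\Theta(2^k)$.

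Finally I would verify correctness of the reduction in both directions. If $M$ never halts, then for every $k$ the first $\Theta(2^k)$ steps of the computation fit inside a level-$k$ square, and these consistent choices yield a legal tiling of the whole plane. Conversely, if $M$ halts in $T$ steps on blank input, then every sufficiently large aperiodic square (of side $\gg T$) would be forced to host a complete halting computation and therefore cannot be tiled; since the aperiodic layer forces arbitrarily large squares in any tiling, no tiling exists. The hardness of the halting problem then transfers to unsigned, translation-only Wang tiling. The technically delicate step is the second one: wiring the computation layer into the hierarchical substrate so that seeds are placed exactly at square corners, tape cells line up with one row of the grid, and the available time inside each square really is proportional to its side length.
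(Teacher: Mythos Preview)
The paper does not supply its own proof of this theorem: it is stated with attribution to Berger~\cite{Berger1966} and used as a black box to reduce from. Your sketch is the standard Berger/Robinson argument the paper alludes to in the introduction (an aperiodic hierarchical substrate forcing nested computation squares, overlaid with a Turing-machine simulation layer), and as an outline it is correct. Since there is no in-paper proof to compare against, the only remark is that what you have written is a plan rather than a proof: the ``technically delicate step'' you flag---synchronizing computation seeds with square corners and ensuring the time budget scales with side length---is exactly where all the work in Berger's 100+ page memoir (and Robinson's simplification) lives, and your proposal does not carry it out.
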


Our first reduction converts unsigned Wang tiling to a variant
that is \defn{signed}, meaning every glue has a sign ($+$ or $-$)
and value, and glues match if they have opposite sign and equal value,
and \defn{free}, meaning the tile can be rotated and/or reflected.
This result was proved by Robinson \cite[p.~179]{Robinson1971};
we give the simple reduction here for completeness.

\begin{lemma}
  Given a set of unsigned translation-only Wang tiles $T$,
  we can construct a set of signed free Wang tiles $T'$
  that has the same tilings as $T$ up to global isometry,
  allowing or forbidding reflection.
\end{lemma}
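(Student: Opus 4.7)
The plan is to construct $T'$ from $T$ by enriching each glue value of every tile $t \in T$ with two pieces of auxiliary data that together force a unique orientation on every tile in any $T'$-tiling. Specifically, I would tag glue values on north/south edges with a marker ``NS'' and glue values on east/west edges with a marker ``EW'', drawing the NS-tags and EW-tags from disjoint alphabets; then I would sign the glues by the rule \emph{north and west positive, south and east negative}, so that in the canonical orientation every pair of touching edges carries equal tagged value and opposite sign. The resulting $T'$ has one signed tile per element of $T$.

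The forward direction is immediate: any translation tiling of $T$ becomes a $T'$-tiling by placing each corresponding tile of $T'$ in canonical orientation, since horizontally adjacent pairs share an EW-tagged value with signs $(+,-)$ and vertically adjacent pairs share an NS-tagged value with signs $(+,-)$.

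For the converse, I take an arbitrary tiling of $T'$ (with rotations allowed, and optionally reflections) and propagate a global orientation through the tile-adjacency graph, which is connected. Two facts do the work. First, the NS/EW tags force adjacent tiles to agree modulo $90^\circ$: the edge on the east side of a tile must bear an EW-tagged value to match its neighbor's west edge, and a $90^\circ$ (or $270^\circ$) rotation of a canonical tile would place an NS-tagged value there. Hence, after a single global $90^\circ$ rotation applied to the whole tiling, every tile is in orientation $0^\circ$ or $180^\circ$. Second, the sign convention breaks the remaining $180^\circ$ ambiguity: a $180^\circ$-rotated tile's east edge carries sign $+$, clashing with a canonical neighbor's west sign $+$, so no $0^\circ$/$180^\circ$ mixing is possible, and a further global $180^\circ$ rotation (if needed) makes all tiles canonical. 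If reflections of $T'$-tiles are permitted, the same sign-parity argument shows that reflected and unreflected tiles cannot meet; connectedness then forces all tiles to share a reflection status, which can be absorbed into one global reflection.

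The main obstacle is really a bookkeeping one: verifying that each of the eight dihedral orientations of a tile is ruled out against a canonically oriented neighbor by either an NS/EW tag mismatch or a sign mismatch, with no combination of the two slipping through. This is a routine finite case check once the tag and sign conventions are fixed. Stripping tags and signs from the resulting uniform tiling yields a translation tiling of $T$, and the argument is identical whether $T'$ forbids or allows reflections, matching the statement of the lemma.
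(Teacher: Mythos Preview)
Your construction and argument are essentially the paper's: tag the glues to separate north/south from east/west (the paper phrases this as ``make every west/east glue different from every south/north glue''), then sign them so that opposite edges carry opposite signs. You pick $+$ on north/west and $-$ on south/east where the paper picks $+$ on north/east and $-$ on south/west, but this difference is immaterial. The forward direction and the propagation argument ruling out $90^\circ$ and $180^\circ$ rotations match the paper's.

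Your handling of reflections, however, has a real gap. The assertion ``the same sign-parity argument shows that reflected and unreflected tiles cannot meet'' is false. Reflecting a tile across the horizontal axis fixes its east and west edges, so the reflected tile carries \emph{identical} east and west glues (value, tag, and sign) to the canonical one and can therefore sit beside a canonical neighbor whenever two canonical tiles can. Symmetrically, the vertical-axis reflection preserves the north and south glues and can be stacked on a canonical tile. Your promised ``routine finite case check'' of the eight dihedral orientations against one canonical neighbor will in fact leave two reflected orientations standing (one surviving horizontal adjacency, the other surviving vertical adjacency), so propagation through the adjacency graph does not pin down a single orientation. The paper's one-line propagation is equally terse on reflections and glosses over the same point, so you are in good company; but a complete argument for the ``allowing reflections'' clause needs more than what either proof supplies.
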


\begin{proof}
  First, modify $T$ so that every west/east glue is different
  from every south/north glue.
  This transformation at most doubles the number of glues,
  and does not change the set of translation-only Wang tilings
  because these glues cannot interact via translations.

  Second, make the tiles signed by
  adding a sign of $-$ on every west and south edge glue,
  and adding a sign of $+$ on every north and east edge glue.
  Again this transformation at most doubles the number of glues,
  and does not change the set of translation-only Wang tilings
  because we always combine west glues with east glues
  and south glues with north glues, which have opposite signs by definition.

  We claim that the resulting tile set $T'$ prevents rotations and reflections
  relative to a single root tile.
  Define compass directions (west/east/south/north) according to this root tile.
  The west edge of the root tile is negative and among the west/east glues,
  so the only way a tile can attach on this side is via a positive glue among
  the west/east glues, which implies it is the east edge of another tile,
  forcing the west neighbor tile to have the same orientation.
  The same statement holds for the other compass directions,
  forcing all tile orientations to match the root tile.
\end{proof}

Henceforth when we say ``Wang tiles'' we mean signed free Wang tiles.

\section{Three Tiles That Simulate $n$ Signed Wang Tiles}
\label{sec:three}

We implement any set of $n$ Wang tiles with three tiles,
illustrated in Figure~\ref{fig:three tiles}:
\begin{enumerate}
\item the \emph{wheel} which encodes all of the Wang tiles,
\item the \emph{staple} which covers the unused Wang tiles of each wheel, and
\item the \emph{shuriken} which fills in the remaining gaps.
\end{enumerate}


\usetikzlibrary{math}
\tikzmath{
  real \l, \eps1, \eps2, \alph, \beth;
  integer \k;
  \l = 1.5;
  \k = 3;
  \eps2 = 5;
  \eps3 = 5;
  \eps1 = \eps2+\eps3;
}

\usetikzlibrary{turtle}
\usetikzlibrary{angles,quotes,backgrounds}

\def\tweedledee{[turtle={
  forward=\l*(2-cos(\eps3)+sin(\eps1)),
  left=(90+\eps1), forward=\l,
  right=90+\eps2, fd=\l,
  rt=90+\eps3,forward=2*\l*(cos(\eps1)+sin(\eps3)),
  lt=90+\eps3,fd=\l,
  lt=90+\eps2,fd=\l,
  rt=90+\eps1,fd=\l*(2-cos(\eps3)+sin(\eps1))}]}

\def\tweedledum{[turtle={
  forward=\l*(2-cos(\eps3)+sin(\eps1)),
  rt=(90+\eps1), forward=\l,
  lt=90+\eps2, fd=\l,
  lt=90+\eps3,forward=2*\l*(cos(\eps1)+sin(\eps3)),
  rt=90+\eps3,fd=\l,
  rt=90+\eps2,fd=\l,
  lt=90+\eps1,fd=\l*(2-cos(\eps3)+sin(\eps1))}]}


\def\notch{[turtle={
  forward=\l*(2-cos(\eps3)+sin(\eps1)),
  left=(90+\eps1), forward=\l,
  right=90+\eps2, fd=\l,
  rt=2*\eps3, fd=\l,
  rt=90+\eps2,fd=\l,
  lt=90+\eps1,fd=\l*(2-cos(\eps3)+sin(\eps1)) 
}]}

\def\staple{[turtle={
  lt=90+\eps1, forward=\l,
  right=90+\eps2, fd=\l,
  rt=90+\eps3,forward=2*\l*(cos(\eps1)+sin(\eps3)),
  rt=90+\eps3,fd=\l,
  rt=90+\eps2,fd=\l
}]}

\def\wheel{
  \foreach \j in {0,1} {
    \foreach \i in {1,...,\k} {
      [turtle={forward=5*\l}]
      \tweedledee
      [turtle={forward=4*\l}]
      \tweedledum
      [turtle={forward=5*\l, right=90/\k}]      
    }
    \foreach \i in {1,...,\k} {
      [turtle={forward=5*\l}]
      \tweedledum
      [turtle={forward=4*\l}]
      \tweedledee
      [turtle={forward=5*\l, right=90/\k}]      
    }
  }
}

\def\shuriken{
  \foreach \j in {1,...,4} {
    \foreach \i in {2,...,\k} {
      [turtle={rt=90/\k,forward=5*\l}]
      \notch
      [turtle={forward=4*\l}]
      \notch
      [turtle={forward=5*\l}]      
    }
    [turtle={lt=180-90/\k}]
  }
}

\begin{figure}
  \tikzmath{
    \l = 1/10;
  }
  \centering
  \subcaptionbox{Wheel\label{fig:wheel}}{%
    \centering
    \begin{tikzpicture}[turtle/distance=2cm]
      \filldraw [fill=wheel, turtle={home, rt}]
      \wheel ;
    \end{tikzpicture}
  }%
  \hfill
  \subcaptionbox{Shuriken\label{fig:shuriken}}{%
    \centering
    \begin{tikzpicture}[turtle/distance=2cm]
      \filldraw [fill=shuriken, turtle={home, rt}]
      \shuriken ;
    \end{tikzpicture}
  }%
  \hfill
  \subcaptionbox{\raggedright Staple\label{fig:staple small}}[3.5em]{%
    \begin{tikzpicture}[turtle/distance=2cm]
      \filldraw [fill=staple] [turtle={home, lt=\eps1}] coordinate (A) coordinate (B)
      [turtle={forward=\l}] coordinate (C)
      [turtle={right=90+\eps2, fd=\l}] coordinate (D)
      [turtle={rt=90+\eps3,forward=2*\l*(cos(\eps1)+sin(\eps3))}] coordinate (E)
      [turtle={rt=90+\eps3,fd=\l}] coordinate (F)
      [turtle={rt=90+\eps2,fd=\l}] ;
    \end{tikzpicture}
  }%
  \caption{The three tiles in our construction, to scale;
    Figure~\ref{fig:three tiles zoomed} shows zoomed details
    of the construction.
    The wheel is just an example; it depends on the $n$ Wang tiles being simulated.
    The shuriken depends (only) on~$n$.}
  \label{fig:three tiles}
\end{figure}
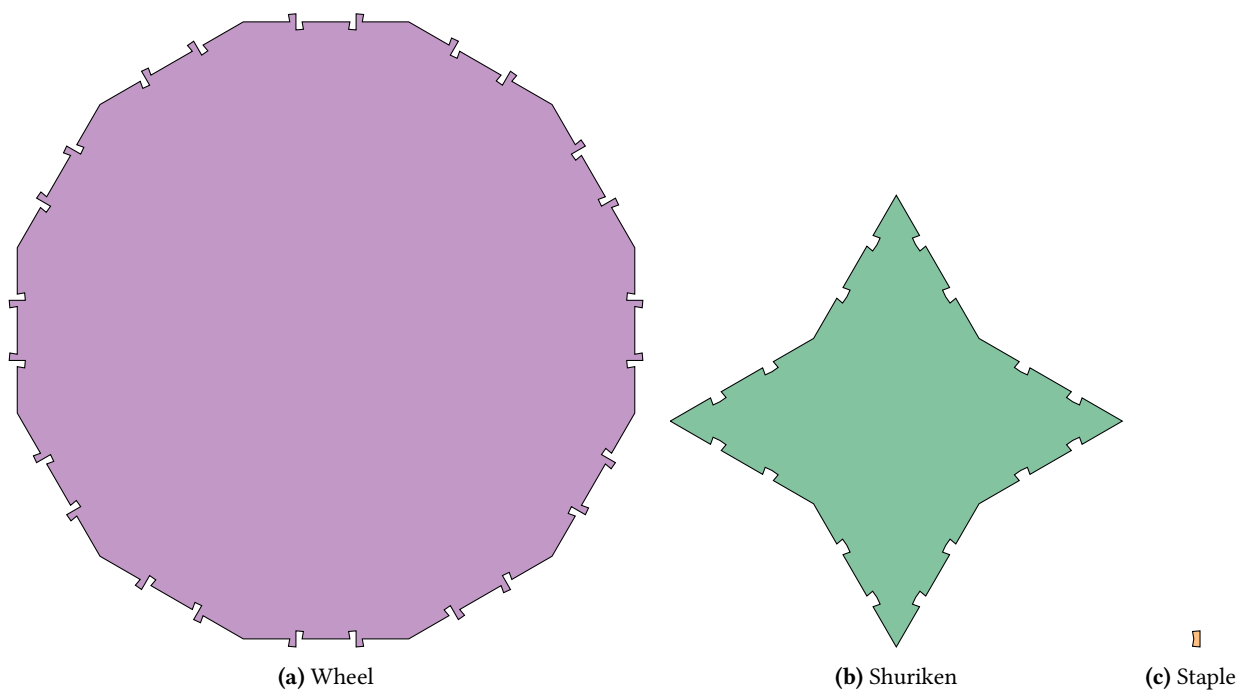

\begin{figure}[h!]
  \tikzmath{
    \l = 1.5;
  }
  \centering
  \subcaptionbox{Wheel: Tweedledee ($0$)\label{fig:tweedledee}}{%
    \centering
    \begin{tikzpicture}[turtle/distance=2cm, fill=wheel]
      \draw [turtle={home, rt}] coordinate (A)
      [turtle={forward=\l*(2-cos(\eps3)+sin(\eps1))}] coordinate (B)
      [turtle={left=(90+\eps1), forward=\l}] coordinate (C)
      pic ["$\alpha$", draw, fill=interior] {angle=C--B--A}
      [turtle={right=90+\eps2, fd=\l}] coordinate (D)
      pic ["$\beta$", draw, fill=exterior] {angle=B--C--D}
      [turtle={rt=90+\eps3,forward=2*\l*(cos(\eps1)+sin(\eps3))}] coordinate (E)
      pic ["$\beta$", draw, fill=exterior] {angle=C--D--E}
      [turtle={lt=90+\eps3,fd=\l}] coordinate (F)
      pic ["$\beta$", draw, fill=interior] {angle=F--E--D}
      [turtle={lt=90+\eps2,fd=\l}] coordinate (G)
      pic ["$\beta$", draw, fill=interior] {angle=G--F--E}
      [turtle={rt=90+\eps1,fd=\l*(2-cos(\eps3)+sin(\eps1))}] coordinate (H)
      pic ["$\alpha$", draw, fill=exterior] {angle=F--G--H};
      \begin{pgfonlayer}{background}
        \fill [wheel] (A) -- (B) -- (C) -- (D) -- (E) -- (F) -- (G) -- (H) -- ++(0,-1.5*\l) -- ++(-4*\l,0cm) -- cycle;
      \end{pgfonlayer}
    \end{tikzpicture}
  }%
  \hfil
  \subcaptionbox{Wheel: Tweedledum ($1$)\label{fig:tweedledum}}{%
    \centering
    \begin{tikzpicture}[turtle/distance=2cm, fill=wheel]
      \draw [turtle={home, rt}] coordinate (A)
      [turtle={forward=\l*(2-cos(\eps3)+sin(\eps1))}] coordinate (B)
      [turtle={rt=(90+\eps1), forward=\l}] coordinate (C)
      pic ["$\alpha$", draw, fill=exterior] {angle=A--B--C}
      [turtle={lt=90+\eps2, fd=\l}] coordinate (D)
      pic ["$\beta$", draw, fill=interior] {angle=D--C--B}
      [turtle={lt=90+\eps3,forward=2*\l*(cos(\eps1)+sin(\eps3))}] coordinate (E)
      pic ["$\beta$", draw, fill=interior] {angle=E--D--C}
      [turtle={rt=90+\eps3,fd=\l}] coordinate (F)
      pic ["$\beta$", draw, fill=exterior] {angle=D--E--F}
      [turtle={rt=90+\eps2,fd=\l}] coordinate (G)
      pic ["$\beta$", draw, fill=exterior] {angle=E--F--G}
      [turtle={lt=90+\eps1,fd=\l*(2-cos(\eps3)+sin(\eps1))}] coordinate (H)
      pic ["$\alpha$", draw, fill=interior] {angle=H--G--F};
      \begin{pgfonlayer}{background}
        \fill [wheel] (A) -- (B) -- (C) -- (D) -- (E) -- (F) -- (G) -- (H) -- ++(0,-1.5*\l) -- ++(-4*\l,0cm) -- cycle;
        \end{pgfonlayer}
    \end{tikzpicture}
  }%

  \medskip

  \subcaptionbox{Shuriken: Notch\label{fig:notch}}{%
    \centering
    \begin{tikzpicture}[turtle/distance=2cm]
      \draw [turtle={home, rt}] coordinate (A)
      [turtle={forward=\l*(2-cos(\eps3)+sin(\eps1))}] coordinate (B)
      [turtle={left=(90+\eps1), forward=\l}] coordinate (C)
      pic ["$\alpha$", draw, fill=exterior] {angle=C--B--A}
      [turtle={right=90+\eps2, fd=\l}] coordinate (D)
      pic ["$\beta$", draw, fill=interior] {angle=B--C--D}
      [turtle={rt=2*\eps3, fd=\l}] coordinate (E)
      pic ["$2\beta$", draw, fill=interior] {angle=C--D--E}
      [turtle={rt=90+\eps2,fd=\l}] coordinate (F)
      pic ["$\beta$", draw, fill=interior] {angle=D--E--F}
      [turtle={lt=90+\eps1,fd=\l*(2-cos(\eps3)+sin(\eps1))}] coordinate (G)
      pic ["$\alpha$", draw, fill=exterior] {angle=G--F--E};
      \begin{pgfonlayer}{background}
        \fill [shuriken] (A) -- (B) -- (C) -- (D) -- (E) -- (F) -- (G) -- ++(0,1.5*\l) -- ++(-4*\l,0cm) -- cycle;
      \end{pgfonlayer}
    \end{tikzpicture}
  }%
  \hfill
  \subcaptionbox{Staple\label{fig:staple}}{%
    \centering

    \begin{tikzpicture}[turtle/distance=2cm]
      \draw [thick] [turtle={home, lt=\eps1}] coordinate (B)
      [turtle={forward=\l}] coordinate (C)
      [turtle={right=90+\eps2, fd=\l}] coordinate (D)
      pic ["$\beta$", draw, fill=exterior] {angle=B--C--D}
      [turtle={rt=90+\eps3,forward=2*\l*(cos(\eps1)+sin(\eps3))}] coordinate (E)
      pic ["$\beta$", draw, fill=exterior] {angle=C--D--E}
      [turtle={rt=90+\eps3,fd=\l}] coordinate (F)
      pic ["$\beta$", draw, fill=exterior] {angle=D--E--F}
      [turtle={rt=90+\eps2,fd=\l}] 
      pic ["$\beta$", draw, fill=exterior] {angle=E--F--A}
      pic ["$2\alpha$", draw, fill=interior] {angle=C--A--F};
      \begin{pgfonlayer}{background}
        \fill [staple] (B) -- (C) -- (D) -- (E) -- (F) -- cycle;
      \end{pgfonlayer}
    \end{tikzpicture}
  }%
  \hfill
  \subcaptionbox{Combining the tweedle, notch, and staple.\label{fig:threefigs}}{%
    \centering
    \begin{tikzpicture}[turtle/distance=2cm]
      \draw [thick] [turtle={home,rt}] \tweedledum;
      \fill [fill=wheel] [turtle={home,rt}] \tweedledum  -- ++(0,-1.5*\l) -- ++(-4*\l,0cm) -- cycle;
      \fill [fill=shuriken] [turtle={home,rt}] \notch  -- ++(0,1.5*\l) -- ++(-4*\l,0cm) -- cycle; 
      \draw [thick] [turtle={home,rt}] \notch; 
      \filldraw [thick,fill=staple] [turtle={home,rt,forward=\l*(2-cos(\eps3)+sin(\eps1))}] \staple;
    \end{tikzpicture}
  }%
  \caption{Zoomed views of portions of the three tiles in our construction
    ($15\times$ scale compared to Figure~\ref{fig:three tiles}).}
  \label{fig:three tiles zoomed}
\end{figure}
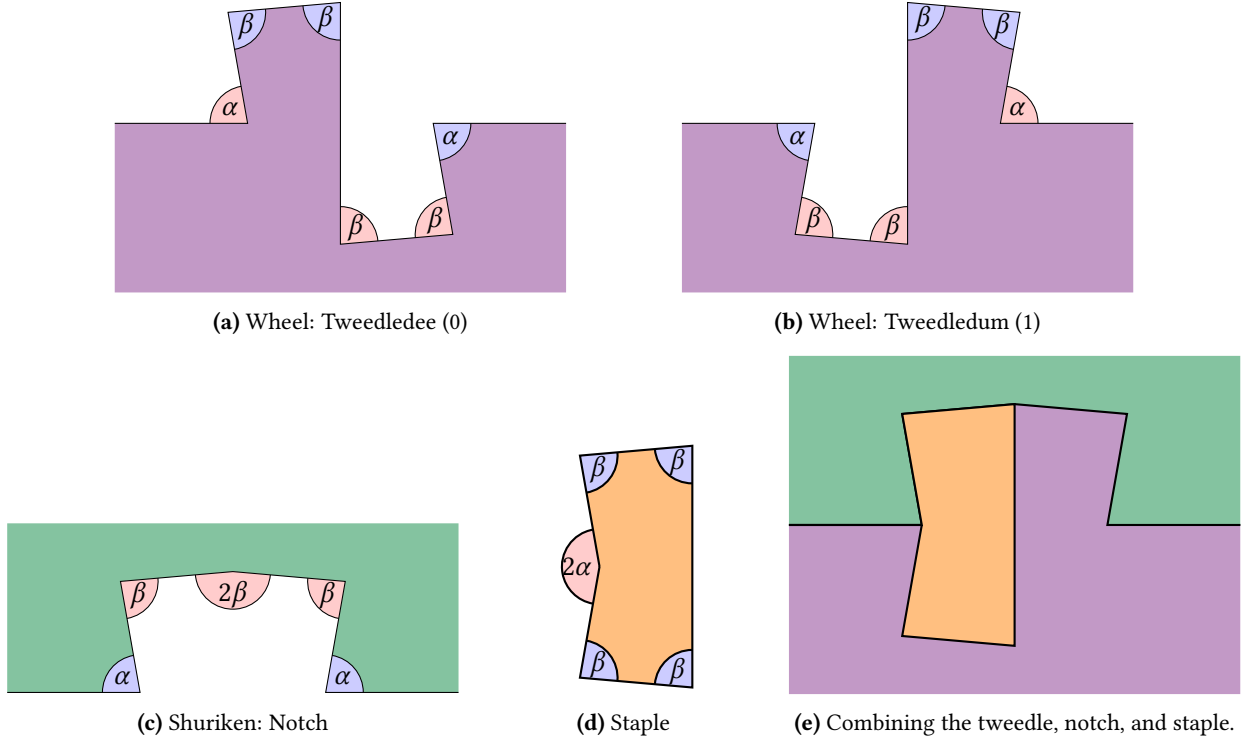

\def\tweedleword{
  \tweedledee
  \tweedledee
  \tweedledum
  \tweedledee
  \tweedledum
  \tweedledee
  \tweedledum
}

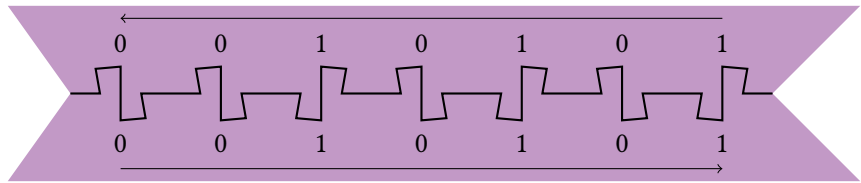
\begin{figure}[ht]
  \centering
  \tikzmath{
    \l = 1/3;
  }
  \begin{tikzpicture}[turtle/distance=2cm]
    \draw [thick] [turtle={home,rt}] \tweedleword;
    \fill [fill=wheel] [turtle={home,rt}] \tweedleword  -- ++(3.5*\l,-3.5*\l) -- ++(-34*\l,0cm) -- cycle;
    \fill [fill=wheel] [turtle={home,rt}] \tweedleword  -- ++(3.5*\l,3.5*\l) -- ++(-34*\l,0cm) -- cycle;
    \draw [thick] [turtle={home,rt}] \tweedleword;
    \draw [->] (2*\l,-3*\l) -- ++(4*6*\l,0);
    \path (2*\l,-2*\l) node {0}
      -- ++(4*\l,0) node {0}
      -- ++(4*\l,0) node {1}
      -- ++(4*\l,0) node {0}
      -- ++(4*\l,0) node {1}
      -- ++(4*\l,0) node {0}
      -- ++(4*\l,0) node {1};
    \draw [->] (26*\l,3*\l) -- ++(-4*6*\l,0);
    \path (2*\l,2*\l) node {0}
      -- ++(4*\l,0) node {0}
      -- ++(4*\l,0) node {1}
      -- ++(4*\l,0) node {0}
      -- ++(4*\l,0) node {1}
      -- ++(4*\l,0) node {0}
      -- ++(4*\l,0) node {1};
  \end{tikzpicture}
  \caption{Matching a glue (top) and its negative (bottom) between two wheels.}
  \label{fig:twiddleword}
\end{figure}

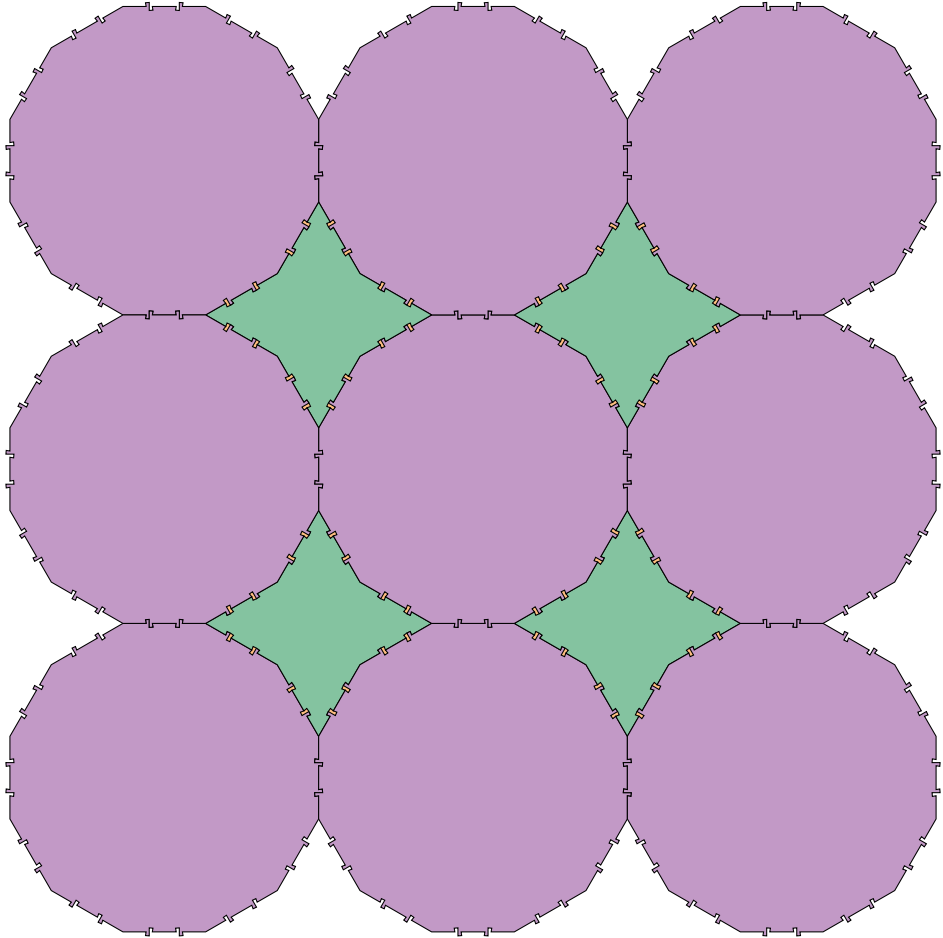
\begin{figure}[ht]
  \centering
  \tikzmath{
    \l = 1/20;
  }
\begin{tikzpicture}[turtle/distance=2cm]
  \fill [turtle=home] (0,-52*\l) [fill=staple]
  \foreach \i in {1,...,4} {
    [turtle={forward=185*\l, right=90}]
  };
  \filldraw [fill=wheel, turtle={home, rt}]
  \wheel;
  \filldraw (22*\l,0) [fill=shuriken, turtle={rt=0}]
  \shuriken ;
  \path [turtle={home,rt} ]\foreach \i in {1,...,\k} {
    [turtle={forward=22*\l, right=90/\k}]
  }
  [turtle={forward=22*\l}] coordinate (A)
  [turtle={rt=180} ]\foreach \i in {1,...,\k} {
    [turtle={forward=22*\l, right=90/\k}]
  } [turtle={forward=22*\l}] coordinate (B)
  [turtle={fd=-22*\l}]
  \foreach \i in {1,...,\k} {
    [turtle={forward=22*\l, right=90/\k}]
  } [turtle={forward=22*\l}]
  [turtle={rt=180} ]\foreach \i in {1,...,\k} {
    [turtle={forward=22*\l, right=90/\k}]
  } coordinate (C)
  [turtle={fd=22*\l}]
  [turtle={rt=180} ]\foreach \i in {1,...,\k} {
    [turtle={forward=22*\l, right=90/\k}]
  } coordinate (D)
  [turtle={fd=22*\l}] coordinate (E)
  [turtle={fd=-22*\l}]
  \foreach \i in {1,...,\k} {
    [turtle={forward=22*\l, right=90/\k}]
  } [turtle={fd=22*\l}] coordinate (F)
  [turtle={rt=180}]\foreach \i in {1,...,\k} {
    [turtle={forward=22*\l, right=90/\k}]
  } [turtle={fd=22*\l}] coordinate (G)
  [turtle={rt=180}]\foreach \i in {1,...,\k} {
    [turtle={forward=22*\l, right=90/\k}]
  } [turtle={fd=22*\l}] coordinate (H)
  [turtle={fd=-22*\l}]
  \foreach \i in {1,...,\k} {
    [turtle={forward=22*\l, right=90/\k}]
  } [turtle={fd=22*\l}] 
  [turtle={rt=180}]\foreach \i in {1,...,\k} {
    [turtle={forward=22*\l, right=90/\k}]
  } coordinate (I)
  [turtle={forward=22*\l, rt=180}]\foreach \i in {1,...,\k} {
    [turtle={forward=22*\l, right=90/\k}]
  } 
  coordinate (J)
;
  \filldraw [turtle=home] (A) [fill=wheel]
  \wheel; 
  \filldraw [turtle=home] (B) [fill=shuriken, turtle={rt}]
  \shuriken ;
  \filldraw [turtle=home] (C) [fill=wheel, turtle={rt}]
  \wheel; 
  \filldraw [turtle=home] (B) [fill=wheel, turtle={lt}]
  \wheel; 
  \filldraw [turtle=home] (D) [fill=wheel]
  \wheel;
  \filldraw [turtle=home] (E) [fill=shuriken]
  \shuriken ;
  \filldraw [turtle=home] (F) [fill=wheel, turtle={lt}]
  \wheel;
  \filldraw [turtle=home] (G) [fill=wheel, turtle={lt=180}]
  \wheel;
  \filldraw [turtle=home] (H) [fill=shuriken, turtle={lt}]
  \shuriken ;
  \filldraw [turtle=home] (I) [fill=wheel, turtle={lt=90}]
  \wheel; 
  \filldraw [turtle=home] (J) [fill=wheel, turtle={lt=180}]
  \wheel; 
\end{tikzpicture}
\caption{Example tiling with the wheel, shuriken, and staple.}
\label{fig:full}
\end{figure}

\tikzmath{
  \l = 2;
}

\subsection{Construction and Intended Tiling}

Suppose we are given a set of $n$ Wang tiles, where the $i$th tile
($1 \leq i \leq n$) has signed glues $n_i,e_i,s_i,w_i$ on its north, east, south, and west edges respectively.
Assume $n$ is an odd integer $\geq 5$ by possibly adding duplicate tiles.

The \defn{wheel} is a regular $4n$-gon with each edge adorned by bumps and notches representing the $4n$ glues.
For tile $i$, the glues $n_i,e_i,s_i,w_i$ adorn sides $i, n+i, 2n+i, 3n+i$ of the $4n$-gon, respectively.
To encode a glue, we encode its value in binary using $b=O(\log n)$ bits, prepend a $00$ at the beginning, and append $01$ at the end.
For negative glues, we reverse the order of the bits, which puts a $10$ at the beginning and a $00$ at the end.
Then we represent each bit with a \defn{tweedledee} ($0$) or \defn{tweedledum} ($1$) gadget, which are rotationally symmetric zig-zags shown in Figures~\ref{fig:tweedledee} and \ref{fig:tweedledum}.
Both follow the sequence of angles $\alpha, \beta, \beta, \beta, \beta, \alpha$
(where $\alpha$ and $\beta$ are defined below);
for tweedledee, this sequence measures
defect, angle, angle, defect, defect, angle, respectively;
while for tweedledum, this measures the opposite
(angle, defect, defect, angle, angle, defect).%
\footnote{%
  It is also possible to use two different convex angles $\beta_1,\beta_2$
  in place of each repetition $\beta,\beta$, but the notation is messier,
  so we opt for this simpler construction.
}
As shown in Figure~\ref{fig:twiddleword}, two adjacent glues match exactly if and only if they have the same value and opposite sign
(where the opposite sign is enforced by the $00$ and $01$ at either end).
This representation also ensures that reflecting a wheel will produce reflected glues that do not match unreflected glues: a reflection causes the bits of a glue to be reversed and negated, so the reflection of a positive glue starts with $01$ and ends with $11$, and the reflection of a negative glue starts with $11$ and ends with $10$, both of which are incompatible with unreflected glues.

By this construction, rotating the wheel so that its $i$th side is horizontal and at the top will have its north, east, south, and west sides represent the glues $n_i,e_i,s_i,w_i$ of tile $i$.
Given a tiling of the plane using this set of Wang tiles, we can place copies of the rotated wheel exactly as in the Wang tiling, and the glues will match exactly. Some space remains between the wheels, which we fill with ``staples'' and ``shurikens''.
See Figure~\ref{fig:full}.

The \defn{shuriken} is composed of four regular concave chains of $n-1$ sides,
matching the lengths and complementary to the angles of the regular $4n$-gon.
Each side is adorned with $b$ reflectionally symmetric \defn{notches},
shown in Figure~\ref{fig:notch},
each consisting of convex angle $\alpha$;
reflex deficits $\beta, 2 \beta, \beta$; and convex angle~$\alpha$.
As shown in Figure~\ref{fig:threefigs}, each notch can fit a tweedle of
either kind, leaving a space that is filled exactly by a \defn{staple}
(shown in Figure~\ref{fig:staple}, and consisting of convex angles
$\beta,\beta,\beta,\beta$ and reflex deficit $2 \alpha$).
Thus each side of the shuriken can exactly match any glue, effectively hiding the unused tiles of each wheel (the glues that are not on the north, east, south, or west sides).


Thus we have shown:
\begin{lemma} \label{lem:build Wang}
  Given a set of $n$ Wang tiles and a tiling of the plane with them,
  we can construct a tiling of the plane with the wheel, the shuriken, and the staple constructed above.
\end{lemma}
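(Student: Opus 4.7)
The plan is a constructive one: starting from the given Wang tiling, I build the polygon tiling in three passes, first placing wheels, then shurikens, then staples, and argue at each step that the pieces fit exactly.

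First, for every Wang tile of type $i$ in the given tiling, I place a wheel at the corresponding lattice position, rotated so that the $i$th side of the $4n$-gon is at the top. By the indexing of the construction, the sides at positions $i$, $n+i$, $2n+i$, $3n+i$ then sit on the north, east, south, and west of that lattice cell and carry the glues $n_i,e_i,s_i,w_i$. Wherever two Wang tiles are adjacent in the source tiling (say a north neighbor), the matching pair of wheel edges carry glues of equal value and opposite sign by the Wang matching rule, and the binary-encoding discussion right before the lemma (the $00/01$ framing bits plus tweedledee/tweedledum pairing as in Figure~\ref{fig:twiddleword}) shows that these two edges are exact geometric mirror-images of each other and meet flush without overlap or gap.

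Next, I need to identify and fill the region left uncovered by the wheels. Because each wheel is a regular $4n$-gon whose four compass-aligned sides are shared with neighbors, the complement (within each unit cell of the square lattice) consists of four congruent ``arm'' regions, one at each of the four interstitial corners where four wheels meet. Fitting a shuriken into each such corner is the main geometric step: I need to check that the four concave chains of $n-1$ sides that make up the shuriken have exactly the edge lengths and angle complements of the $4n$-gon (this is how the shuriken was defined), so that each of its four arms slots tightly against $n-1$ consecutive non-compass sides of one neighboring wheel. The total angular budget works out because the $n-1$ unused sides on each of the four wheels together with the four concave chains of the shuriken close up around the interstitial vertex; this is essentially a check that $4\cdot(\text{interior angle sum of one arm}) + 4\cdot(\text{central convex angle of the shuriken}) = 2\pi$, which follows from the regular $4n$-gon's angles.

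Finally, each wheel side that faces a shuriken arm carries one glue's worth of tweedledees and tweedledums, and the matching shuriken side carries $b$ reflectionally symmetric notches. Figure~\ref{fig:threefigs} already exhibits the local picture: any single tweedle (dee or dum) fits into a single notch, and the crescent-shaped leftover is congruent to the staple (since the staple was designed with angles $\beta,\beta,\beta,\beta$ and reflex $2\alpha$ precisely to complement the tweedle/notch pair). So I drop one staple into each of the $b$ leftover pockets along every non-compass wheel side. The main thing to argue is that no configuration along a non-compass side forces a mismatch: because the notch is reflectionally symmetric and accepts either a $0$ or a $1$, the arbitrary glue value encoded on that unused wheel side is harmlessly absorbed. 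The hard part of the whole argument is really this angular/metric bookkeeping around the shuriken; once $\alpha$ and $\beta$ are fixed by the regular-polygon and staple constraints, everything else is forced, and the resulting placement covers the plane exactly, proving the lemma.
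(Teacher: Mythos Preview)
Your approach is the paper's approach: the paper's ``proof'' is exactly the construction narrative preceding the lemma statement (rotated wheels on the square grid, shurikens in the interstitial corners, staples in the tweedle/notch pockets), capped by ``Thus we have shown.'' One correction, though: your angular bookkeeping formula $4\cdot(\text{interior angle sum of one arm}) + 4\cdot(\text{central convex angle of the shuriken}) = 2\pi$ does not correspond to anything in the tiling---there is no central interstitial vertex, since the shuriken is a single solid polygon filling that region. The actual local checks are vertex-by-vertex: at each shuriken \emph{tip} (angle $\tfrac{\pi}{n}$) two wheel corners of angle $\pi(1-\tfrac{1}{2n})$ meet it, summing to $2\pi$; at each shuriken \emph{anticorner} (interior angle $\pi(1+\tfrac{1}{2n})$) one wheel corner meets it, again summing to $2\pi$. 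The paper avoids even this by simply \emph{defining} the shuriken's concave chains to have the $4n$-gon's side lengths and complementary angles.
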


What remains is to show that this intended tiling is the only way our three tiles can tile the plane.

\subsection{Angle Structure}

We start with a few definitions and observations on the angles of the tiles.

Call an angle \defn{clean} (and color it \CLEAN{purple})
if it is an integer multiple of $\CLEAN{\tfrac{\pi}{2n}}$.
The sum of clean angles is clean,
and the sum of clean angles and one nonclean angle is not clean.

The vertices of the convex $4n$-gon, which we call \CLEAN{\defn{corners}}, have clean convex angle $\CLEAN{\pi(1-\tfrac{1}{2n})}$. The matching shuriken reflex \CLEAN{\defn{anticorners}} have a matching defect $\CLEAN{\pi(1-\tfrac{1}{2n})}$, and each of the four concave chains are connected at the convex \CLEAN{\defn{tip}} vertices, which have a clean convex angle of $\CLEAN{\tfrac{\pi}{n}}$.

Define angles $\alpha = \tfrac{\pi}2 - 2 \epsilon$ and
$\beta = \tfrac{\pi}2 - \epsilon$, and pick $\epsilon = \tfrac{\pi}{16}$.%
\footnote{Other choices of $\epsilon$ also work; the choice here is so that
all edge lengths can be expressed by radical expressions.}
These angles and their combinations are not clean:

\begin{property} \label{lem:clean}
  For any $\theta_1,\theta_2 \in \{\alpha,\beta\}$,
  neither $\theta_1$ nor $\theta_1 + \theta_2$
  is clean.
\end{property}
\begin{proof}
  The relevant angles are
  $\alpha = \tfrac{\pi}2 - 2 \epsilon = \tfrac{3\pi}8$,
  $\beta = \tfrac{\pi}2 - \epsilon = \tfrac{7\pi}{16}$,
  $2\alpha = \tfrac{3\pi}4$,
  $2\beta = \tfrac{7\pi}8$, and
  $\alpha + \beta = \tfrac{13\pi}{16}$,
  which all have a doubly even denominator (divisible by~$4$).
  Because $n$ is odd, these numbers cannot be equal to $\CLEAN{\tfrac{i\pi}{2n}}$ for any integer $i$, so none of these angles are clean.
\end{proof}

\begin{table}[h!]
  \centering
  \begin{tabular}{lll}
    Shape & angle of convex vertices & defect of reflex vertices \\
    \hline
    staple & $\beta$ & $2 \alpha$ \\
    shuriken & $\alpha, \CLEAN{\tfrac{\pi}n}$ & $\beta, 2\beta, \CLEAN{\pi(1 - \tfrac{1}{2n})}$ \\
    wheel & $\alpha, \beta, \CLEAN{\pi(1 - \tfrac{1}{2n})}$ & $\alpha, \beta$ \\
  \end{tabular}
  \caption{Angles used in the wheel, shuriken, and staple.
    For convex vertices, we give the interior angle, while for reflex vertices,
    we give the defect ($360^\circ$ minor the interior angle).}
  \label{tab:angles}
\end{table}

Table~\ref{tab:angles} lists the angles used by each polygon, colored to indicate which are clean.
Angles $\alpha,\beta$ are all a bit less than $90^\circ$,
so a sum of two of them is a bit less than $180^\circ$,
and a sum of three of them is a bit less than $270^\circ$.
More precisely:

\begin{lemma} \label{lem:sums}
  For any $\theta_1,\theta_2,\theta_3 \in \{\alpha,\beta\}$,
  $\theta_1 \in (\tfrac{3}{8}\pi,\tfrac{7}{16}\pi)$,
  $\theta_1 + \theta_2 \in (\tfrac{3}{4}\pi,\tfrac{7}{8}\pi)$ which is $<\pi$, and
  $\theta_1 + \theta_2 + \theta_3 \in (\tfrac{9}{8}\pi,\tfrac{21}{16} \pi)$ which is $>\pi$.
  Note that these intervals are disjoint,
  so the value of a sum $\sum_i \theta_i$ with at most three terms
  determines the number of terms in the sum.
\end{lemma}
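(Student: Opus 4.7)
}
The plan is a direct interval arithmetic calculation, since we have explicit values $\alpha=\tfrac{3\pi}{8}$ and $\beta=\tfrac{7\pi}{16}$ in hand. First, I would note that $\alpha<\beta$, so for each fixed number $k\in\{1,2,3\}$ of summands, the minimum of $\theta_1+\cdots+\theta_k$ with each $\theta_i\in\{\alpha,\beta\}$ is $k\alpha$ (all terms equal to $\alpha$) and the maximum is $k\beta$ (all terms equal to $\beta$). Hence it suffices to verify the stated intervals at these two extremes and observe that they trap the whole range.

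Second, I would just compute: for $k=1$ the range is $[\alpha,\beta]=[\tfrac{3\pi}{8},\tfrac{7\pi}{16}]$; for $k=2$ it is $[2\alpha,2\beta]=[\tfrac{3\pi}{4},\tfrac{7\pi}{8}]$; for $k=3$ it is $[3\alpha,3\beta]=[\tfrac{9\pi}{8},\tfrac{21\pi}{16}]$. These are the (closed versions of the) intervals in the statement; the open endpoints in the lemma are a slight weakening, so the assertion $\theta_1+\cdots+\theta_k\in(k\alpha-\delta,\,k\beta+\delta)$ is immediate (in fact even the closed form holds, but open suffices for the downstream disjointness argument).

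Third, to conclude the disjointness claim I would simply compare the right endpoint of each interval with the left endpoint of the next: $\tfrac{7\pi}{16}<\tfrac{3\pi}{4}$ since $\tfrac{7}{16}<\tfrac{12}{16}$, and $\tfrac{7\pi}{8}<\tfrac{9\pi}{8}$ trivially. Hence the three intervals are pairwise disjoint, and the value of a sum of at most three angles in $\{\alpha,\beta\}$ determines the number of terms. The inequalities $\tfrac{7\pi}{8}<\pi<\tfrac{9\pi}{8}$ give the $<\pi$ and $>\pi$ annotations.

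There is essentially no obstacle here: the lemma is a bookkeeping statement that packages a handful of fractions for repeated reuse in the case analysis to come. The only thing to double-check is that all comparisons use the correct $\epsilon=\tfrac{\pi}{16}$; I would do this by reducing everything to sixteenths of $\pi$ so the inequalities are between small integers.
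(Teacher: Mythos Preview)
Your approach is correct and is exactly what the paper relies on implicitly: the lemma is stated without proof, and the intended justification is precisely the direct evaluation of $k\alpha$ and $k\beta$ for $k\in\{1,2,3\}$ followed by comparison of endpoints, as you do.

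One small point worth tightening: the lemma's open intervals have the \emph{same} endpoints as your closed intervals $[k\alpha,k\beta]$, so read literally they exclude the extreme sums $k\alpha$ and $k\beta$ themselves. Calling the open form a ``weakening'' has it backwards, and your parenthetical $(k\alpha-\delta,\,k\beta+\delta)$ is not what the lemma actually writes. This is a cosmetic imprecision in the lemma's phrasing rather than a gap in your reasoning---the closed intervals you compute are what is genuinely needed, and the disjointness and $<\pi$, $>\pi$ conclusions follow exactly as you check.
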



\subsection{Edge Lengths}
\label{sec:edge lengths}

We design the edge lengths of the tweedledee and tweedledum
in Figures~\ref{fig:tweedledee} and~\ref{fig:tweedledum}
so that the near-vertical and near-horizontal edges
all have the same length, which we call~$1$,
and the total horizontal traversal is exactly~$4$.
Equivalently, we choose the sequence of edge lengths for either tweedle to be
\begin{align*}
  & \big\langle 2-\sin \beta +\cos \alpha, ~
  1, ~ 1, ~ 2(\sin \alpha+\cos \beta), ~
  1, ~ 1, ~
  2-\sin \beta +\cos \alpha \big\rangle
  \\
  ={} & \big\langle 2-\cos \epsilon +\sin 2\epsilon, ~
  1, ~ 1, ~ 2(\cos 2\epsilon+\sin \epsilon), ~
  1, ~ 1, ~
  2-\cos \epsilon +\sin 2\epsilon \big\rangle.
\end{align*}
Given our choice of $\epsilon = \tfrac{\pi}{16}$,
\begin{align*}
  \cos \epsilon &= \frac{1}{2}\sqrt{2+\sqrt{2+\sqrt{2}}}, &
  \sin \epsilon &= \frac{1}{2}\sqrt{2-\sqrt{2+\sqrt{2}}}, &
  \cos 2\epsilon &= \frac{1}{2}\sqrt{2+\sqrt{2}}, &
  \sin 2\epsilon &= \frac{1}{2}\sqrt{2-\sqrt{2}}
\end{align*}
are all radical expressions,
so all our edge lengths can be expressed by radical expressions as well.

\subsection{Forced Tiling Structure}

\begin{lemma} \label{lem:staple}
  The staple does not tile the plane (allowing or forbidding reflections).
\end{lemma}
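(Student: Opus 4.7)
The plan is to derive a simple Diophantine obstruction at every point of a hypothetical staple-only tiling. First I would catalogue the angular data of the staple from Table~\ref{tab:angles}: four convex vertices of measure $\beta=\tfrac{\pi}{2}-\epsilon=\tfrac{7\pi}{16}$ and one reflex vertex with interior measure $2\pi-2\alpha=\pi+4\epsilon=\tfrac{5\pi}{4}$. Reflecting a staple leaves this multiset of vertex angles unchanged, so the argument below handles the reflection-allowed and reflection-forbidden cases uniformly.

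Next, I would pick an arbitrary point $p$ in a purported tiling by staples and write down the identity that the sector angles of the tiles incident to $p$ sum to $2\pi$. Each incident staple contributes either an interior angle at a vertex it places at $p$ (namely $\tfrac{7\pi}{16}$ or $\tfrac{5\pi}{4}$), or, if $p$ lies in the interior of one of its edges, a flat sector of measure $\pi$. Letting $a,b,c\in\mathbb{Z}_{\ge 0}$ count the number of convex-vertex, reflex-vertex, and edge-interior incidences at $p$, and multiplying through by $16/\pi$, the angle identity becomes
\begin{equation*}
  7a+20b+16c=32.
\end{equation*}

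I would then dispatch this equation by a short case analysis. Since $20b\le 32$ forces $b\in\{0,1\}$, and for each such $b$ there are only a handful of non-negative pairs $(a,c)$ to check, a quick inspection shows that $(a,b,c)=(0,0,2)$ is the unique non-negative integer solution. That solution corresponds to $p$ sitting in the interior of a shared edge of exactly two staples, with no staple vertex at $p$. In particular, \emph{no} staple vertex can occur at any point of the tiling — yet every staple has five vertices that must sit somewhere. This contradiction proves the lemma.

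I do not anticipate any serious obstacle; the heart of the argument is a three-variable linear Diophantine equation with a trivially finite search space. The one subtlety worth flagging explicitly is that the count must include the flat-sector term $c\pi$, so that T-junctions (where a staple's vertex lands in the interior of another staple's edge) are ruled out together with ordinary vertex configurations.
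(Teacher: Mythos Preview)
Your argument is correct. The Diophantine equation $7a+20b+16c=32$ is set up properly from the staple's vertex angles ($\beta=\tfrac{7\pi}{16}$ convex, $2\pi-2\alpha=\tfrac{5\pi}{4}$ reflex) and the flat-edge contribution $\pi$, and your case check that $(0,0,2)$ is the only non-negative solution is accurate. The conclusion---that no point of the tiling can host a staple vertex, contradicting the existence of any staple---is sound. One cosmetic point: the phrase ``pick an arbitrary point $p$'' is slightly loose, since your equation only models points lying on tile boundaries; but since the contradiction is reached precisely at vertex points, this does not affect validity.

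The paper's own proof takes a more targeted route: it looks only at the reflex vertex, observes that its deficit $2\alpha=\tfrac{3\pi}{4}$ is smaller than $\pi$ (ruling out edges and other reflex vertices implicitly), invokes Lemma~\ref{lem:sums} to conclude that exactly two convex angles from $\{\alpha,\beta\}$ would be needed, and then notes $2\beta>2\alpha$. Your approach is more self-contained---it needs no appeal to Lemma~\ref{lem:sums} and handles convex vertices, reflex vertices, and T-junctions in one uniform equation---at the cost of a short integer case analysis. The paper's version is terser and dovetails with the $\alpha,\beta$ sum machinery reused in the subsequent lemmas, while yours is a clean standalone argument.
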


\begin{proof}
  The staple has convex vertices just of angle $\beta$
  and one reflex vertex of defect $2\alpha$.
  In any tiling of staples, every reflex vertex must have its defect $2\alpha$
  filled by some convex angles.
  But $2\alpha = \tfrac{3}{4}\pi$,
  so by Lemma~\ref{lem:sums}, it must be filled by exactly two convex angles.
  But $2\beta = \frac{7}{8}\pi > 2\alpha$,
  so two of the convex angles do not fit.
  %
  Thus it is impossible to exactly fill the deficit.
\end{proof}

\begin{lemma} \label{lem:staple+shuriken}
  The staple and shuriken do not tile the plane
  (allowing or forbidding reflections).
\end{lemma}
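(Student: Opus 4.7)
The plan is to follow the template of Lemma~\ref{lem:staple}: find one local feature whose surround cannot be filled using the given palette. Since staples alone fail to tile (by Lemma~\ref{lem:staple}), it suffices to show that no shuriken can appear in any hypothetical tiling by staples and shurikens. The feature I would target is the shuriken's anticorner: a reflex vertex with clean defect $\pi(1-1/(2n))$ and interior angle $\pi + \pi/(2n)$. Every shuriken has $4(n-2) \geq 12$ such anticorners (since $n \geq 5$), so ruling out even one yields the desired contradiction.

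First I would exclude any other reflex angle from appearing at an anticorner vertex. Each reflex interior angle in the palette---$5\pi/4$ (staple), $9\pi/8$ (shuriken $2\beta$), $25\pi/16$ (shuriken $\beta$), and $\pi + \pi/(2n)$ itself---strictly exceeds $\pi$, so combining the anticorner's $\pi + \pi/(2n)$ with any other reflex already overshoots $2\pi$ (and similarly for any tile whose edge merely passes through the point, contributing $\pi$). Hence the remaining $\pi - \pi/(2n)$ must be filled entirely by convex vertex angles drawn from $\{\alpha, \beta, \pi/n\}$.

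Writing such a fill as $a$ copies of $\alpha$, $b$ copies of $\beta$, and $p$ copies of $\pi/n$, Lemma~\ref{lem:sums} forces $a + b \leq 2$ (three terms from $\{\alpha,\beta\}$ already sum to more than $\pi$), so $2a+b \leq 4$. The sum decomposes into a clean part $(a+b)\pi/2 + p\pi/n$ and a non-clean part $-(2a+b)\epsilon = -(2a+b)\pi/16$. For the total to equal the clean value $\pi - \pi/(2n)$, the non-clean part must itself be clean, and by the same divisibility bookkeeping as Property~\ref{lem:clean} (using $n$ odd) this forces $8 \mid (2a+b)$. Together with $2a+b \leq 4$, this pins down $a = b = 0$, whereupon the equation collapses to $p\pi/n = (2n-1)\pi/(2n)$, giving $p = (2n-1)/2$---not an integer, since $n$ is odd.

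This contradiction rules out any legal fill of an anticorner, hence rules out any shuriken appearing in the tiling; the tiling would then be by staples alone, contradicting Lemma~\ref{lem:staple}. The main obstacle I anticipate is the divisibility step above: the hypothesis that $n$ is odd (arranged earlier so that $\epsilon = \pi/16$ never lies in the clean lattice) is precisely what prevents $2a+b$ from reaching $8$ within the tight bound $2a+b \leq 4$, leaving no consistent fill; care is needed here to handle both the non-edge-to-edge case and the marginal $(a,b)=(0,0)$ case cleanly.
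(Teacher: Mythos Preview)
Your proposal is correct and follows essentially the same approach as the paper's proof: both reduce to showing the anticorner defect $\pi(1-\tfrac{1}{2n})$ cannot be filled, use Lemma~\ref{lem:sums} to cap the number of $\alpha,\beta$ contributions at two, invoke cleanness to force that number to zero, and finish with the parity observation that $\tfrac{\pi}{n}$-copies alone cannot hit an odd multiple of $\tfrac{\pi}{2n}$. Your version is slightly more explicit in two places---you rule out reflex angles and pass-through edges at the anticorner directly (the paper leaves this implicit since the defect is $<\pi$), and you unpack Property~\ref{lem:clean} into a concrete $8\mid(2a+b)$ divisibility---but these are stylistic rather than structural differences; one minor slip is that $(2n-1)/2$ fails to be an integer for all $n$, not just odd $n$.
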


\begin{proof}
  By Lemma~\ref{lem:staple}, any tiling with staples and shurikens has a shuriken.
  Any shuriken has a reflex \CLEAN{anticorner} of defect
  $\CLEAN{\pi(1 - \tfrac{1}{2n})}$,
  which is clean and less than~$\pi$.
  This defect must be filled by some convex angles, of which we have three:
  $\alpha, \beta, \CLEAN{\tfrac{\pi}n}$.
  By Lemma~\ref{lem:sums}, this defect can be filled by at most two angles
  $\in \{\alpha,\beta\}$.
  But by Property~\ref{lem:clean},
  summing one or two of these angles is not clean,
  while the remaining convex angle $\CLEAN{\tfrac{\pi}n}$
  and the target sum $\CLEAN{\pi(1 - \tfrac{1}{2n})}$ are.
  Thus we cannot use any angles $\in \{\alpha,\beta\}$ to fill the
  deficit, leaving only the convex angle $\CLEAN{\tfrac{\pi}n}$.
  But $\CLEAN{\tfrac{\pi}n}$ is an even multiple of $\CLEAN{\tfrac{\pi}{2n}}$,
  while the target sum is an odd multiple of $\CLEAN{\tfrac{\pi}{2n}}$.
  Thus it is impossible to exactly fill the deficit.
\end{proof}

\begin{lemma} \label{lem:staple+shuriken+wheel}
  Any tiling of the plane with staples, shurikens, and wheels
  (allowing or forbidding reflections)
  must consist of an infinite square grid of wheels
  that corresponds to a Wang tiling.
\end{lemma}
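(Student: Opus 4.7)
The plan is to use a clean/non-clean angle dichotomy (Property~\ref{lem:clean}) to force a globally rigid square grid of wheels and shurikens at the clean vertices, and then use the non-clean tweedle/notch/staple matching along each wheel edge to recover a Wang tiling.

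First I would pin down the local configuration at every wheel corner. The corner's interior $\pi(1-\tfrac{1}{2n})$ is clean, so the deficit $\pi(1+\tfrac{1}{2n})$ is clean. Extending Property~\ref{lem:clean}, I would verify that since $n$ is odd, $a\alpha + b\beta$ is non-clean for every $(a,b)$ with $0 \le a+b \le 2$ and $(a,b) \ne (0,0)$ by checking $6a+7b\pmod 8$ in units of $\tfrac{\pi}{16}$. By Lemma~\ref{lem:sums} at most two $\{\alpha,\beta\}$ terms fit in the deficit, and each non-clean reflex interior angle ($2\pi-\alpha$, $2\pi-\beta$, $2\pi-2\alpha$, $2\pi-2\beta$) already exceeds $\pi(1+\tfrac{1}{2n})\le\tfrac{11}{10}\pi$ for $n\ge 5$. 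So every angle filling the deficit must be clean. The only clean convex angles are $\tfrac{\pi}{n}$ (shuriken tip) and $\pi(1-\tfrac{1}{2n})$ (wheel corner), and the only clean reflex interior angle that fits is $\pi(1+\tfrac{1}{2n})$ (shuriken anticorner). Solving $2c+d(2n-1)=2n+1$ over nonnegative integers yields exactly two options: (a)~a single shuriken anticorner fills the deficit, or (b)~one shuriken tip plus one wheel corner fills it.

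Second I would propagate this local rigidity. Parallel analyses at a shuriken anticorner (deficit $\pi(1-\tfrac{1}{2n})$, uniquely filled by one wheel corner) and at a shuriken tip (deficit $\pi(2-\tfrac{1}{n})$, whose only physically realizable clean filling is two wheel corners) impose the following structure. A shuriken consists of four concave chains of $n-1$ edges separated by four convex tips; placing one anticorner of such a chain at a wheel corner pins the whole chain along $n-1$ consecutive wheel edges with its two end-tips each meeting a second wheel corner via case~(b). Iterating around each wheel, four shurikens cover $4(n-1)$ of its $4n$ edges and leave the four $90^\circ$-spaced edges as wheel-wheel boundaries with four adjacent wheels. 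This yields a square lattice of wheels with shurikens at the diagonal positions.

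Third, along every wheel-shuriken boundary the non-clean structure must match. A local angle count at each notch shows its internal profile $\alpha,\beta,2\beta,\beta,\alpha$ can be absorbed only by one tweedle (dee or dum) together with one staple (Figure~\ref{fig:threefigs}): the staple's four convex $\beta$'s and reflex $2\alpha$ defect are exactly complementary. So the bits on all wheel sides facing shurikens are hidden regardless of their encoding, and all matching constraints are carried by the four wheel-wheel boundaries of each wheel. On those, the construction (Figure~\ref{fig:twiddleword}) already shows that two opposing tweedle sequences align iff the encoded glues have equal value and opposite sign, and the $00/01$ end-markers together with the disjointness of N/S and E/W glue alphabets rule out rotational or reflectional mismatches between neighbors. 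Hence the wheel-grid is labeled by a valid Wang tiling, completing the correspondence with Lemma~\ref{lem:build Wang}.

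The main obstacle will be the propagation step: I must rule out ``twisted'' global configurations such as many shuriken tips congregating at one vertex, a shuriken chain attaching to a wheel offset by one or more corners, or exotic cancellations between convex non-clean and reflex non-clean angles at a shuriken tip (whose deficit $\approx 2\pi$ is too large to force the ``only clean angles'' shortcut used at wheel corners). Each such case should fall to a refined clean-angle sum analysis combined with a geometric check that no consistent local neighborhood extends.
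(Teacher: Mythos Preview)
Your approach is essentially the paper's: analyze the clean deficit $\pi(1+\tfrac{1}{2n})$ at a wheel corner, conclude that only a shuriken anticorner or (shuriken tip $+$ wheel corner) can fill it, and then propagate along edges to force the alternating wheel/shuriken grid. Two remarks:

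\textbf{Your ``main obstacle'' is not one.} You worry about a standalone angle analysis at a shuriken tip, where the deficit $\pi(2-\tfrac1n)$ is close to $2\pi$ and could in principle absorb three or four $\{\alpha,\beta\}$ terms (indeed $4\alpha=\tfrac{3\pi}{2}$ is clean when $n=5$, so the ``only clean angles'' shortcut genuinely fails there). The paper never does this analysis. Once the wheel-corner analysis is in hand, the paper \emph{follows edges}: the wheel edge adjacent to that corner must be glued to the shuriken's concave-chain edge, so the next wheel corner meets the next shuriken anticorner, and the wheel-corner analysis applies again. Iterating walks the entire $(n{-}1)$-edge chain along the wheel; when the chain ends at a tip, one wheel corner is already present, so the \emph{remaining} deficit is only $\pi(1-\tfrac{1}{2n})$, and the same $<\pi$, clean, parity argument (as in Lemma~\ref{lem:staple+shuriken}) forces a second wheel corner. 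You already state this chain-pinning; just rely on it and drop the independent tip-vertex case analysis.

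\textbf{Minor gaps.} You should invoke Lemma~\ref{lem:staple+shuriken} to guarantee a wheel exists before starting. In your Diophantine count $2c+d(2n-1)=2n+1$ you should also allow a flat-edge contribution of $\CLEAN{\pi}=2n\cdot\tfrac{\pi}{2n}$ and rule it out (it leaves $\tfrac{\pi}{2n}$, too small for any available angle); the paper does this explicitly.
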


\begin{proof}
  By Lemma~\ref{lem:staple+shuriken}, any tiling with staples, shurikens, and
  wheels has a wheel.
  Any wheel has a convex \CLEAN{corner} of angle $\CLEAN{\pi(1 - \tfrac{1}{2n})}$,
  which has a deficit of $\CLEAN{\pi(1 + \tfrac{1}{2n})}$,
  which is clean.
  We claim that this deficit can be filled in exactly two ways:
  one reflex \CLEAN{anticorner} of a shuriken of deficit $\CLEAN{\pi(1 - \tfrac{1}{2n})}$,
  or one convex \CLEAN{tip} vertex of a shuriken of angle $\CLEAN{\tfrac{\pi}n}$
  and one convex \CLEAN{corner} of a wheel of angle $\CLEAN{\pi(1 - \tfrac{1}{2n})}$.

  Because the target deficit is $> \pi$, we need to consider both convex and
  reflex angles as well as flat edges (of angle $\CLEAN{\pi}$) for possible fillings.
  First consider the unclean angles starting with reflex angles of deficit $\alpha, \beta, 2\alpha, 2\beta$. But $\alpha<\beta<2\alpha<2\beta = \tfrac{7}{8}\pi = \pi(1-\tfrac{1}{8}) < \CLEAN{\pi(1-\tfrac{1}{2n})}$ for any $n\geq 5$.
  Thus none of the unclean reflex angles can be used to fill the deficit.
  The only remaining reflex angle that can fill the deficit is a shuriken \CLEAN{anticorner} of deficit $\CLEAN{\pi(1 - \tfrac{1}{2n})}$, and
%
%
  if we use that angle, we are done.

  Next consider the unclean convex angles $\alpha$ and $\beta$. 
  Because $\CLEAN{\pi(1 + \tfrac{1}{2n})} < \tfrac{9}{8}\pi$ for any $n \geq 5$, by Lemma~\ref{lem:sums} this deficit can be filled by at most two angles $\in \{\alpha,\beta\}$.
  But by Lemma~\ref{lem:clean},
  summing one or two of these angles is not clean,
  while the remaining convex angles
  $\CLEAN{\tfrac{\pi}n}, \CLEAN{\pi(1-\tfrac{1}{2n})}$,
  flat edge of angle $\CLEAN{\pi}$,
  and the target sum $\CLEAN{\pi(1 + \tfrac{1}{2n})}$ are all clean.
  Thus we cannot use any angles $\in \{\alpha,\beta\}$ to fill the deficit.

  This leaves only the flat edge of angle $\CLEAN{\pi}$ and two convex angles: the shuriken \CLEAN{tip} of angle $\CLEAN{\tfrac{\pi}n}$ and the wheel \CLEAN{corner} of angle $\CLEAN{\pi(1-\tfrac{1}{2n})}$.
  If we used only copies of the \CLEAN{tip} $\CLEAN{\tfrac{\pi}n}$,
  we would get an even multiple of $\CLEAN{\tfrac{\pi}{2n}}$,
  but the target sum $\CLEAN{\pi(1 + \tfrac{1}{2n})}$
  is an odd multiple of $\CLEAN{\tfrac{\pi}{2n}}$.
  Using a flat edge $\CLEAN\pi$ would leave a gap of angle $\CLEAN{\tfrac{\pi}{2n}}$, which is too small to fill with any of the available angles.
  Finally, using the wheel corner $\CLEAN{\pi(1-\tfrac{1}{2n})}$ (gluing a second wheel to the first) will leave a gap of angle $\CLEAN{\tfrac{\pi}{n}}$, which can only be filled by the \CLEAN{tip} angle $\CLEAN{\tfrac{\pi}{n}}$.


  Thus we have shown that, in all cases, the wheel's convex angle $\CLEAN{\pi(1 - \tfrac{1}{2n})}$ has to be matched with an \CLEAN{anticorner} or a \CLEAN{tip} of the shuriken.
  Furthermore, an edge of the wheel adjacent to a \CLEAN{corner} must be glued to an edge of the concave chain adjacent to a \CLEAN{tip} or \CLEAN{anticorner} of the shuriken.
  We can follow the path of both the wheel and the concave chain of the shuriken and observe that the $n-2$ \CLEAN{anticorners} and the two \CLEAN{tips} of that concave chain will be glued to consecutive \CLEAN{corners} of the wheel. 

At the end of the concave chain of the shuriken, we find a \CLEAN{tip} glued to the \CLEAN{corner} of a wheel, leaving a deficit of $\CLEAN{\pi(1 - \tfrac{1}{2n})}$ to fill. As shown in Lemma~\ref{lem:staple+shuriken}, this can only be filled by another wheel \CLEAN{corner}. Therefore, the surround of a wheel must be filled by an alternating sequence of wheels and shurikens (omitting the small gaps left between the tweedles and the notches, which are filled by staples). 

Pick one wheel $T$ in the tiling, translate the tiling so that its center\footnote{Define the \defn{center} of a wheel to be the center of gravity of its $4n$-gon, without adornments.} is at the origin $(0,0)$, and rotate the plane so that edge $i$ of its $4n$-gon is glued to another wheel at a horizontal edge of the $4n$-gon, with $T$ below that edge.
Also rescale so that the width of a wheel's $4n$-gon (the distance between parallel edges, without adornments) is $1$.
Then the wheel adjacent to edge $i$ of $T$ will have its center at coordinate $(0,1)$. 
Following the boundary of $T$ clockwise, we find a shuriken glued to the edges $i+1,\ldots,i+n-1$ of $T$, and a wheel glued to the edge $i+n$ of $T$. 
The wheel glued to edge $i+n$ has its center at coordinates $(1,0)$. Continuing this reasoning, we find that the tiling is a grid of wheels with centers on all integer lattice points. The shurikens and staples ensure that all spaces are filled, and the tweedles ensure that the tiles are compatible as in a Wang tiling. Therefore, for any tiling with the three tiles, we can produce a tiling of the plane with the original Wang tiles.
\end{proof}

\subsection{Undecidability}

The angles of the polygons, as listed in Table~\ref{tab:angles},
are all rational multiples of~$\pi$.
The edge lengths of the polygons, as listed in Section~\ref{sec:edge lengths},
can all be expressed as radical expressions.
We call polygons with such angles and edge lengths \defn{nice}.

\begin{theorem}
  Given $n$ Wang tiles, we can construct three nice polygons
  that can tile the plane (allowing or forbidding reflections)
  if and only if the Wang tiles can tile the plane.
\end{theorem}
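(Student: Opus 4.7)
The plan is to assemble this final theorem directly from the machinery built in Sections~3.1--3.4, since each direction of the biconditional and each requirement (niceness, if direction, only-if direction) has essentially already been established.

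First, I would verify the \emph{niceness} of the three constructed polygons. All the angles appearing in the wheel, shuriken, and staple are listed in Table~\ref{tab:angles}: the clean angles $\tfrac{\pi}{n}$ and $\pi(1-\tfrac{1}{2n})$ are rational multiples of $\pi$ by inspection, and $\alpha = \tfrac{3\pi}{8}$ and $\beta = \tfrac{7\pi}{16}$ were explicitly defined as such (together with their doubled forms $2\alpha$ and $2\beta$). For edge lengths, the only nontrivial lengths come from the tweedles in the wheel boundary and the matching notches/staple; Section~\ref{sec:edge lengths} exhibits these lengths as explicit radical expressions in $\cos\epsilon$, $\sin\epsilon$, $\cos 2\epsilon$, $\sin 2\epsilon$, each of which is a radical expression for $\epsilon = \tfrac{\pi}{16}$. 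The concave chains of the shuriken share edge lengths with the $4n$-gon of the wheel (unit length between adornments), and the staple's edges are unit as well, so niceness holds throughout.

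For the forward (``if'') direction, I would simply invoke Lemma~\ref{lem:build Wang}: given any Wang tiling, placing rotated/translated copies of the wheel in the positions of the Wang tiles produces a partial tiling with matching glues (by the binary-encoding analysis of Figure~\ref{fig:twiddleword}), and the remaining space between wheels is filled by shurikens (at unused-tile sides of each wheel) and staples (at each tweedle/notch interface). This uses no reflections, so the conclusion holds whether or not reflections are allowed.

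For the reverse (``only if'') direction, I would apply Lemma~\ref{lem:staple+shuriken+wheel}: any plane tiling with the three polygons (regardless of reflection policy) must be an infinite square grid of wheels whose glue-matching along horizontal/vertical wheel-wheel interfaces corresponds to a valid Wang tiling. Reading off the index $i$ of the up-pointing wheel side at each grid position yields the Wang tile at that position, and matching glues between adjacent wheels translates exactly to matching Wang-tile glues. The ``forbidding reflections'' case is automatic; the ``allowing reflections'' case is handled by the remark in Section~3.1 that a reflected wheel produces reflected glue encodings ($01\ldots 11$ or $11\ldots 10$) that are incompatible with any unreflected glue, so reflected wheels cannot appear in a tiling.

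There is essentially no hard step remaining, since the combinatorial and geometric work has already been done; the only thing to be careful about is explicitly stating that both directions, and both reflection regimes, are covered by the already-established lemmas, and that the definition of ``nice'' is satisfied by the construction as built (rather than as a separate post-hoc modification).
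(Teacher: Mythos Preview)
Your proposal is correct and follows essentially the same approach as the paper: the paper's proof simply says ``Combine Lemma~\ref{lem:build Wang} (if) and Lemma~\ref{lem:staple+shuriken+wheel} (only if),'' with the niceness of the angles and edge lengths noted in the paragraph immediately preceding the theorem. Your write-up is more detailed in spelling out why niceness holds and how reflections are handled, but the structure is identical.
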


\begin{proof}
  Combine Lemma~\ref{lem:build Wang} (if) and
  Lemma~\ref{lem:staple+shuriken+wheel} (only if).
\end{proof}

Combining this with membership of the tiling problem in co-RE (to be proved in the next section), we obtain:

\begin{corollary}[Nice form of Corollary~\ref{cor:intro:complete}]
  Given three nice polygons in the plane, deciding whether they tile the plane is co-RE-complete and thus undecidable.
\end{corollary}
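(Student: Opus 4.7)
The plan is to prove the two matching bounds separately and then combine them. For co-RE-hardness, I would reduce from the classical unsigned translation-only Wang tiling problem, which is co-RE-complete by Berger's theorem. Given an instance of Wang tiling with $n$ tiles, the previous theorem in this section constructs three nice polygons (wheel, shuriken, staple) that tile the plane if and only if the Wang tiles do; this is a computable many-one reduction, so co-RE-hardness transfers. One small thing to verify is that the reduction is genuinely effective: the angles are rational multiples of $\pi$ with denominators determined by $n$, and the edge lengths (Section~\ref{sec:edge lengths}) are explicit radical expressions independent of $n$ together with $O(\log n)$ repetitions of tweedle gadgets per side, all computable from the Wang instance in finite time.

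For co-RE membership, I would directly invoke Theorem~\ref{thm:intro:coRE} (proved in Section~\ref{sec:co-RE}), which states that tiling with a finite set of polygon prototiles whose angles and edge lengths are specified by computable numbers is in co-RE. Nice polygons are a strict special case: rational multiples of $\pi$ are computable (one can output the first $k$ bits of $a\pi/b$ from the known series for $\pi$), and constant-size radical expressions over $\mathbb{Q}$ are computable (evaluate the expression tree to the needed precision using interval arithmetic with square roots). Hence the three nice polygons produced by our construction form a valid input to the general algorithm of Theorem~\ref{thm:intro:coRE}, so the problem lies in co-RE.

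Putting the two directions together yields co-RE-completeness, and undecidability follows because co-RE-complete problems are not decidable (otherwise RE would equal co-RE, contradicting the undecidability of the halting problem). The only mild obstacle I foresee is making precise the computability claim for the specific angle and length expressions used here, but since both $\pi$ and nested square roots are standard examples of computable reals and our expressions involve only finitely many arithmetic operations and square roots, this is routine rather than substantive.
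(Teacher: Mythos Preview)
Your proposal is correct and follows essentially the same route as the paper: the paper's proof is a one-line combination of the preceding reduction theorem (co-RE-hardness via Wang tiling) with the co-RE membership result of Section~\ref{sec:co-RE}, and that is exactly what you do. Your added remarks on effectiveness of the reduction and computability of nice angles/lengths are correct and make explicit what the paper leaves implicit.
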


In a recent paper, Greenfeld and Tao
\cite{greenfeld2024undecidabilitytranslationalmonotilings}
consider a generalized version of the tiling problem, where only a periodic subset of space needs to be covered by the tiles. In our reduction, the union of the shurikens form a periodic subset of $\mathbb R^2$, and so does its complement. Thus, tiling the complements of the shurikens with the two remaining tiles is undecidable:

\begin{corollary}[Stronger form of Corollary~\ref{cor:intro:two}]
  Given two nice polygons, deciding whether they tile a periodic subset of the plane is co-RE-complete and thus undecidable.
\end{corollary}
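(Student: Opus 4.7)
The plan is to reduce from the three-polygon tiling problem established just above, reusing the wheel and the staple but dropping the shuriken. The key observation, already flagged in the paragraph preceding this statement, is that Lemma~\ref{lem:staple+shuriken+wheel} forces the positions of the shurikens in any valid three-polygon tiling to coincide (after a global isometry) with a fixed doubly periodic pattern: the wheels land on the integer lattice and the shurikens fill prescribed cross-shaped regions between them.

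First I would fix the canonical forced tiling supplied by Lemma~\ref{lem:staple+shuriken+wheel} and let $S \subseteq \mathbb{R}^2$ be the plane with the union of all shurikens (and only the shurikens) of that canonical tiling removed. Since the shuriken placements are invariant under the integer translation lattice, $S$ is a periodic subset of $\mathbb{R}^2$, and $S$ is computable from the input $n$ Wang tiles (one extracts explicit polygonal coordinates directly from the construction of Section~\ref{sec:three}). The two prototiles we hand to the periodic-tiling problem are exactly the wheel and the staple.

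The equivalence then runs in both directions. Given any tiling of the full plane by wheels, shurikens, and staples, Lemma~\ref{lem:staple+shuriken+wheel} implies that, after a global isometry, the shurikens occupy precisely the removed regions of $S$, so deleting them leaves a tiling of $S$ by wheels and staples alone. Conversely, given any tiling of $S$ by wheels and staples, dropping a shuriken into each removed region yields a tiling of $\mathbb{R}^2$ by the three prototiles. Composing with the reduction from signed Wang tiling in the theorem immediately above establishes co-RE-hardness of tiling a given periodic subset with two nice polygons.

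For the matching co-RE upper bound, I would invoke Theorem~\ref{thm:intro:coRE}: the compactness-plus-enumeration algorithm described in Section~\ref{sec:co-RE} for tiling the plane with preplaced tiles goes through without change when the target is any computable periodic subset, because the algorithm only ever needs to decide local compatibility inside growing bounded regions. I expect the main obstacle is not the reduction itself, which is essentially a few lines once Lemma~\ref{lem:staple+shuriken+wheel} is in hand, but rather the bookkeeping of verifying that ``periodic subset'' in Corollary~\ref{cor:intro:two} is flexible enough to accommodate the nonconvex, multiply connected complement of the shuriken lattice, and that the semidecision procedure of Section~\ref{sec:co-RE} generalizes cleanly and uniformly to such a target.
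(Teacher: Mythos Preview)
Your proposal is correct and matches the paper's approach: the paper's argument is exactly the observation (stated in the paragraph preceding the corollary) that the shurikens occupy a fixed periodic set in any valid tiling, so removing them from the target and from the prototile set preserves tileability in both directions. Your write-up is in fact more careful than the paper's, which leaves the co-RE upper bound for periodic targets implicit; your flagged concern about extending the Section~\ref{sec:co-RE} semidecision procedure to periodic targets is legitimate bookkeeping that the paper does not spell out either.
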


By plugging in Wang tiles that simulate a universal Turing machine,
such as Robinson's 36 Wang tiles \cite{Robinson1971}, we also obtain
undecidability of tiling completion with three specific tiles:

\begin{corollary}[Stronger form of Corollary~\ref{cor:intro:completion}]
  There exist three fixed tiles for which completing a given finite partial tiling is co-RE-complete and thus undecidable.
\end{corollary}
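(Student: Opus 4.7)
The plan is to reduce from the tiling completion problem for a fixed universal set of Wang tiles, such as Robinson's $36$ Wang tiles \cite{Robinson1971} (or any smaller universal construction obtained from a small universal Turing machine). Since that set $W$ is fixed, applying the construction of Section~\ref{sec:three} to $W$ produces three fixed nice polygons: a wheel $P_W$, a shuriken $P_S$, and a staple $P_T$. These three polygons are the fixed prototiles witnessing the corollary.

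Given a finite Wang-tile partial tiling $\pi$ for $W$, consisting of Wang tiles $t_1,\ldots,t_m$ placed at integer lattice positions $(x_1,y_1),\ldots,(x_m,y_m)$, I would build a finite polygon partial tiling $\pi'$ by placing, for each $i$, one copy of the wheel $P_W$ centered at $(x_i,y_i)$ (using the scaling from the proof of Lemma~\ref{lem:staple+shuriken+wheel}) and rotated so that its north/east/south/west edges represent the glues of $t_i$. Note $\pi'$ is finite (exactly $m$ wheels, no shurikens or staples needed).

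Correctness follows by combining the two directions of the main reduction. If $\pi$ extends to a full Wang tiling $\tau$ of the plane, then Lemma~\ref{lem:build Wang} converts $\tau$ into a full tiling $\tau'$ of the plane using $P_W,P_S,P_T$ whose wheels sit precisely at the integer lattice and whose orientations encode~$\tau$; in particular $\tau'$ contains the wheels of $\pi'$ in the correct positions and orientations, so $\tau'$ completes $\pi'$. Conversely, any completion $\tau'$ of $\pi'$ is a tiling of the plane by the three polygons, so Lemma~\ref{lem:staple+shuriken+wheel} forces $\tau'$ to consist of a square grid of wheels corresponding to a Wang tiling $\tau$ of $W$; by construction, $\tau$ must agree with $\pi$ on each preplaced wheel (placement and orientation), so $\tau$ completes $\pi$. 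Hardness of tiling completion for~$W$ is co-RE-hard, hence so is completion for the three polygons. Membership in co-RE is immediate from Theorem~\ref{thm:intro:coRE}, giving co-RE-completeness.

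The main subtlety will be verifying that placing wheels alone (without accompanying shurikens or staples) suffices to force the intended correspondence. This is essentially a consequence of Lemma~\ref{lem:staple+shuriken+wheel}, which shows that a single wheel in any completed tiling already pins down the global grid and orientation, and that all neighbor wheels glue with matching signed glues; thus even a single preplaced wheel would already fix the lattice, and multiple preplaced wheels either are globally consistent (in which case the Wang-tile picture is preserved) or else create an immediate local mismatch in the polygon tiling, matching the semantics of Wang-tile completion. A minor bookkeeping point is that Robinson's construction is classically stated with unsigned translation-only Wang tiles, so one first applies the signed-free conversion from Section~2 to obtain a fixed signed free set before feeding it into the three-polygon construction.
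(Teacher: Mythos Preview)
Your proposal is correct and follows the same approach as the paper, which dispatches this corollary in a single sentence (``By plugging in Wang tiles that simulate a universal Turing machine, such as Robinson's 36 Wang tiles\dots''). You supply considerably more detail than the paper does---in particular, spelling out the map from a Wang partial tiling to a wheel-only partial tiling, invoking Lemmas~\ref{lem:build Wang} and~\ref{lem:staple+shuriken+wheel} for the two directions, and noting the need to first pass through the signed-free conversion of Section~2---all of which is sound and matches the intended argument.

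One small point worth tightening: Theorem~\ref{thm:intro:coRE} (and its precise form, Theorem~\ref{thm:tiling-co-RE}) is stated for a \emph{connected} preplacement (a patch). Your wheel-only preplacement $\pi'$ is disconnected even when the Wang tiles of $\pi$ are contiguous, since adjacent wheels are separated by shurikens. This is easily repaired---also preplace the shurikens and staples between adjacent wheels to form a connected patch---and the paper itself glosses over the same issue, so it does not undermine your argument.
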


\section{Membership in Co-RE}
\label{sec:co-RE}

The previous sections show the co-RE-hardness of the tiling problem for three given tiles, and the tiling completion problem for three fixed tiles.
We counterbalance these intractability results by showing that the tiling problem and tiling completion problem are in co-RE, for any finite set of prototiles.

\subsection{Model: Computable Polygons}

To make these positive results as strong as possible, we use the weakest possible model of computation.
We use a standard Turing machine, and
represent polygons by their sequences of angles and edge lengths
(equivalently, instructions in the Logo/Turtle graphics language),
which need not be given explicitly but can be computed to any desired precision.
More precisely, we assume the input polygons are ``computable'' in the following sense:

\begin{definition}
  A real number $a$ is \defn{computable} \cite{Brattka2008} if there is a Turing machine $T_a$ that, given a natural number $n$, outputs an integer $T_a(n)$ such that $\tfrac{T_a(n)-1}n \leq a \leq \tfrac{T_a(n)+1}n$.
  A polygon is \defn{computable} if it is promised to be simple and closed, and its $n$ angles and edge lengths are computable.
\end{definition}

Computability is likely the most general representation of real numbers that is still usable for our problem. Computable numbers include all rational numbers, algebraic numbers, and transcendental numbers that can be computed to any desired precision. In particular they are closed under addition, subtraction, multiplication, division, integer roots and even trigonometric functions:

\begin{theorem}[{\cite[Theorem 4.14]{Brattka2008}}]\label{thm:computable}
  If $x$, $y$, and $z$ are computable real numbers with $z>0$, then $x+y$, $x-y$, $xy$, $x/z$, $|x|$, $\min(x,y), \max(x,y)$, $\exp(x), \sin(x), \cos(x)$, $\log(z)$, and $\sqrt{z}$ are computable as well.
\end{theorem}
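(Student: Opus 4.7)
The plan is to construct, for each listed operation, an explicit Turing machine that takes the Turing machines computing the inputs together with a precision parameter $n$, queries them at appropriately higher precision, and outputs an integer satisfying the computability bound. It is convenient to first observe that the given definition is equivalent (by a routine rescaling lemma) to the form ``there is an algorithm producing a rational $q_n$ with $|q_n - a| \le 2^{-n}$,'' and I would work throughout in this form. I would begin with $x+y$ and $x-y$, where one queries $T_x$ and $T_y$ at precision $2^{-(n+1)}$ and adds or subtracts, so that the triangle inequality yields total error at most $2^{-n}$. The cases $|x|$, $\min(x,y)$, and $\max(x,y)$ are handled identically because each of these functions is $1$-Lipschitz in its arguments.

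The multiplicative operations require a preliminary rough size bound. For $xy$, I would query $T_x(1)$ and $T_y(1)$ to obtain integer upper bounds $M_x, M_y$ on $|x|,|y|$, then query both machines at precision $2^{-n}/(M_x+M_y+1)$; the elementary estimate $|xy - x'y'| \le |x|\,|y-y'| + |y'|\,|x-x'|$ yields the target error. For $x/z$ with $z > 0$, the plan is first to compute a positive rational lower bound $\ell$ on $z$: because $z > 0$ and the successive queries $T_z(n)$ converge to $z$, the first $n$ for which $T_z(n) \ge 2$ witnesses $z \ge 1/n$, and this loop must terminate. Once $\ell$ is in hand, the estimate $|x/z - x'/z'| \le |x|\,|z-z'|/(zz') + |x-x'|/|z'|$ reduces the problem to already-handled arithmetic at a precision that depends computably on $\ell^{-1}$ and $M_x$.

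For the transcendental operations, the plan is to truncate standard convergent series with explicit tail bounds and then reduce the truncation to the arithmetic cases above. For $\exp(x)$, $\sin(x)$, $\cos(x)$, I would use the Taylor series at $0$: after obtaining an integer bound $M$ on $|x|$, the tail past degree $N$ is bounded by $(M+1)^{N+1} e^{M+1}/N!$, which falls below $2^{-n-1}$ for an explicit, computable choice of $N$, and the partial sum is a polynomial evaluated on a computable number via the already-handled operations. For $\sqrt{z}$ with $z>0$, I would compute a lower bound $\ell$ as above and use $|\sqrt{z}-\sqrt{z'}| \le |z-z'|/(2\sqrt{\ell})$ to choose the query precision. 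For $\log(z)$ with $z>0$, I would first rescale by repeated multiplication or division by $2$ until the argument lies in $[1,2]$, each step shifting the logarithm by the known constant $\pm\log 2$, and then apply the Taylor series of $\log(1+u)$ for $|u|\le 1$, whose tail is geometrically small.

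The main obstacle, and the reason the theorem is more than a formality, is the absence of an effective modulus of continuity in the raw black-box definition: given only $T_a$, one cannot semi-decide whether $a > 0$, $a = 0$, or $a < 0$. This is precisely what makes division, $\sqrt{\,\cdot\,}$, and $\log$ delicate, since all three blow up or become non-Lipschitz as the input approaches zero. The hypothesis $z > 0$ is what lets us evade the obstruction: it guarantees that some finite query to $T_z$ must eventually return a value witnessing a quantitative separation of $z$ from zero, and executing that search is the only unbounded loop used in the whole construction. Once that separation is available, the error-propagation estimates above are routine, and stitching them together produces the desired Turing machines for each listed operation.
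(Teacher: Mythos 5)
The paper does not prove this statement at all: it is imported verbatim as Theorem~4.14 of the cited reference \cite{Brattka2008}, so there is no in-paper argument to compare against. Your proposal is essentially the standard textbook proof of these closure properties, and its overall structure is sound: the rescaling to dyadic precision, the Lipschitz treatment of $+,-,|\cdot|,\min,\max$, the rough-size bounds for $xy$, the unbounded-but-terminating search for a positive rational lower bound on $z$ (correctly identified as the only place the hypothesis $z>0$ and an unbounded loop are needed), and truncated series with computable tail bounds for the transcendental functions. Two details are glossed over and would need the usual repairs in a full write-up. First, for $\log$, the instruction ``rescale until the argument lies in $[1,2]$'' is not directly executable, since exact comparison of a computable real against the endpoints is undecidable; the standard fix is to use approximate comparisons that only guarantee landing in an overlapping interval such as $[3/4,3/2]$, which suffices. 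Second, the tail of the series for $\log(1+u)$ is not geometrically small at $|u|=1$ (it is only $O(1/N)$), so either the target interval must be bounded away from the endpoints as above or one should use the $\log\frac{1+v}{1-v}$ expansion; either way the tail bound remains computable and the argument goes through. With those routine adjustments your construction is a correct, self-contained proof of the cited theorem.
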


Note that some basic operations can be intractable for computable numbers. For instance, determining whether a computable number is zero, or the equality between two computable numbers, is undecidable (in fact, co-RE-complete). To see this, given a Turing machine $T$, define a number $a_T$ to have its $n$th bit after the binary point be $1$ if $T$ halts after $n$ steps, and $0$ otherwise. The number $a_T$ is computable, and is zero if and only if $T$ does not halt. 
For our problem, we can show a similar undecidability:
\begin{theorem}
  Given a single computable pentagon, determining whether it tiles the plane is co-RE-hard and thus undecidable.
\end{theorem}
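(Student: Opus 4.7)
The plan is to reduce the complement of the halting problem to tileability of a single computable pentagon. Given a Turing machine $T$, following the recipe in the paragraph preceding the theorem, I would construct the computable real $a_T \in [0,1]$ whose $n$th binary digit is $1$ precisely when $T$ halts at step $n$; then $a_T = 0$ iff $T$ never halts. It suffices to produce a computable pentagon $P_T$ that tiles the plane iff $a_T = 0$.

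I would fix once and for all a convex pentagon $P^*$ that tiles the plane but lies on exactly one of the fifteen Reinhardt--Kershner--Rao varieties of tile-admissible convex pentagons---say a Type~1 pentagon satisfying the single equation $A+B+C = 2\pi$ and none of the other class equations. I would then construct a computable one-parameter deformation $t \mapsto P_t$ with $P_0 = P^*$ such that, for every $t \in (0, 1]$, $P_t$ is a simple convex pentagon lying outside all fifteen varieties simultaneously. A concrete route is to add $t$ to angle $A$, subtract $t$ from angle $D$ (preserving the pentagon angle sum $3\pi$), and adjust one free edge length by a computable function of $t$ to close the pentagon. Transversality at $P^*$---each of the fifteen defining polynomials has nonzero directional derivative along the chosen direction---guarantees that $P_t$ strictly violates every class equation for all sufficiently small positive $t$; rescaling $t \mapsto \min(1, t/C)$ with a constant $C$ computed once from $P^*$ makes this ``sufficiently small'' range cover all possible values of $a_T \in [0,1]$. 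Setting $P_T := P_{a_T}$, Theorem~\ref{thm:computable} (closure of computable reals under the arithmetic and trigonometric operations used) makes $P_T$ a computable pentagon, and by Rao's theorem $P_T$ tiles iff it lies in one of the fifteen classes iff $a_T = 0$ iff $T$ does not halt. This is a many-one reduction from non-halting (co-RE-complete) to tileability, so tileability is co-RE-hard.

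The main obstacle is verifying \emph{provably}, not merely generically, that the perturbation $P_t$ escapes all fifteen classes for every $t \in (0, 1]$ in the rescaled range, and computing the rescaling constant $C$ effectively from $P^*$. This reduces to a finite case analysis: for each of the fifteen defining systems, bound below the magnitude of the violation as an explicit function of $t$ and take $C$ to absorb the worst case. An attractive alternative that bypasses Rao's classification entirely is to build a pentagon whose only admissible tilings are forced by a single local angle-matching equation at a vertex, in the style of Lemmas~\ref{lem:staple}--\ref{lem:staple+shuriken+wheel} of the present paper, so that shifting one angle by $a_T$ destroys the unique admissible vertex configuration. The core of that variant is a purely local angle-counting argument in the spirit of Lemma~\ref{lem:sums}, which is the route I would actually prefer in a final write-up since it keeps the proof self-contained and avoids invoking the deep convex pentagon classification.
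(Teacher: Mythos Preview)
Your outline is workable, but the paper takes a much shorter and more elementary route that avoids Rao's classification entirely. Instead of starting from a genuine tiling pentagon on one of the fifteen varieties and perturbing transversally off all of them, the paper starts from a \emph{generic quadrilateral} and glues a very flat isosceles triangle (apex angle $180^\circ - r/100$) to one edge. At $r=0$ the apex is flat, so the ``pentagon'' is just the quadrilateral with a redundant collinear vertex and tiles because every quadrilateral tiles; at $r>0$ it is a proper convex pentagon that, by genericity of the quadrilateral, does not tile. The key trick you missed is to make the tiling case \emph{degenerate}, so that the positive direction is the elementary fact ``all quadrilaterals tile'' rather than membership in a particular pentagon family.

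Your approach is not wrong, but it carries real costs you correctly flag: it imports Rao's 2017 classification as a black box, and it requires an effective transversality/rescaling argument across fifteen algebraic varieties. The paper's construction, by contrast, needs only the single claim that a generic convex pentagon with one angle near $180^\circ$ does not tile, which can be handled by a short local angle argument once the quadrilateral's angles are chosen rationally independent of~$\pi$. Your closing remark---that a self-contained angle-counting argument in the style of Lemmas~\ref{lem:staple}--\ref{lem:staple+shuriken+wheel} would be preferable---is exactly right, and is essentially what the paper does (tersely).
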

\begin{proof}
  Pick a generic quadrilateral, and glue a very flat isosceles triangle to one of its edges, where the apex angle is $180^\circ - r/100$ for a given constructible number $r \in [0,1]$. The other angles and edge lengths of the triangle are constructible by constructibility of trigonometric functions. 

  The resulting pentagon tiles the plane only in the degenerate case where the triangle is degenerate, i.e., a line segment, which happens exactly when the obtuse angle of the triangle is exactly $180^\circ$. Thus the tiling problem is equivalent to testing whether $r=0$ for a given constructible number $r \in [0,1]$, which is co-RE-hard as shown above.
\end{proof}

Of course, this result is quite unsatisfying, as the reason for undecidability stems from the extreme weakness of the model and generality of the representation of the polygons, and the inability to even check locally that a tiling is valid. Yet surprisingly, in this same model, we are able to show membership in co-RE.

\subsection{Co-RE Algorithm}

The high-level idea of our algorithm is to try to build partial tilings that
cover a larger and larger disk.  If we ever fail to cover a disk,
then we know that the plane cannot be tiled; and if we never fail,
the well-known Extension Theorem (Theorem~\ref{thm:extension} below)
guarantees that the plane can be tiled.
To determine whether we can tile enough to cover a disk,
we bound the number of tiles that could possibly intersect the disk,
then enumerate all possible combinatorial ways for these tiles
to fit together, and for each, check whether the tiles fit together properly.
Checking fit is limited to tiles that share vertices, however,
so we need to take care to handle the case that there are seam lines
in the tiling where no tiles on opposite sides of the line share a vertex
(as in, e.g., the classic brick tiling).
We also avoid checking for global intersection between tiles
(because doing so is tricky in our model), opting instead to check
just locally that angles add up correctly at vertices and that
edge lengths add up correctly along edges.
Our notion of ``neat carpet'' handles both of these issues by forbidding
only local self-overlap, and guaranteeing that every boundary vertex
is either outside the specified disk or has total angle $180^\circ$
so potentially forms a seam boundary.
We are then able to show that arbitrarily large neat carpets
imply the existence of a plane tiling.

Our co-RE algorithm will in particular need to repeatedly test for equality among constructible numbers, a co-RE-complete problem.
Thus we need a way to compose co-RE decisions.
We use the following standard result (mentioned, e.g., in \cite{Rice1953}):
\begin{lemma}\label{lem:co-RE}
  Finite disjunctions and recursively enumerable conjunctions of co-RE decision problems are in co-RE.
\end{lemma}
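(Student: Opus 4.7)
The plan is to invoke the standard characterization that a decision problem $P$ is in co-RE iff its ``no''-set is recursively enumerable, i.e., there is a Turing machine $M_P$ that, on input $x$, halts iff $x$ is a ``no''-instance of $P$. Under this characterization, both closure properties reduce to well-known dovetailing constructions, and the proof amounts to describing the two semideciders.

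For a finite disjunction $P = P_1 \vee \cdots \vee P_k$ of co-RE problems, the ``no''-instances of $P$ are exactly those $x$ that are simultaneously ``no''-instances of every $P_i$. To semidecide this, I would run the machines $M_{P_1}(x), \ldots, M_{P_k}(x)$ in parallel (advancing each by one step per round) and halt only once all $k$ of them have halted. If each $M_{P_i}$ halts on $x$, the combined procedure halts; if any fails to halt, it loops forever. Hence $\overline{P}$ is RE and $P$ is co-RE.

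For a recursively enumerable conjunction $P = \bigwedge_{i \in I} P_i$, the ``no''-instances of $P$ are those $x$ for which \emph{some} $P_i$ outputs ``no''. Using the enumerator for $I$ together with the uniform production of the semideciders $M_{P_i}$, I would dovetail: enumerate $i_1, i_2, \ldots \in I$, and at each round advance $M_{P_{i_1}}(x), \ldots, M_{P_{i_t}}(x)$ by one step each; halt as soon as any of them halts. If some $P_i$ is ``no'' at $x$, then $i$ is eventually enumerated and $M_{P_i}(x)$ eventually halts, so the procedure halts; conversely, if every $P_i$ is ``yes'' at $x$, no semidecider ever halts, so neither does the procedure. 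Thus $\overline{P}$ is RE and $P$ is co-RE.

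The main subtlety will be pinning down what ``recursively enumerable conjunction'' means, since the construction implicitly requires \emph{uniformity}: not only must the index set $I$ be RE, but a semidecider for $\overline{P_i}$ must be computable from $i$. In the application in Section~\ref{sec:co-RE} this uniformity is automatic, because the $P_i$ will arise as instances of a single parametrized problem (equality tests among computable numbers obtained from the input), so the indices can be described by finite data from which the corresponding semideciders are built in a uniform way.
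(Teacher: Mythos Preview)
Your proof is correct and follows essentially the same dovetailing approach as the paper, just phrased via semideciders for the complement (the ``no''-set) rather than via co-RE algorithms that may fail to halt on ``yes'' instances. Your explicit remark on the uniformity hypothesis needed for the recursively enumerable conjunction is a useful clarification that the paper leaves implicit.
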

\begin{proof}
We provide a short proof for completeness.

\textbf{Finite disjunctions:} We want to decide whether $P_1 \vee P_2 \vee \cdots \vee P_n$ is true, where each $P_i$ is in co-RE. We can run the algorithm for $P_1$, then $P_2$, and so on. If they all halt and return false, then we halt and return false. If any of them halts and returns true, then we halt and return true; and if any one never stops (equivalent to outputting true), then the disjunction is true and our algorithm never stops. This algorithm proves membership in co-RE.

\textbf{Recursively enumerable conjunctions:} We want to decide whether $P_1 \wedge P_2 \wedge \cdots$ is true where each $P_i$ is in co-RE, and the infinite list of $P_i$ is enumerated by an algorithm. In round $j$ of our algorithm, for $j=1,2,\ldots$, we run the first $j$ steps of $P_1,P_2,\ldots,P_j$ in sequence. If any of them halts and returns false, then we halt and return false. If all of them return true or never stop, then the conjunction is true and our algorithm returns true or never stops. This algorithm proves membership in co-RE.
\end{proof}






Let $\mathcal T$ be a set of prototiles,
where each tile is a (simple closed) computable polygon.
Define a \defn{carpet} to be a topological disk
produced by gluing together a finite collection of tiles from $\mathcal T$,
where every interior vertex has $360^\circ$ total angle from incident tiles.
We assume that the carpet is laid out in the Euclidean plane,
that is, every point in a tile has real coordinates,
but we allow the surface to be self-overlapping, that is,
a point of the plane might be covered by more than one tile.
A \defn{patch} is a carpet whose embedding in the plane is not self-overlapping.

A carpet can be described by its \defn{combinatorial gluing}, which specifies
(1)~the set of tiles, each of which is an instance of a prototile;
(2)~a partition of the tile vertices into coincident (glued together) points;
and
(3)~for each tile edge, the sequence of other tile edges and/or boundary
that the edge has positive-length overlap with, in order along the edge.
Call a carpet or a patch \defn{seamless} if the position of all tiles in the carpet is fully determined by the combinatorial gluing and the position of its first tile.
This notion forbids carpets whose tiles can be separated by a line along which the two sides could slide (causing an uncountable infinity of solutions).
To verify seamlessness, build the \defn{incidence graph} on the tiles of the carpet, where two tiles are connected by an edge if they share a vertex, and check that the graph is connected.
For the case of tiling completion, we can ensure that the given patch is seamless by adding a vertex along each seam, common to both adjacent tiles; as this modification to the preplaced tiles does not change their shape, it does not change the outcome of the decision.

First we show how to verify that a carpet is valid:
\begin{lemma}\label{lem:valid}
Given a combinatorial gluing of a possible seamless carpet, deciding whether it corresponds to a seamless carpet is co-RE-complete.
\end{lemma}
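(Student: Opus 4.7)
The plan is to prove both directions by reducing the realizability of a combinatorial gluing to a finite system of equalities between computable real numbers, and then invoking Lemma~\ref{lem:co-RE}.

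For membership in co-RE, given a combinatorial gluing $G$, I would first perform the decidable combinatorial checks: verifying that $G$ encodes a topological disk (e.g.\ via an Euler-characteristic computation on the identification complex) and that the incidence graph on tiles, whose edges record shared vertices, is connected; failure of either can be detected algorithmically and returned as a negative answer. The remaining question is geometric realizability, which reduces to a finite conjunction of equalities between computable numbers: (i)~for every interior vertex $v$ of $G$, the sum of the incident prototile interior angles equals $2\pi$; and (ii)~for every tile edge, the partial sums of the lengths of overlapping edges on the opposite side match the specified subdivision of that edge. By Theorem~\ref{thm:computable}, each such sum and difference is a computable number, and testing that a computable number equals zero is co-RE. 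Because $G$ is finite, there are only finitely many such equalities, so Lemma~\ref{lem:co-RE} places their conjunction in co-RE.

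For co-RE-hardness, I would reduce from the problem of deciding whether a computable number is zero, already shown co-RE-hard in the preceding discussion. Given a Turing machine $T$, form the computable number $a_T \in [0,1]$ described earlier, so that $a_T = 0$ iff $T$ never halts. Take as prototiles an equilateral triangle of side $1$ and an isosceles triangle with base $1 + a_T$ and two legs of length $2$; both are simple, closed, and computable. The combinatorial gluing identifies the base of the isosceles triangle with one edge of the equilateral triangle. This yields a valid seamless carpet (a topological quadrilateral) iff the two glued edges have equal length, iff $a_T = 0$, iff $T$ does not halt.

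The main obstacle is justifying that \emph{local} angle and length equalities suffice to guarantee a globally consistent placement of tiles in the plane, since a priori propagating positions around distinct paths through the incidence graph could disagree. The resolution is that $G$ is combinatorially a topological disk and hence simply connected, so once every interior vertex angle sums to $2\pi$ and every edge subdivision matches, the developing map defined by picking any spanning tree of the incidence graph rooted at the first tile is well defined, and simple-connectivity implies no holonomy around cycles. Thus the finite list of local equalities is also sufficient for global geometric realizability, completing the membership argument.
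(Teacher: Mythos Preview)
Your proposal is correct and follows essentially the same route as the paper for membership: verify the topological-disk condition combinatorially, then reduce geometric validity to a finite conjunction of equality (and, in the paper, also non-strict inequality) tests among computable numbers, and invoke Lemma~\ref{lem:co-RE}. Your holonomy/developing-map paragraph is a nice explicit justification of a point the paper leaves implicit (recall carpets are allowed to self-overlap, so only well-definedness, not injectivity, of the developing map is at stake). One small gap: the paper also checks that the \emph{order} of vertices along a shared edge-path is consistent with the lengths, which needs non-strict inequality tests in addition to the equalities you list; you should mention these, though they are likewise co-RE.

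For hardness, you and the paper give different but equally valid gadgets. The paper glues three copies of a single triangular prototile around one vertex, so validity reduces to the angle being exactly $\tfrac{2\pi}{3}$; you instead glue two distinct triangles along an edge, so validity reduces to an edge-length equality. Both are one-step reductions from ``is this computable number zero?''. The paper's gadget has the minor aesthetic advantage of using a single prototile; yours has the advantage of producing a carpet with no interior vertex at all, so the only nontrivial check is the length match.
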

\begin{proof}
First we show that the problem is in co-RE. 

Start by verifying that the carpet is a topological disk,
by checking that its boundary has a single connected component that is a cycle.
Then, for every interior vertex of the carpet (not incident to a boundary edge), check that the sum of angles of the tiles incident to it is $2\pi$. Then check that the glued edges match in length. While this is trivial when two adjacent tiles match edge to edge, we need special care when a vertex is glued to the interior of an edge.

To account for this, for every pair of adjacent tiles that share a vertex, follow the straight line path along their common edge, noting the successive sequence of vertices and edges, until either (1)~the path ends at a vertex on both sides, or (2)~we reach the boundary of the patch. In the first case, check that the sum of lengths on both sides of the path are equal. In both cases, check that the order in which the vertices are encountered is compatible with the edge lengths, which reduces to testing equalities and inequalities over sums of lengths. Because the carpet is seamless, the length of all edges will be checked in this step.

The entire verification can be expressed as a conjunction of a finite number of equality ($=$) and inequality ($\leq$) tests, all of which are co-RE decisions. By Lemma~\ref{lem:co-RE}, this check is in co-RE.


Finally, to show that the problem is co-RE-hard, create a patch with three copies of a triangular prototile, rotating the prototile around one of its vertices. The patch is valid if and only if the angle of the triangle at that vertex is $120^\circ$, which is co-RE-complete by reduction from the equality of computable numbers.
\end{proof}


Define the \defn{distance} between two points in a carpet to be the Euclidean distance between those two points when the carpet is laid out in the Euclidean plane (note that this embedding may be self-overlapping, and this distance is no larger than the intrinsic distance within the carpet).
Call a vertex of a carpet \defn{neat} if it is either interior to the carpet and surrounded by tiles summing up to an angle of $2\pi$, or is on the boundary of the carpet and is surrounded by contiguous tiles summing up to an angle of $\pi$.
A carpet is \defn{neat within radius $<r$} if every vertex at distance $<r$ from the origin is neat.
In an \defn{anchored} carpet,
we assume one \defn{anchored} tile in the combinatorial gluing
has been chosen to be placed with its center of gravity at the origin,
and with a canonical rotation (say, matching its prototile);

\begin{lemma} \label{lem:neat-co-RE}
Given an anchored carpet and a computable positive number $r$, deciding whether it is neat within radius $<r$ is in co-RE.
\end{lemma}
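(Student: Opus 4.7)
The plan is to reduce ``neat within radius $<r$'' to a finite conjunction of disjunctions of co-RE predicates, one clause per vertex of the carpet.

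The first step is to equip every vertex of the combinatorial gluing with computable coordinates. The anchored tile is placed at the origin with its canonical rotation; then, walking a spanning tree of the tile incidence graph and composing the rigid motions determined by the tiles' (computable) angles and edge lengths assigns each vertex~$v$ a position $p_v$ whose coordinates are computable reals, by closure of computable numbers under sum, product, and trigonometric functions (Theorem~\ref{thm:computable}). The Euclidean distance $d(p_v,0)$ is then itself a computable real.

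The second step is to rewrite the property we want to decide as
\[
  \bigwedge_{v} \Big( d(p_v,0) \geq r \ \vee\ v \text{ is neat} \Big),
\]
where the conjunction ranges over the finitely many vertices $v$ of the carpet. By the definition of a carpet, every interior vertex automatically has incident angle sum $2\pi$, so ``$v$ neat'' is trivially true there; for a boundary vertex, ``$v$ neat'' reduces to the equality ``sum of incident tile angles at $v$'' $= \pi$, which is an equality between computable reals and hence co-RE. The predicate $d(p_v,0) \geq r$ is co-RE as well, since its negation $d(p_v,0) < r$ is RE: approximate both sides to higher and higher precision and halt as soon as a strict inequality is witnessed.

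Each clause is then a disjunction of two co-RE predicates, hence co-RE; the outer conjunction over the finitely many vertices of the carpet is also co-RE, both by Lemma~\ref{lem:co-RE}. The main subtlety I expect to have to address is precisely the interface between the ``$<r$'' in the definition and the fact that strict and non-strict inequalities between computable numbers are only one-sidedly decidable. Reformulating ``if inside the disk then neat'' as the disjunction ``outside-or-on-the-boundary $\vee$ neat'' sidesteps this neatly: we never need to localize which vertices lie strictly inside the disk, since a violation anywhere inside is detected uniformly, and any vertex that stays outside never needs its neatness checked.
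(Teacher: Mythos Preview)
Your proposal is correct and follows essentially the same argument as the paper: compute vertex coordinates from the anchored tile via closure of computable reals, express ``neat within radius $<r$'' as $\bigwedge_v \big(d(p_v,0)\geq r \ \vee\ v\text{ is neat}\big)$, observe that each disjunct is co-RE (a nonstrict inequality, respectively an equality, between computable reals), and close under finite disjunction and finite conjunction via Lemma~\ref{lem:co-RE}. The only differences are expository---you spell out the spanning-tree computation of coordinates, separate the interior and boundary vertex cases, and make explicit why $d\geq r$ (rather than $d>r$) is the right predicate---none of which diverge from the paper's route.
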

\begin{proof}
The coordinates of each vertex of the carpet laid out in the plane
(relative to the anchored tile)
are computable by Theorem~\ref{thm:computable}, using trigonometric functions, and so is the vertex's distance from the origin.
Checking that each vertex is neat is an equality test of a sum of angles, which is co-RE.
For each vertex, we verify that it is either neat or that its distance to the origin is $\geq r$.
This is a disjunction of two co-RE problems, so by Lemma~\ref{lem:co-RE}, is in co-RE.
Checking this property for all vertices is a finite conjunction of co-RE problems, so by Lemma~\ref{lem:co-RE}, is in co-RE.
\end{proof}

Let $\rho$ be the maximum ``radius'' of tiles in $\mathcal T$, meaning that a disk of radius $\rho$ centered at the center of gravity of each tile in $\mathcal T$ covers that tile.
Let $A_{\min}$ be the minimum area of a tile in $\mathcal T$.
Let $D_r$ denote the disk of radius $r$ centered at the origin.
\begin{lemma} \label{lem:tiles-bound}
  If $\mathcal T$ can tile the plane, then for any $r > 0$, (a) there is an anchored patch with a finite number $N(r)$ of tiles that covers the disk $D_r$, and (b) there is an anchored seamless patch with $\leq N(r)$ tiles that is neat within radius $<r$.

  If a seamless patch $P$ can be extended to tile the plane, then for any $r > 0$, (a) there is an anchored patch containing $P$ with a finite number $|P|+N(r)$ of tiles that covers the disk $D_r$, and (b) there is a seamless patch containing $P$ with $\leq |P|+ N(r)$ tiles that is neat within radius $<r$.
\end{lemma}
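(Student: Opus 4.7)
The plan is to extract both patches directly from a plane tiling $T$ that exists by hypothesis (and extends the preplaced $P$ in the completion case). Since every tile has area at least $A_{\min}$, any disk of radius $R$ contains at most $\pi R^2/A_{\min}$ tile centers of $T$; I take $N(r):=\lceil\pi(r+2\rho)^2/A_{\min}\rceil$, with the $2\rho$ slack chosen so this single quantity governs both parts below.

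For part~(a) in the tiling case, fix such a $T$ and pick a tile $t_0$ whose closure meets the origin. Apply an isometry of the plane so that $t_0$ is anchored. Every tile intersecting $D_r$ has its center of gravity inside $D_{r+\rho}$, so at most $N(r)$ tiles of $T$ meet $D_r$; their union, being a finite subregion of $T$, is automatically a patch, and it covers $D_r$ by construction. In the completion case I instead anchor a tile of $P$, include all tiles of $P$, and augment with the tiles of $T\setminus P$ that intersect $D_r$, yielding the bound $|P|+N(r)$.

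For part~(b), let $S$ consist of all tiles of $T$ whose centers of gravity lie in $D_{r+\rho}$, together with any tiles of $T$ needed to fill enclosed holes (these necessarily have centers in $D_{r+2\rho}$, since a hole of $S$ is trapped inside tiles of $S$). Because every tile has radius at most $\rho$, every vertex of $T$ strictly inside $D_r$ is incident only to tiles in $S$, and the incident angles sum to $2\pi$ since $T$ tiles the plane; hence such a vertex is interior to the patch and neat. Any vertex of the patch lying on its boundary has distance at least $r$ from the origin and so requires no check. The area bound gives at most $N(r)$ tiles overall (plus $|P|$ in the completion case, where all of $P$ is included first).

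The delicate step is \textbf{seamlessness}: the incidence graph of $S$, connecting tiles that share a combinatorial vertex, can fail to be connected whenever two edge-adjacent tiles meet in a T-junction or otherwise share no common vertex. I repair this exactly as the paper does for preplaced patches: for every pair of tiles in $S$ sharing a positive-length boundary segment without sharing a vertex, I subdivide both tiles' edges at a common interior point of that segment. This purely combinatorial refinement does not alter any tile's shape nor the tile count, and each introduced vertex contributes a flat angle $\pi$ to each incident side and is therefore automatically neat (totaling $2\pi$ if interior and $\pi$ if on the patch boundary). After this repair, every edge-adjacent pair of tiles in $S$ shares a vertex, so the incidence graph of $S$ contains the edge-adjacency (dual) graph of the patch; the latter is connected because $S$ is a connected topological disk, so the patch is seamless. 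Applying the same construction to a tiling extending $P$ (and keeping all tiles of $P$) yields the seamless patch in the completion case, giving the bound $|P|+N(r)$.
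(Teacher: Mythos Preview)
Your treatment of part~(a) is essentially the paper's: extract from a plane tiling a bounded connected piece containing the origin, with the tile count controlled by the area bound $\pi(r+2\rho)^2/A_{\min}$.

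Part~(b), however, diverges from the paper in a way that leaves a genuine gap. You obtain seamlessness by \emph{subdividing} tile edges at common points so that every edge-adjacent pair of tiles shares a vertex. But a carpet (hence a patch) is by definition a gluing of \emph{instances of the prototiles} in~$\mathcal T$, and the downstream use of this lemma (Lemma~\ref{lem:large-neat}) enumerates combinatorial gluings of unmodified prototiles. A tile with an extra degree-$\pi$ vertex inserted on an edge is no longer an instance of its prototile, so the ``seamless patch'' you produce is not one that enumeration would ever generate; the contrapositive ``no neat seamless carpet among those enumerated $\Rightarrow$ no plane tiling'' would then fail. The passage you invoke about adding vertices to preplaced tiles is a one-time preprocessing of the fixed input~$P$, not a license to modify tiles drawn fresh from~$\mathcal T$.

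The paper handles seamlessness differently, and this is the idea you are missing: rather than forcing the incidence graph to become connected, it \emph{restricts} to a single connected component of that graph---the one containing a tile covering the origin. The resulting sub-patch is automatically seamless and uses only unmodified prototiles. What must then be argued is that the new boundary created by this restriction is neat. The paper observes that the boundary between two incidence-graph components is a straight line: at any vertex on such a boundary, the vertex belongs to tiles on only one side (else the two sides would be in the same component), so the other side contributes angle~$\pi$, and since the total is~$2\pi$, the retained side also contributes~$\pi$. Hence every new boundary vertex is neat, and the component is a seamless patch of prototiles that is neat within radius~$<r$.
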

\begin{proof}
  For the (a) statements, translate (and rotate) the tiling so that one of its tiles is anchored; and if we are given a seamless patch $P$, choose to anchor one of its tiles.
  Consider the disk $D_{r+2\rho}$ of radius $r+2\rho$ centered at the origin (the anchored tile's center of gravity).
  Take all the tiles in the plane tiling that are fully inside $D_{r+2\rho}$, and take all the tiles of $P$ (if given).
  Because the tiles do not overlap and are each of area $\geq A_{\min}$, there are at most $\pi(r+2\rho)^2/A_{\min}$ tiles inside $D_{r+2\rho}$.
  Take the connected component that contains the origin (and thus the tiles of $P$, if given), which only decreases the number of tiles.
  This is an anchored patch that covers the smaller disk $D_r$.

  For the (b) statements, build the incidence graph on the tiles of the carpet, where two tiles are connected by an edge if they share a vertex.
  If this graph is connected, then the patch is seamless.
  If this graph is disconnected, it is because of seams.
  Seam lines cannot intersect: otherwise, their intersection point is a vertex common to both sides of the seams.
  Thus, cutting the patch along all seams, or equivalently taking one connected component of the incidence graph, will create neat vertices on the new boundary, where the seams were.
  Take the component that contains a tile covering the origin. This is a seamless patch that is neat within radius $<r$.
\end{proof}

\begin{lemma} \label{lem:large-neat}
  Given prototiles $\mathcal T$ and a computable radius $r$, deciding whether there is an anchored carpet that is neat within radius $<r$ is in co-RE.
  If there is no such carpet, then $\mathcal T$ cannot tile the plane.

  Given prototiles $\mathcal T$, a seamless patch $P$, and a computable radius $r$, deciding whether $P$ can be extended to a carpet that is neat within radius $<r$ is in co-RE.
  If there is no such carpet, then $P$ cannot be extended to tile the plane.
\end{lemma}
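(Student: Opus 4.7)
The lemma combines a decidability claim with a logical implication, which I treat in turn; the two versions (bare prototile set~$\mathcal T$ versus extension of a preplaced patch~$P$) are handled by parallel arguments, and I describe only the former explicitly.

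The implication \emph{no neat carpet within~$<r$ $\Rightarrow$ $\mathcal T$ cannot tile the plane} is the contrapositive of Lemma~\ref{lem:tiles-bound}(b): a seamless patch is by definition a carpet, so that lemma already produces a neat carpet within~$<r$ whenever $\mathcal T$ tiles the plane. The $P$-extension version is obtained by invoking the corresponding half of Lemma~\ref{lem:tiles-bound}.

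For the co-RE membership, the plan is to reduce to a finite disjunction. I first argue that the existence of a neat carpet within~$<r$ is equivalent to the existence of one whose combinatorial gluing uses at most $N(r)=\pi(r+2\rho)^2/A_{\min}$ tiles (respectively at most $|P|+N(r)$ tiles in the $P$-extension version). Granted this, the decision becomes a finite disjunction, over combinatorial gluings $G$ with $|G|\le N(r)$, of the question ``does $G$ define a valid seamless carpet that is neat within~$<r$?''. For each~$G$ this is the conjunction of two co-RE predicates, supplied by Lemmas~\ref{lem:valid} and~\ref{lem:neat-co-RE}, and is therefore co-RE by Lemma~\ref{lem:co-RE}. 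A finite disjunction of co-RE predicates is again co-RE by Lemma~\ref{lem:co-RE}, completing the membership argument.

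The main obstacle is justifying the size bound. Given an arbitrary neat carpet~$C$ within~$<r$, the aim is to prune it to a sub-carpet of at most $N(r)$ tiles that remains neat. The key observation is that neatness forces the immersion $C\to\mathbb R^2$ to be a local embedding at every vertex inside~$D_r$, so tiles whose images intersect $D_r$ cannot locally overlap there; an area argument identical in spirit to Lemma~\ref{lem:tiles-bound} then bounds the number of such tiles by~$N(r)$. The delicate step is discarding the remaining tiles of $C$ while preserving the topological disk structure (and preserving~$P$ in the extension version). For this I would use that every boundary vertex of $C$ at distance~$<r$ has angle~$\pi$, so the boundary of $C$ inside~$D_r$ consists of straight segments; together with the simple connectivity of~$C$, this lets us keep a connected sub-complex containing the anchor that is still a topological disk and still a neat carpet within~$<r$. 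This topological pruning step is what I expect to occupy most of the work.
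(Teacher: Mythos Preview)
Your high-level plan matches the paper's proof: enumerate bounded-size seamless combinatorial gluings, check validity (Lemma~\ref{lem:valid}) and neatness (Lemma~\ref{lem:neat-co-RE}) for each, and combine via the finite-disjunction clause of Lemma~\ref{lem:co-RE}. The implication via the contrapositive of Lemma~\ref{lem:tiles-bound}(b) is also exactly what the paper uses.

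Where you diverge is in what you call ``the main obstacle''. The paper does \emph{not} attempt to show that every neat carpet can be pruned to one with $\le N(r)$ tiles. It simply enumerates seamless anchored carpets with $\le N(r)$ tiles, and if none is neat, concludes directly from Lemma~\ref{lem:tiles-bound}(b) that $\mathcal T$ cannot tile. In effect the paper's algorithm decides the predicate ``there is a \emph{small seamless} neat anchored carpet'': a YES certainly witnesses a neat carpet, and a NO already yields the second sentence of the lemma. The paper never argues the equivalence you set out to prove; what it establishes is exactly what Theorem~\ref{thm:tiling-co-RE} consumes downstream.

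Your proposed pruning argument also has a gap as sketched. Local embeddedness at every vertex inside $D_r$ does not prevent the carpet from covering parts of $D_r$ several times: take a long thin carpet that exits $D_r$, makes a turn (all non-neat corners lying outside $D_r$), and re-enters $D_r$ on a second sheet. Then arbitrarily many tiles can have image meeting $D_r$, so the area bound ``$\le N(r)$ tiles touch $D_r$'' fails. Trimming such a carpet to a single sheet while keeping it a neat topological disk is essentially the content of Lemma~\ref{carpettrim}, which incurs a $2\rho$ loss in radius and hence does not give the equivalence at the same~$r$. Since the precise equivalence is not needed, the cleaner route is to drop the pruning step entirely and argue as the paper does.
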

\begin{proof}
  Enumerate all possible combinatorial gluings of carpets, or carpets extending $P$, with $\leq N(r)$ (extra) tiles.
  Keep only the carpets that are seamless.
  Try all possible anchored tiles, restricting to a tile in $P$ if given.

  Check that the combinatorial gluing corresponds to an actual carpet, which is in co-RE by Lemma~\ref{lem:valid}.
  Check that the carpet is neat within radius $<r$, which is in co-RE by Lemma~\ref{lem:neat-co-RE}.

  If any of these carpets satisfy all checks, then there is a neat carpet within radius $<r$.
  If none are valid, then by Lemma~\ref{lem:tiles-bound}, $\mathcal T$ cannot tile the plane, or $P$ cannot be extended to tile the plane.
  This is a finite disjunction of co-RE problems, so it is in co-RE.
\end{proof}

\begin{lemma}\label{carpettrim}
  If an anchored carpet is neat within radius $< r+2\rho$, then the carpet contains an anchored patch that is neat within radius $<r$.
\end{lemma}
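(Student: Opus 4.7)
The plan is to take $P$ to be the tiles of $C$ whose image meets $D_r$, restricted to the connected component under shared-vertex adjacency containing the anchored tile, and if necessary enlarged by any $C$-tiles enclosed by this component to make $P$ simply connected. Every tile in $P$ has radius at most $\rho$ and contains a point of $D_r$, so its image lies in $D_{r+2\rho}$; in particular every vertex of every tile of $P$ lies in $D_{r+2\rho}$ and is therefore neat in $C$ by hypothesis. Any enclosed tiles added to fill holes must lie entirely outside $D_r$ (otherwise they would intersect $D_r$ and already belong to the component), so they cannot affect the neatness condition within radius~$<r$.

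To verify neatness of $P$ within radius $<r$, I would consider any vertex $v$ of $P$ with $|v|<r$. Each tile of $C$ incident to $v$ contains the point $v\in D_r$ in its image, hence qualifies for $P$, and shares $v$ with the tile of $P$ that contributed~$v$, so it lies in the same shared-vertex component and is therefore in~$P$. Consequently, the tiles of $P$ incident to $v$ are exactly the tiles of $C$ incident to $v$, so the total angle at $v$ in $P$ equals the total angle at $v$ in $C$, which by the neatness hypothesis on $C$ is $2\pi$ (if $v$ is interior to $C$) or $\pi$ (if $v$ is on the boundary of $C$). In either case, $v$ is neat in $P$.

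The main obstacle is showing that $P$ is actually a patch, i.e., that its embedding in $\mathbb{R}^2$ has no self-overlap. Here I would first observe that the piecewise-Euclidean metric on $P$ has no cone singularities in its interior: every interior vertex of $P$ inherits its full $2\pi$ of surrounding tiles from $C$ (all $C$-incident tiles at such a vertex already lie in $P$), and along every edge of $P$ the embedding is non-folding because the edge's endpoints are neat vertices of $C$ and therefore force the two incident tiles to lie on opposite sides. Thus $P$ is a compact, flat, simply connected piecewise-Euclidean $2$-complex with piecewise-linear boundary, mapped into $\mathbb{R}^2$ by a local isometry. I would then invoke the developing-map theorem: any two local isometries from a connected flat surface into $\mathbb{R}^2$ differ only by a global planar isometry, and a simply connected flat surface embeds isometrically via its developing map. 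Consequently, the embedding of $P$ is injective, so $P$ is a patch; combined with the neatness established above, $P$ is the required anchored patch that is neat within radius~$<r$.
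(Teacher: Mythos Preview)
Your argument has a genuine gap at the crucial step: the claim that ``a simply connected flat surface embeds isometrically via its developing map'' is false. The developing map of a compact, simply connected, flat surface with boundary is always a local isometry into $\mathbb{R}^2$, but it need not be injective. A standard counterexample is the abstract flat rectangle $\{(s,\theta) : 1 \le s \le 2,\ 0 \le \theta \le 4\pi\}$ with metric $ds^2 + s^2\,d\theta^2$; this is simply connected and flat, yet its developing map $(s,\theta)\mapsto(s\cos\theta,\,s\sin\theta)$ wraps twice around an annulus. In tiling language, a long thin strip of tiles that curls around by more than $360^\circ$ is a simply connected flat carpet whose planar layout self-overlaps. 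Nothing you have established about $P$ rules this out: you only know that interior vertices of $P$ have total angle $2\pi$, which is exactly the flatness hypothesis and no more.

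What actually prevents such spiraling here is the neatness hypothesis, and the paper's proof exploits it directly rather than appealing to a general principle. The paper intersects the carpet with the disk $D_{r+2\rho}$ and observes that, because every carpet vertex inside this disk is neat, the carpet boundary inside the disk consists of straight chords (boundary vertices with angle $\pi$ force collinearity of the adjacent boundary edges). Hence the component $S$ of $C\cap D_{r+2\rho}$ containing the anchor is bounded by chords and circular arcs meeting at convex angles, so $S$ is a convex region and therefore non-self-overlapping. The desired patch is then obtained by discarding any tile not fully contained in $D_{r+2\rho}$ and keeping the anchor's component; it sits inside $S$, so inherits non-self-overlap, and since every removed tile has diameter at most $2\rho$, no vertex within distance $r$ is disturbed. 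This convexity argument is the missing geometric ingredient; your developing-map shortcut does not supply it.
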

\begin{proof}
  Consider the intersection between the carpet $C$ and the disk $D_{r+2\rho}$, which might have multiple connected components and/or be self-overlapping.
  Pick the component $S$ that contains the anchored tile. The boundary of $S$ is composed of straight lines (the edges of the tiles connected by neat vertices) and circular arcs (portions of the boundary of $D_{r+2\rho}$), connected together at convex angles. Thus $S$ is convex, so non-self-overlapping.

  Back in the carpet, remove all tiles that are not entirely contained in $D_{r+2\rho}$. Again, this possibly results in several connected components.
  Retain the component $C'$ that contains the anchored tile,
  where connectivity is defined by interior paths,
  so that $C'$ does not have pinch points.
  By construction, this component is contained in $S$ and thus is not self-overlapping.
  Also $C'$ is a topological disk: a hole in $C'$ could only come from a
  removed tile, which must touch the outside of $D_{r+2\rho}$,
  contradicting that it is surrounded by~$C'$ (by planarity).
  Therefore, $C'$ is a patch.

  Finally, because all tiles removed have diameter $\leq 2\rho$, none of the deleted tiles intersect the disk $D_r$, and therefore all vertices within radius $<r$ remain untouched. Therefore the patch $C'$ is neat within radius $<r$. 
\end{proof}

\begin{lemma} \label{lem:neat2patch}
  If there exists an anchored carpet $C$ that is neat within radius $<r+2\rho$, then there exists a patch that covers the disk $D_{r/2}$. Furthermore, that patch contains all tiles of $C$ that intersect $D_{r/2}$.
\end{lemma}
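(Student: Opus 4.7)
The plan is to apply Lemma~\ref{carpettrim} to the anchored carpet $C$, producing an anchored patch $P\subseteq C$ that is neat within radius $<r$ and contained in $D_{r+2\rho}$; I then argue that $P$ itself is the desired patch. The ``contains all tiles of $C$ intersecting $D_{r/2}$'' claim follows easily: any such tile $t$ lies entirely in $D_{r/2+2\rho}\subseteq D_{r+2\rho}$ (since any two points of a tile are within $2\rho$ of each other), so $t$ survives the radius filter in the construction of $P$; and since every vertex of $C$ inside $D_{r+2\rho}$ is neat and hence not a pinch point (tiles fall in a single contiguous arc around it), the interior-path-connected component of the anchored tile in $C\cap D_{r+2\rho}$ contains every tile of $C$ lying in $D_{r+2\rho}$, so in particular $t\in P$.

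For the covering claim, I would argue by contradiction. If some $p\in D_{r/2}$ were uncovered by $P$, the straight segment from the origin (which lies in $P$) to $p$ would cross $\partial P$ at some point $q$ with $|q|<r/2$. At every boundary vertex of $P$ at distance $<r$ from the origin, the contiguous incident inside-tiles sum to exactly $\pi$ by neatness, forcing the two adjacent boundary edges to be collinear; hence $q$ lies on a maximal straight segment $\sigma\subset\partial P$ whose endpoints are non-neat vertices of $P$, necessarily at distance $\geq r$ from the origin.

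The exterior side of $\sigma$ is bordered either by a tile of $C$ removed in constructing $P$, or by the exterior of $C$ itself. In the first subcase, the removed tile contains some point at distance $\geq r+2\rho$ from the origin; since tile diameters are at most $2\rho$, every point of the tile (including every point of $\sigma$) lies at distance $\geq r$, which contradicts $q\in\sigma$ with $|q|<r/2$. The main obstacle is the second subcase, in which $\sigma$ would be part of $\partial C$, meaning $\partial C$ itself cuts through $D_{r/2}$ along a long straight chord whose endpoints are non-neat vertices of $C$ at distance $\geq r+2\rho$. Ruling this out is the most delicate part of the proof: I would combine the convexity of the region $S=C\cap D_{r+2\rho}$ from the proof of Lemma~\ref{carpettrim} (whose boundary consists only of straight pieces of $\partial C$ and arcs of $\partial D_{r+2\rho}$ meeting at convex angles) with the constraint that the anchored tile lies in $S$ with its centroid at the origin, to derive the contradiction that such a chord cannot exist while the remaining convex-angle arrangement of $\partial S$ is consistent with neatness throughout $D_{r+2\rho}$.
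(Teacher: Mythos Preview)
Your plan has a genuine gap: the patch $P$ obtained from Lemma~\ref{carpettrim} need not cover $D_{r/2}$, so the contradiction you hope to extract in your ``second subcase'' simply does not exist. Take $C$ to be the rectangle $[-N,N]\times[-\tfrac12,N]$ tiled by unit squares, with the anchored tile $[-\tfrac12,\tfrac12]^2$ (centroid at the origin). Every boundary vertex on the line $y=-\tfrac12$ has interior angle~$\pi$, so $C$ is neat within radius $<N$; yet for any $r>1$ the straight boundary $y=-\tfrac12$ passes through $D_{r/2}$ and survives intact into~$P$. The convexity of $S=C\cap D_{r+2\rho}$, the neatness of $C$ throughout $D_{r+2\rho}$, and the anchored tile containing the origin are all perfectly consistent with this picture, so there is no contradiction to derive. (Your ``furthermore'' argument has an analogous hole: a $\Pi$-shaped carpet whose two legs both enter $D_{r+2\rho}$ but are joined only outside it shows that the interior-path component of the anchored tile need not contain every tile of $C$ lying in~$D_{r+2\rho}$.)

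The paper does not try to rule this case out. It accepts that $\partial P\cap D_r$ is a union of pairwise-disjoint chords of $D_r$, notes that at most two such chords can meet $D_{r/2}$ (each one subtends an arc of at least $\tfrac{2}{3}\pi$), and then manufactures a patch covering $D_{r/2}$ by stitching together rotated copies of~$P$: a single $180^\circ$ rotation about the chord's midpoint in the one-chord case, and in the two-chord case iterated rotations about the apex of the wedge containing $P$, followed by the single-chord step. This stitching of copies is the essential idea your proposal is missing.
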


\begin{proof}
  Suppose we have a carpet that is neat within radius $<r+2\rho$. By Lemma~\ref{carpettrim}, we have a patch that is neat within radius $<r$.
  The intersection of the patch and the disk $D_r$ is a disk cut by noncrossing chords;
  refer to Figure~\ref{fig:neat carpet disks}.
  Note that any chord of $D_r$ that intersects $D_{r/2}$ cuts off an arc of angle $\tfrac{2}{3} \pi$ from $D_r$. Because the chords defined by the patch do not intersect in $D_r$, at most two of them intersect~$D_{r/2}$.

  \begin{figure}
    \centering
    \includegraphics[width=\linewidth]{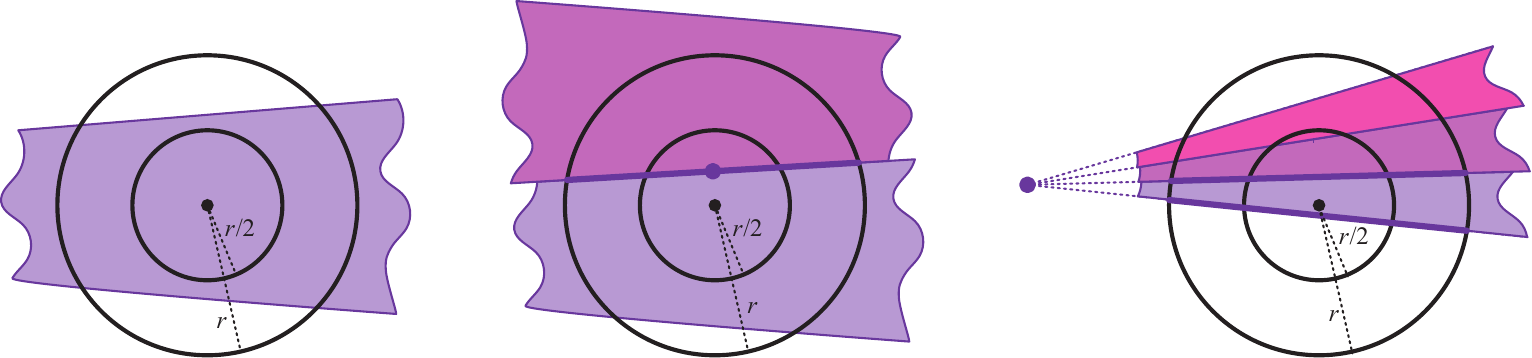}
    \caption{A neat patch within $<r$, and how it can interact with
      the smaller disk $D_{r/2}$.  From left to right:
      no chords, one chord, and two chords.}
    \label{fig:neat carpet disks}
  \end{figure}

  If no chord intersects $D_{r/2}$, then the patch covers $D_{r/2}$, and we are done. 

  If exactly one chord intersects $D_{r/2}$, then the patch is contained in a half-plane that contains the origin and is bounded by the extension of the cord. Rotate a copy of the patch by $180^\circ$ about the center of the cord, and glue the two pieces together. This produces a patch that covers $D_{r/2}$. 

  If two chords intersect $D_{r/2}$, then the patch is contained in a wedge defined by the extensions of the two chords.
  Let $q$ be the apex of the wedge, and assume $q$ is on the $x$ axis, with one of the chords above the $x$ axis and the other chord below.
  Assume without loss of generality that the chord below is the longest one.
  Repeatedly stitch copies of the patch by rotating it about~$q$, upward, gluing the longer chord of the previous copy to the shorter chord of the next copy, until the upper half of $D_{r/2}$ is covered.
  Now repeat the procedure for a single chord.
  The resulting patch covers the disk of radius $r/2$.
\end{proof}

\begin{theorem}[{Extension Theorem \cite[p.~151]{Gruenbaum-Shephard}}]
  \label{thm:extension}
Given a finite collection $\mathcal T$ of prototiles, if they tile arbitrarily large disks, then they admit a tiling of the plane.

Given a finite collection $\mathcal T$ of prototiles, and a patch $P$ using tiles of $\mathcal T$, if $P$ can be extended to cover arbitrarily large disks centered in $P$, then $P$ can be completed to tile the plane.%
\footnote{Although \cite[p.~151]{Gruenbaum-Shephard} does not explicitly state the completion version of the theorem, the same proof establishes both.}
\end{theorem}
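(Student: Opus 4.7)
The plan is to prove the Extension Theorem by a diagonal compactness argument. For each positive integer $n$, the hypothesis supplies a patch $P_n$ using tiles from $\mathcal T$ that covers the disk $D_n$ (take $r=n$ in the statement). I would extract a subsequence of these patches that converges, tile by tile, to a tiling of the whole plane.

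The first step is to normalize the anchoring. Each $P_n$ contains some tile $T_n$ covering the origin. Since $\mathcal T$ is finite, pigeonhole yields an infinite subsequence along which $T_n$ is always a congruent copy of the same prototile $\tau$. Each such copy is determined by an element of the Euclidean group, and the set of rigid motions that place $\tau$ so as to cover the origin is compact (the translational part is bounded by the diameter of $\tau$, and rotations lie in a compact group). By Bolzano--Weierstrass I can refine to a further subsequence along which these placements converge to a definite limit tile $T_\infty$ covering the origin.

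The second step is the diagonal extension. Using the parameters $\rho$ and $A_{\min}$ already in the paper, at most $\pi(k+\rho)^2 / A_{\min}$ tiles of $P_n$ can meet the disk $D_k$. For each fixed $k$, I would alternate pigeonhole on the discrete combinatorial data (choice of prototile and the combinatorial gluing to previously-fixed tiles) with Bolzano--Weierstrass on the isometry parameters, obtaining a subsequence on which every tile meeting $D_k$ converges to a definite limit placement. A standard diagonal across $k$ then produces a single subsequence along which the placement of every tile in any bounded region of the plane converges. The limit is a locally finite family of congruent copies of prototiles; interior-disjointness persists in the limit because each $P_n$ is a patch (this is a closed condition on pairs of tile placements), and any point $p$ of the plane is covered because $p \in D_n$ for all large $n$, so the tile of $P_n$ containing $p$ appears (for some fixed $k \geq |p|+\rho$) among the finitely many tiles tracked in the diagonal extraction and thus contributes a limit tile covering $p$. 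Therefore the limit is a valid tiling of the plane. The completion version is handled identically, except that the anchor is already given by $P$: take each $P_n$ to extend $P$, skip the initial normalization, and apply the same diagonal extraction to the tiles outside $P$.

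The main obstacle I anticipate is handling the continuous placement parameters cleanly. In the Wang-tile setting König's Lemma on a discrete tree is enough, because there are only finitely many placements of each tile consistent with a bounded region; here, rigid motions of prototiles form a continuum, so genuine real-analytic compactness (Bolzano--Weierstrass on translations and on the rotation angle) must be interleaved with the combinatorial pigeonhole at every level of the diagonal. The bookkeeping needed to keep the successive subsequences consistent across levels, and to certify that closed conditions (interior-disjointness, incidence at shared edges) survive the limit, is the principal technical burden, though each individual step is routine.
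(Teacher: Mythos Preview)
The paper does not supply its own proof of Theorem~\ref{thm:extension}; it is quoted as a classical result from Gr\"unbaum--Shephard (with a footnote that the completion version follows by the same argument). Your diagonal compactness sketch is exactly the standard proof one finds in that reference, and it is correct in outline---including your handling of the completion version by anchoring on~$P$.

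One remark on the point you flag as the main obstacle: your instinct that the continuous placement parameters require genuine Bolzano--Weierstrass (not just K\"onig's lemma) is right, and your interleaving of pigeonhole on discrete data with sequential compactness on isometries is the standard way to do it. The only step that deserves a sentence of justification beyond ``closed condition'' is interior-disjointness in the limit: if two limit tiles $T_\infty^i,T_\infty^j$ shared an interior ball $B$, then for large $n$ both $T_n^i$ and $T_n^j$ would contain a slightly smaller concentric ball (pull back by the converging isometries), contradicting that $P_n$ is a patch. Similarly, coverage of an arbitrary point $p$ uses one more pigeonhole after the diagonal: among the boundedly many slots meeting $D_{|p|+2\rho}$, some fixed slot covers $p$ infinitely often, and closedness of the tiles gives $p$ in the limit tile. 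With those two lines added, the proof is complete.
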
  

We translate the above theorem to seamless anchored patches.
\begin{lemma} \label{lem:neatextension}
  Given a collection $\mathcal T$ of prototiles, if there exist anchored carpets that are neat within radius $<r$ for arbitrarily large $r$, then $\mathcal T$ admits a tiling of the plane.

  Given a collection $\mathcal T$ of prototiles, and given an anchored patch $P$ using tiles of $\mathcal T$, if $P$ can be extended to a carpet that is neat within radius $<r$ for arbitrarily large $r$, then $P$ can be completed to tile the plane.
\end{lemma}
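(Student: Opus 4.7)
The plan is to pipe the hypothesized neat carpets through Lemma~\ref{lem:neat2patch} to obtain genuine (non-self-overlapping) patches that cover arbitrarily large disks, and then invoke the Extension Theorem~\ref{thm:extension}. Concretely, for any target disk radius $s$, I would apply the hypothesis with $r = s + 2\rho$ to produce an anchored carpet that is neat within radius $<s+2\rho$, and then apply Lemma~\ref{lem:neat2patch} to extract from it a patch covering $D_{s/2}$. Letting $s \to \infty$ supplies the Extension Theorem with patches covering arbitrarily large disks, yielding a tiling of the plane. This handles the first (non-completion) statement of the lemma.

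For the completion version, the same pipeline applies, but with an additional bookkeeping step: I need to verify that the patch produced by Lemma~\ref{lem:neat2patch} actually extends the given patch $P$, rather than being some unrelated patch assembled from tiles in $\mathcal{T}$. This is exactly what the ``Furthermore'' clause of Lemma~\ref{lem:neat2patch} provides: the resulting patch contains every tile of the original carpet that intersects $D_{s/2}$. Since $P$ is anchored at the origin and has finite extent, for $s$ larger than twice the maximum distance from the origin to any point of $P$, every tile of $P$ lies inside $D_{s/2}$ and so is retained in the extended patch. The origin sits inside the anchored tile of $P$, so $D_{s/2}$ is centered in $P$, and the completion version of the Extension Theorem~\ref{thm:extension} then closes the argument.

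The main subtlety I anticipate is nothing more than keeping track of the three radii that appear: the radius $r$ at which we assume neatness, the inflated radius $r + 2\rho$ demanded by Lemma~\ref{lem:neat2patch}, and the shrunken radius $r/2$ that the resulting patch actually covers. Because the hypothesis supplies neat carpets for \emph{arbitrarily} large $r$, these constant additive shifts and the factor-of-two shrinkage are harmless. Beyond this bookkeeping, and the observation that the ``Furthermore'' clause is precisely what preserves $P$ inside the extracted patch, the proof is a direct composition of earlier lemmas, with no genuine new geometric idea required at this stage.
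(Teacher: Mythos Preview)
Your proposal is correct and follows essentially the same route as the paper: feed the hypothesized neat carpets through Lemma~\ref{lem:neat2patch} to obtain genuine patches covering arbitrarily large disks, then invoke the Extension Theorem~\ref{thm:extension}. If anything, you are more explicit than the paper about the radius bookkeeping and about why the ``Furthermore'' clause of Lemma~\ref{lem:neat2patch} guarantees that the extracted patch still contains all of~$P$ in the completion case.
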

\begin{proof}
  If there exists an anchored carpet or a carpet extending $P$ that is neat within radius $<r$ for arbitrarily large $r$, then by Lemma~\ref{lem:neat2patch}, there exists a patch (extending $P$) that covers the disk of radius $r/4$ centered at the origin for arbitrarily large $r$. By the Extension Theorem, the prototiles admit a tiling of the plane.
\end{proof}

\begin{theorem}[Precise form of Theorem~\ref{thm:intro:coRE}] \label{thm:tiling-co-RE}
  Given a set $\mathcal T$ of $k$ polygons in our model,
  deciding whether they tile the plane is in co-RE.
  Also given a patch $P$ of tiles from $\mathcal T$, deciding whether $P$ can be completed to tile the plane is in co-RE.
\end{theorem}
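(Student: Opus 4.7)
The plan is to express tileability as a conjunction of co-RE conditions indexed by a computable sequence of radii, and then invoke Lemma~\ref{lem:co-RE} to conclude membership in co-RE. Concretely, I would show that $\mathcal T$ tiles the plane if and only if, for every positive integer $r$, there exists an anchored carpet that is neat within radius $<r$. The forward direction is Lemma~\ref{lem:tiles-bound}(b): any plane tiling, after translating a tile to the origin, yields a seamless anchored patch (hence a carpet) that is neat within radius $<r$. The reverse direction follows from Lemma~\ref{lem:neatextension}, once I note that neatness within radius $<r$ is monotone in $r$, so having such carpets for all positive integers produces carpets of arbitrarily large neat radius.

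Next I would invoke Lemma~\ref{lem:large-neat} to conclude that, for each fixed positive integer $r$, the existential statement ``there exists an anchored carpet neat within radius $<r$'' is a co-RE decision, uniformly computable in~$r$. Since the index set $r = 1, 2, 3, \ldots$ is trivially recursively enumerable, the tileability problem is a recursively enumerable conjunction of co-RE problems. By Lemma~\ref{lem:co-RE}, this conjunction is itself in co-RE, proving the first half of the theorem.

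For the tiling completion half, the argument is structurally identical. First, I would preprocess the given patch $P$ to be seamless by inserting an extra vertex on every tile edge at each seam crossing (as discussed before Lemma~\ref{lem:valid}); this leaves the geometric shape of the preplaced region unchanged, hence does not alter the answer. Then, using the completion branches of Lemma~\ref{lem:tiles-bound}, Lemma~\ref{lem:large-neat}, and Lemma~\ref{lem:neatextension}, I would rephrase extendability of $P$ to a plane tiling as ``for every positive integer $r$, $P$ can be extended to a carpet neat within radius $<r$'', and conclude co-RE membership by the same recursively enumerable conjunction of co-RE tests.

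There is no real obstacle left — the paper has already done all of the hard work in the supporting lemmas (bounding the relevant number of tiles via $A_{\min}$ and the tile radius~$\rho$, reducing carpet validity to equalities and inequalities among computable sums, handling seams via the incidence graph, and upgrading neat carpets back to patches via the reflection/rotation stitching argument of Lemma~\ref{lem:neat2patch}). The only subtlety worth flagging is the reduction of the uncountable quantifier ``for arbitrarily large $r$'' appearing in Lemma~\ref{lem:neatextension} to a recursively enumerable one over integer radii, which is justified by monotonicity of neatness in $r$ and the fact that a computable radius can always be upper-bounded by an integer.
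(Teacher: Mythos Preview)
Your proposal is correct and follows essentially the same approach as the paper: iterate over a computable sequence of radii, invoke Lemma~\ref{lem:large-neat} for each to get a co-RE test, combine via Lemma~\ref{lem:co-RE} as a recursively enumerable conjunction, and appeal to Lemmas~\ref{lem:tiles-bound} and~\ref{lem:neatextension} for the two directions of correctness. The only cosmetic difference is that the paper indexes by $r=k\rho$ rather than by integers, and your explicit remark on monotonicity of neatness (and the seamless preprocessing of $P$) spells out details the paper leaves implicit or handles elsewhere.
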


\begin{proof}
  For every positive integer $k$, set $r=k\rho$ and use Lemma~\ref{lem:large-neat} to determine whether there exists an anchored carpet that is neat within radius $<r$, or whether $P$ can be completed to produce such a carpet.
  This is a recursively enumerable disjunction of co-RE problems, so by Lemma~\ref{lem:co-RE} is in co-RE. 
  By Lemma~\ref{lem:neatextension}, if all of these problems output true, then the polygons tile the plane or complete $P$ to tile the plane. By Lemma~\ref{lem:large-neat}, if any of these problems outputs false, then the polygons do not tile the plane, or cannot complete $P$ to tile the plane.
\end{proof}

\section{Finding Periodic Tilings}
\label{sec:periodic}

We now consider the problem of determining whether prototiles $\mathcal T$ can tile the plane periodically. As seen in the previous section, it will be necessary to restrict our representation model slightly, to guarantee that even checking a patch is decidable. For our result, we restrict all edge lengths and angles (in degrees) to be algebraic numbers. 
Because comparison between sums of algebraic numbers is decidable \cite{SepBound}, we can revisit Lemma~\ref{lem:valid} and note that the algorithm uses only such comparisons:

\begin{lemma}\label{lem:valid-comparable}
  Given a combinatorial gluing of a possible seamless carpet, where edge lengths and angles of prototiles are algebraic numbers,
  determining whether it is valid is decidable.
\end{lemma}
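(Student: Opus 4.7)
The plan is to revisit the co-RE algorithm from the proof of Lemma~\ref{lem:valid} and observe that, under the algebraic restriction, every single numerical test it performs becomes decidable. So the whole procedure ceases to be merely semi-decidable and becomes an actual decision procedure.

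First I would catalog every numerical test that the algorithm of Lemma~\ref{lem:valid} actually makes. The combinatorial part (checking that the boundary is a single cycle, enumerating incident tiles at each vertex, walking along each shared edge and listing vertex-to-vertex subsegments) is purely combinatorial and needs no arithmetic. The only numerical checks are: (a) at each interior vertex, the sum of the incident tile angles equals $2\pi$ (equivalently $360^\circ$); (b) along each shared edge, the sum of the subsegment lengths on one side equals the sum on the other side; and (c) along each shared edge, the partial sums of subsegment lengths on the two sides interleave in the order asserted by the combinatorial gluing, which amounts to inequality ($\leq$) tests between partial sums. Crucially, because validity is a \emph{local} property (checked at vertices and along edges), the algorithm never needs to compute absolute coordinates of vertices, and so no trigonometric functions ever enter the computation.

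Next I would note that each such quantity is a sum of prototile angles (in degrees) or prototile edge lengths, and by hypothesis these are all algebraic numbers. Every test is therefore a single equality or inequality between two sums of algebraic numbers, a specific instance of comparison of algebraic numbers. By the separation bounds cited as \cite{SepBound}, such comparisons are decidable. A finite conjunction of decidable predicates is decidable, so the whole verification algorithm is decidable.

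The main (and essentially only) obstacle is the observation step: we must convince ourselves that the algorithm of Lemma~\ref{lem:valid} can be implemented using only sums of prototile angles and sums of prototile edge lengths, never requiring vertex coordinates or sines and cosines. Once that local-only viewpoint is pinned down, the rest is a direct appeal to \cite{SepBound} together with Lemma~\ref{lem:co-RE} (now strengthened by decidability of its atomic tests rather than merely co-RE membership).
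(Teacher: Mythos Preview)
Your proposal is correct and follows exactly the approach the paper takes: the paper's entire justification is the one sentence preceding the lemma, namely that the algorithm of Lemma~\ref{lem:valid} uses only comparisons between sums of (now algebraic) angles and edge lengths, which are decidable by~\cite{SepBound}. You have simply made explicit what the paper leaves implicit --- in particular the useful observation that the validity check is purely local and never requires vertex coordinates or trigonometric functions, so all atomic tests really are comparisons of sums of algebraic inputs.
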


A \defn{periodic tiling} $T$ using prototiles $\mathcal T$ is a tiling that has two linearly independent translational symmetries (say, $\vec{a}$ and $\vec{b}$) that act on the tiles of $T$. 
What this means is that applying the translation vector $\vec{a}$ (respectively $\vec{b}$) on any tile $t\in T$ produces another tile of $T$.
The symmetry group generated by $\vec{a}$ and $\vec{b}$ decomposes the set of tiles of $T$ into equivalence classes, also called \emph{orbits}, where two tiles are in the same class if there is a symmetry in the group (an integer linear combination $i \vec{a} + j \vec{b}$ for some $i,j \in \mathbb Z$) that matches one to the other. 

The tiling can then be described by a \defn{fundamental domain} \cite[p.~55]{Gruenbaum-Shephard} for the action of these symmetries. The fundamental domain is a patch $S$ containing exactly one tile from each orbit which, when fused together in one supertile, tiles the plane isohedrally using the same translations. Its boundary can be decomposed into six pieces, denoted by $A,B,C,\bar{A},\bar{B},\bar{C}$, counterclockwise, where $\bar{A}$, $\bar{B}$, and $\bar{C}$ are translations of $A$ by the action of $\vec{a}$, $B$ by the action of $\vec{b}$, and $C$ by the action of $\vec{b}-\vec{a}$, respectively \cite{G-BN1991}.

The \defn{existential theory of the reals} is the decision problem
$\exists x_1 \in \mathbb R : \cdots : \exists x_n \in \mathbb R : F(x_1,\ldots,x_n)$,
where $F$ is a polynomial-size quantifier-free formula involving (in)equalities of polynomials with integer coordinates  \cite{Basu2006}.
Complexity class $\exists \mathbb R$ \cite{ExistsR-survey}
is the class of problems that can be reduced to this decision problem,
or equivalently,
that can be solved by a polynomial-time algorithm that can guess real values nondeterministically \cite{Erickson-ER}.

\begin{lemma}
  Given prototiles $\mathcal T$ where all tile edge lengths and angles are algebraic numbers, and given a number $k$, determining whether $\mathcal T$ tiles the plane periodically with a fundamental domain of size $\leq k$ is in $\exists\mathbb R$ and thus in PSPACE.
\end{lemma}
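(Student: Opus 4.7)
The plan is to reduce the problem to an instance of the existential theory of the reals of size polynomial in $k$ and the input, then invoke the inclusion $\exists\mathbb R\subseteq\mathrm{PSPACE}$ (Canny's theorem). First I would nondeterministically guess the combinatorial scaffolding of a candidate fundamental domain $S$: a multiset of at most $k$ prototiles from $\mathcal T$, a combinatorial gluing of them into a patch (in the sense of Section~\ref{sec:co-RE}) that is topologically a disk, and a partition of the boundary cycle of $S$ into six consecutive arcs $A,B,C,\bar A,\bar B,\bar C$ together with specified bijections between the sub-edges of $A$ and those of $\bar A$, between $B$ and $\bar B$, and between $C$ and $\bar C$. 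Since this combinatorial data has size polynomial in $k$, it can be encoded by polynomially many $\{0,1\}$-valued real variables via constraints of the form $x(x-1)=0$.

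Next I would introduce continuous variables for the geometric realization: for each tile $i$, a translation $(t_{xi},t_{yi})$, rotation parameters $(c_i,s_i)$ with $c_i^2+s_i^2=1$, and (when reflections are allowed) a reflection bit; and two lattice vectors $\vec a=(a_1,a_2)$ and $\vec b=(b_1,b_2)$. Under the algebraicity assumption on edge lengths and angles, each prototile admits a canonical placement with algebraic vertex coordinates, so every vertex of every placed tile becomes a polynomial expression in the guessed variables with algebraic coefficients. I would then assemble a polynomial-size quantifier-free formula $\Phi$ asserting: (i) each tile is a rigid copy of its prototile; (ii) the combinatorial gluing is geometrically consistent, using the equality and inequality tests of Lemma~\ref{lem:valid-comparable}; (iii) the patch is non-self-overlapping, checked pairwise among the $O(k^2)$ pairs of tiles via edge non-crossing and vertex non-containment; (iv) the translational matching $\bar A=A+\vec a$, $\bar B=B+\vec b$, $\bar C=C+(\vec b-\vec a)$ holds vertex-by-vertex along each paired arc; and (v) $a_1 b_2-a_2 b_1\neq 0$.

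Correctness would follow from the classical characterization of hexagonal fundamental domains for translational tilings (as in \cite[p.~55]{Gruenbaum-Shephard} and \cite{G-BN1991}, cited in the paragraph preceding the lemma): a non-self-overlapping patch whose boundary admits such an $A,B,C,\bar A,\bar B,\bar C$ decomposition is exactly a fundamental domain for the lattice $\langle\vec a,\vec b\rangle$, so its translates tile the plane periodically without gaps or overlaps. Conversely, any periodic tiling with at most $k$ tiles per orbit furnishes such a fundamental domain and hence a satisfying assignment for $\Phi$. Thus $\Phi$ is satisfiable if and only if $\mathcal T$ admits a periodic tiling with a fundamental domain of size at most $k$, putting the problem in $\exists\mathbb R$ and therefore in PSPACE.

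The main obstacle I foresee is step (iv): the arcs $A$ and $\bar A$ are polylines with up to $O(k)$ sub-edges, so ``matching under $\vec a$'' must be unpacked into coincidence constraints between corresponding paired vertices along the two arcs; however, because the sub-edge bijection is part of the combinatorial guess, this unpacking yields only $O(k)$ polynomial equalities and preserves polynomial size. A secondary concern is making the algebraic-angle hypothesis precise enough to ensure that each prototile's canonical vertex coordinates are themselves algebraic, so that all constants appearing in $\Phi$ lie in the domain supported by $\exists\mathbb R$; once this is granted (as it is by the section's standing hypothesis), the total formula size remains polynomial and the reduction goes through.
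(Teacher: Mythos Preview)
Your approach and the paper's share the same high-level strategy---nondeterministically guess a combinatorial fundamental domain of size $\le k$ and verify it with a polynomial-size $\exists\mathbb R$ formula---but they differ in how the geometry is encoded. The paper reuses the intrinsic carpet machinery of Section~\ref{sec:co-RE}: it enumerates combinatorial gluings, guesses only a real \emph{shift along each seam} (plus the six boundary division points), and then verifies everything by comparing \emph{sums of edge lengths and angles} directly, via Lemma~\ref{lem:valid-comparable}. You instead place each tile extrinsically with its own translation/rotation variables and verify vertex-coordinate coincidences.

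Your extrinsic route runs into exactly the ``secondary concern'' you flag, and it is \emph{not} resolved by the section's standing hypothesis. The paper assumes angles (in degrees) are algebraic; this does not make canonical vertex coordinates algebraic. By Gelfond--Schneider, if $\theta$ is an algebraic irrational then $\big(e^{i\pi/180}\big)^{\theta}=e^{i\theta\pi/180}$ is transcendental, so $\cos\theta^\circ$ and $\sin\theta^\circ$ cannot both be algebraic and hence cannot appear as constants in an $\exists\mathbb R$ formula. The paper's intrinsic verification sidesteps this entirely: every check is an (in)equality among sums of the given algebraic lengths and angles, with the only genuinely real unknowns being the seam shifts and the six division points. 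Your formulation would go through under the stronger hypothesis that $\sin$ and $\cos$ of every prototile angle are algebraic (e.g., all angles rational in degrees), but as written it does not match the lemma's stated hypothesis. On the positive side, your explicit pairwise non-self-overlap check~(iii) is something the paper's terse proof leaves implicit.
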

\begin{proof}
As in the previous section, enumerate every possible combinatorial gluing of a carpet. Check whether it is seamless. If not, guess the real shift between the two parts of each seam, and add a common vertex to two tile edges that are adjacent across the seam, thus stitching all into a seamless carpet. 
For each carpet, guess the six points that divide its outside face into six pieces. Check that the sequence of edge lengths and angles match appropriately.
\end{proof}

By running this algorithm for every $k\in\mathbb N$, we obtain that periodic tiling is in RE:

\begin{corollary}
  Given prototiles $\mathcal T$ where all tile edge lengths and angles are algebraic numbers, determining whether they admit a periodic tiling of the plane is in RE.
\end{corollary}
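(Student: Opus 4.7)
The plan is to combine the previous lemma (which bounds the complexity of checking periodic tilings with fundamental domain of a given bounded size) with a dovetailing enumeration over the size parameter. The key observation is that ``$\mathcal T$ admits a periodic tiling'' is equivalent to ``there exists some positive integer $k$ such that $\mathcal T$ admits a periodic tiling with a fundamental domain of size $\leq k$.''

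First, I would set up the RE semi-algorithm as follows: for each $k = 1, 2, 3, \ldots$, invoke the decision procedure from the previous lemma on input $(\mathcal T, k)$. Because that lemma places the bounded-size version of the problem in $\exists \mathbb R \subseteq \mathrm{PSPACE}$, each individual call halts in finite time with a definite yes/no answer. If any call returns ``yes,'' the overall procedure halts and accepts; otherwise it continues indefinitely. This is the standard unbounded-search pattern that witnesses membership in RE.

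Correctness has two directions. If $\mathcal T$ does admit a periodic tiling, then by definition it has a fundamental domain consisting of one tile from each orbit under the translation group, which is necessarily a finite patch of some size $k^\ast$; on iteration $k = k^\ast$ the decidable subroutine will return ``yes,'' and the procedure halts and accepts. Conversely, if the procedure ever halts and accepts at some $k$, then the subroutine has certified a valid fundamental domain of size $\leq k$, which by the definition of fundamental domain yields an actual periodic tiling of the plane. Thus the procedure accepts exactly the ``yes'' instances, establishing RE membership.

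The only subtle point is ensuring that each individual query is genuinely decidable rather than merely semi-decidable, which is where the algebraic-number hypothesis is essential: it lets us appeal to Lemma~\ref{lem:valid-comparable} (decidability of validity of combinatorial gluings via comparison of algebraic sums) and to the decidability of the existential theory of the reals. With that in place, the dovetailing is immediate and there is no serious obstacle; the corollary follows by packaging the unbounded loop as an RE acceptor.
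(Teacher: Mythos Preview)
Your proposal is correct and follows essentially the same approach as the paper, which simply states that running the bounded-size decision procedure of the previous lemma for every $k\in\mathbb N$ yields RE membership. Your write-up is more detailed in spelling out both directions of correctness and the role of the algebraic-number hypothesis, but the underlying argument is identical.
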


Combining this membership in RE with the membership in co-RE from Theorem~\ref{thm:tiling-co-RE}, we obtain a decidable case:

\begin{corollary}
  Given prototiles $\mathcal T$ where all tile edge lengths and angles are algebraic numbers,
  and given the promise that any tiling with $\mathcal T$ is periodic,
  determining whether $\mathcal T$ tiles the plane is decidable.
\end{corollary}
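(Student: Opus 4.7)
The plan is to combine the two complementary semi-decision procedures that have already been established in the excerpt. On one side, Theorem~\ref{thm:tiling-co-RE} gives a co-RE algorithm that halts and rejects if and only if $\mathcal T$ fails to tile the plane. On the other side, the corollary immediately preceding gives an RE algorithm: enumerate $k = 1, 2, 3, \ldots$ and invoke the $\exists\mathbb R$ (hence PSPACE, hence decidable) procedure that checks whether $\mathcal T$ admits a periodic tiling with a fundamental domain of size $\leq k$; this halts and accepts if and only if $\mathcal T$ admits some periodic tiling.

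The first step is to observe that, under the promise, these two semi-algorithms together cover every input. Indeed, if $\mathcal T$ tiles the plane, then by the promise that tiling is periodic, so the RE procedure will eventually find a fundamental domain of the appropriate size and halt. Conversely, if $\mathcal T$ does not tile the plane, then by Theorem~\ref{thm:tiling-co-RE} the co-RE procedure halts. Thus exactly one of the two procedures halts on every promised input.

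The second step is the classical dovetailing: run both procedures in parallel (alternating one step at a time), and output the answer of whichever halts first. By the above, this always terminates, so the overall algorithm is a total decision procedure. Note that we rely on Lemma~\ref{lem:valid-comparable} to guarantee that, in the algebraic-number model, the verification steps inside the RE procedure are genuinely computable (not merely co-RE), and similarly the checks in Theorem~\ref{thm:tiling-co-RE} remain in co-RE.

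There is no real obstacle here beyond invoking the two prior results correctly; the only subtlety worth flagging is that the promise is essential: without it, the RE side could fail to halt on a \emph{yes}-instance that only admits aperiodic tilings, leaving the algorithm stuck. Since the statement explicitly includes the promise, dovetailing suffices and decidability follows.
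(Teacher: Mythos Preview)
Your proposal is correct and follows exactly the paper's approach: the paper's entire argument is the one sentence ``Combining this membership in RE with the membership in co-RE from Theorem~\ref{thm:tiling-co-RE}, we obtain a decidable case,'' and you have simply spelled out the standard dovetailing that makes $\mathrm{RE} \cap \mathrm{co\text{-}RE} = $ decidable explicit.
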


For example, tiling by translation with a single prototile is decidable
because such tiling must be periodic \cite{G-BN1991}.

%

\section*{Acknowledgements}
This research was initiated at the Second Adriatic Workshop on Graphs and Probability, and continued at the Eleventh Annual Workshop on Geometry and Graphs. The authors thank all participants of both workshops for providing a stimulating research environment.

\bibliography{tiling}
\bibliographystyle{alpha}

\end{document}

\xxx{Maybe remove the following if not needed?}
We can readily apply this lemma to show how to perform some basic geometric operation in co-RE. 
\begin{lemma} \label{lem:intersection}
  The following problems are in co-RE, assuming all input values or point coordinates are computable:
  \begin{enumerate}
    \item Given a value $a$, determining whether $a=0$, whether $a\geq 0$, or whether $a\leq 0$.
    \item Given 3 points $a,b,c$ in the plane, determining whether the signed area of the triangle $abc$ is $\geq 0$.
    \item Given a sequence of $n$ points in the plane, determining whether they form a weakly convex polygon in counterclockwise order, or determining that they do not form a strongly convex polygon in counterclockwise position.
    \item Given two line segments in the plane, determining whether they intersect, or determining that they do not properly intersect\footnote{Two segments \defn{properly intersect} if the supporting line of each segment strictly separates the endpoints of the other segment.}.
  \end{enumerate}
\end{lemma}
\begin{proof}
  \begin{enumerate}
  \item Use the function $f_a(n)$ computing $a$ for increasing values of $n$ stop and provide an answer as soon as the returned value $f_a(n) > 1$ or $f_a(n) < -1$. The algorithm will stop unless $a=0$. This is a co-RE decision.

  \item The signed area of a triangle is 
  $$ \Delta(a,b,c) = \frac{1}{2} \begin{vmatrix} 
    a_x & a_y & 1 \\ 
    b_x & b_y & 1 \\ 
    c_x & c_y & 1 
    \end{vmatrix}
  $$
  which is computable by Theorem~\ref{thm:computable}. Then test whether $\Delta(a,b,c)\geq 0$, a co-RE decision.
  \item For testing weak convexity, given points $a_0,\ldots,a_{n-1}$, check that $\Delta(a_i a_{i+1} a_{i+2})\geq 0$ for $i=1,\ldots,n$ (where indices are taken $\mod n$). Since this is a conjunction of $n$ co-RE decisions, by Lemma~\ref{lem:co-RE}, this is a co-RE decision. For testing strong convexity, check that $\Delta(a_i a_{i+1} a_{i+2})> 0$ for $i=1,\ldots,n$ (where indices are taken $\mod n$). Negating this is a conjunction of co-RE decisions, a co-RE decision.
  \item Segment $ab$ intersects segment $cd$ if and only if $acbd$ or $adbc$ is a weakly convex polygon in counterclockwise order. This is a disjunction of two co-RE decisions, and is thus a co-RE decision. To check that the segments do not properly intersect, check that $acbd$ and $adbc$ are not strongly convex polygons in counterclockwise order. This is a conjunction of two co-RE decisions, and is thus a co-RE decision.
  \end{enumerate}
\end{proof}

call a set of numbers \defn{comparable} if the inequality between finite sums of numbers in the set is decidable. \xxx[Stefan]{Erik: did you have a reference for decidability of comparisons over algebraic numbers?}
For instance, rational numbers and algebraic numbers are comparable. 
Note also that any subfield of the reals that is comparable is also computable as a simple binary search can be used to approximate the number to any precision.